\declaretheorem[numberwithin = section]{theorem}
\declaretheorem[sibling = theorem]{lemma}
\declaretheorem[sibling = theorem, style = definition]{definition}
\declaretheorem[sibling = theorem]{proposition}
\declaretheorem[numbered = no, style = remark]{remark}
\declaretheorem[sibling = theorem]{corollary}
\begin{document}


\author{Siddharth Vadnerkar \thanks{email: \href{svadnerkar@ucdavis.edu}{svadnerkar@ucdavis.edu}, site: \href{https://sites.google.com/view/siddharthvadnerkar}{https://sites.google.com/view/siddharthvadnerkar}}}
\affil{Department of Physics, University of California, Davis}

\title{Superselection sectors in the 3d Toric Code}




\date{\today}

\maketitle
\begin{abstract} 
We rigorously define superselection sectors in the 3d (spatial dimensions) Toric Code Model on the infinite lattice $\mathbb{Z}^3$. We begin by constructing automorphisms that correspond to infinite flux strings, a phenomenon that's only possible in open manifolds. We then classify all ground state superselection sectors containing infinite flux strings, and find a rich structure that depends on the geometry and number of strings in the configuration. In particular, for a single infinite flux string configuration to be a ground state, it must be monotonic. For configurations containing multiple infinite flux strings, we define "infinity directions" and use that to establish a necessary and sufficient condition for a state to be in a ground state superselection sector. Notably, we also find that if a state contains more than 3 infinite flux strings, then it is not in a ground state superselection sector.

\end{abstract}

\tableofcontents

\nocite{*}

\section{Introduction}
In recent years, the subject of topological phases has exploded in popularity. A number of works have been written exploring various facets of these phases. Topological phases exhibit robust ground state degeneracy, one that is independent of the microscopics of the system. They also have topological excitations like anyons and flux strings. These excitations can be used for topological quantum computation \cite{Nayak2008-ef}. Various systems like the Fractional Quantum Hall state and the spin-liquid states exhibit the highlighting features of topological phases and have been discovered in nature \cite{Nakamura2020-vb, Stormer1999-bk}.\\

In particular, Kitaev's Toric Code \cite{Kitaev2003-qr} is an interesting 2d toy model that exhibits various features of topological phases. This model has been instrumental in understanding the nature of topological phases, as one can explicitly compute many quantities in it such as explicit string operators that detect and move topological charges, or the ground state degeneracy on a discrete manifold.\\

Kitaev originally studied this model on a discrete torus, obtaining a ground state degeneracy of 4. He then extended this analysis to find the ground state degeneracy on (the discretization of) a closed manifold of arbitrary genus. It was shown later that the ground state superselection sectors also exist when this model is placed on an open manifold such as a discretization $\mathbb{Z}^2$ of the 2d plane $\mathbb{R}^2$ \cite{Cha2020-rz}.\\

Several generalisations of this model exist. In particular, the 3d Toric Code is a model in 3+1 spacetime dimensions. It is of theoretical interest with respect to error-correcting codes \cite{Kulkarni2019-gw}. And a variety of interesting topological features of this model have been demonstrated \cite{Kong2020-an, Castelnovo2008-yg}. The ground state degeneracy of this model has already been shown on closed 3d manifolds with an arbitrary genus (see for example \cite{Von_Keyserlingk2013-zv}). In this work, we analyse the ground state superselection sector structure of the 3d Toric Code on $\mathbb{Z}^3$, which is a discretization of the open manifold $\mathbb{R}^3$. We will now omit the usage of "the discretization of" and assume it implicitly, as the model is only defined on discrete lattices. Understanding this structure is important as it results in consistency conditions for when fusion and braiding of string-like objects can occur in non-compact manifolds.\\

Our reason for studying the 3d Toric Code on an open manifold is that it exhibits certain structures that are not found in the closed manifolds: infinite flux strings. In a closed manifold, there are 2 types of excitations in the 3d Toric Code - flux strings and charges. While charges are topological particle-like excitations, flux strings are topological string-like excitations. The flux strings must be closed and finite in energy. However in open manifolds they are not required to be closed. We call such excitations infinite flux strings, and they are "infinite energy" excitations. Due to this, they cannot be physically obtained from the ground state using local operators. However they are still a physically relevant topic of study as they may be interpreted as boundary conditions of the model.\\

In studying these excitations, we find a rich structure. We find that some configurations of infinite string excitations are stable in the sense that their energy cannot be decreased arbitrarily, while some others are unstable. The configurations that are stable belong to ground state superselection sectors, which we precisely define in section \ref{sec:recap}. The aim of this paper will be to classify all ground state superselection sectors of the 3d Toric Code.\\

The layout of this paper is as follows: In section \ref{sec:recap} we briefly recall the 3d Toric Code on $\mathbb{Z}^3$ and construct automorphisms corresponding to infinite flux strings. We then present a summary of the main results of this paper. In section \ref{sec:pure-charge} we sketch the construction of charged sectors in the 3d Toric Code. In sections \ref{sec:1str},\ref{sec:2str},\ref{sec:3str} we first construct infinite flux string sectors. Then we tackle the question of the necessary and sufficient conditions for a ground state configuration. We then introduce "Infinity directions" and use that as a basis for proving these conditions. Finally we classify all possible ground state sectors for any configuration with arbitrary number of flux strings.

\section{3d Toric Code model, superselection sectors, main results}
\label{sec:recap}


\subsection{3d Toric Code model}
We begin by describing the 3d Toric Code in the $C^*$ algebraic framework \cite{Alicki2007-pk}. Consider the lattice $\Gamma = \mathbb{Z}^3$ which is an oriented cell complex. Denote the set of vertices, oriented edges, oriented faces in this cell complex as $\mathcal{V}(\Gamma), \mathcal{E}(\Gamma),\mathcal{F}(\Gamma)$ respectively. We fix an orientation of the edges as follows: all edges are pointing in the positive $x,y,z$ direction. Let $\partial$ denote the standard boundary map of this cell complex. We place a qubit on each edge $e \in \mathcal{E}(\Gamma)$, so there is an edge Hilbert space $\hilb_e = \mathbb{C}^2$ with observables $\mathcal{B}(\hilb_e) = M_2(\mathbb{C})$. Let $\Lambda_f$ denote the set of all finite subsets of $\mathcal{E}(\Gamma)$. Then $\hilb_\Lambda = \otimes_{e \in \Lambda} \hilb_e$ for $\Lambda \in \Lambda_f$. We can define an algebra on this space as $\cstar{\Lambda} = \mathcal{B}(\hilb_\Lambda)$. \\

Let $\Lambda_1, \Lambda_2 \in \Lambda_f$ and $\cstar{\Lambda_1}, \cstar{\Lambda_2}$ be algebras such that $\Lambda_1 \subset \Lambda_2$. $\iota: \cstar{\Lambda_1} \xhookrightarrow{} \cstar{\Lambda_2}$ is the embedding such that $$\iota (\cstar{\Lambda_1}) = \cdots \mathds{1} \otimes \mathds{1} \otimes \cstar{\Lambda_1} \otimes \mathds{1} \otimes \mathds{1} \cdots $$ where $\mathds{1}$ is on all edges $e\in \Lambda_2 \setminus \Lambda_1$. We can now define $$\cstar{loc} := \bigcup_{\Lambda \in \Lambda_f} \cstar{\Lambda} \quad \quad \cstar{} := \overline{\cstar{loc}}$$ where $\cstar{}$ is the $C^*$  algebra for the 3d Toric Code, the algebra of quasi-local operators in the 3D Toric Code.\\

The Hamiltonian for the 3D Toric Code inside a finite region $\Lambda$ is given by:
\begin{align}
\label{eqn:Hamiltonian}
    &H_\Lambda = \sum_{v \in {\Lambda}} (\mathds{1} - A_v) + \sum_{f \in \Lambda} (\mathds{1} - B_f)\\
    &A_v := \prod_{ v \in e} \sigma_e^x \qquad B_f := \prod_{e \in f} \sigma_e^z
\end{align}
The operator $A_v$ acts on the 6 edges that surround $v$. There are 3 different $B_f$ operators corresponding to the orientation of the face $f$. Each $B_f$ operators acts on the 4 edges that make up the face $f$. $A_v$ and $B_f$ operators commute. They both square to $\mathds{1}$, thus having eigenvalues $\pm 1$.\\

Since the interactions $A_v, B_f$ are translation invariant, there exists an action $\alpha_t$ of $\mathbb{R}$ on $\cstar{}$ describing the dynamics of the system, as well as a derivation $\delta$ that is the generator of the dynamics \cite{Bratteli2012-gd}. \\

Let us recall some properties of this model.  Refer to \cite{Alicki2007-pk, Cha2018-ke, Bachmann2016-qx, Bachmann2023-qv} for a full treatment in the case of the 2d Toric Code, which proceeds similarly. The Hamiltonian \ref{eqn:Hamiltonian} is not convergent in norm, but still generates dynamics through a derivation, $\delta({O}) := \lim_{\Lambda \rightarrow \mathcal{E}(\Gamma)} i[H_\Lambda, {O}]$ for all ${O} \in \cstar{loc}$. $\delta$ can be extended to a densely defined unbounded *-derivation on $\cstar{}$.\\

In the infinite lattice setting, states are described by positive linear functionals $\omega: \cstar{} \rightarrow \mathbb{C}$. Given $(\omega, \cstar{})$, we can construct a unique GNS triple $(\pi, \hilb, \Omega)$ up to unitary equivalence, such that $\omega({O}) = \langle \Omega, \pi ({O}) \Omega\rangle$. Here $O \in \cstar{}$, $\pi$ is a representation of $\cstar{}$ on a Hilbert space $\hilb$, and $\Omega \in \hilb$ is a cyclic vector.\\

\begin{definition}[\textit{ground state}]
\label{def:GS}
    A state $\omega$ is a ground state if for all $O \in \cstar{}$ we have $$ -i \omega(O^\dagger\delta(O)) \ge 0$$ The energy $E$ of a finite region $\Lambda$ is given by $E_\Lambda := \omega(H_\Lambda)$.
\end{definition}

\begin{remark}
This definition of a ground state is equivalent to saying $\omega(O^\dagger H_\Lambda O)/\omega(O^\dagger O)\ge \omega(H_\Lambda)$ for all finite regions $\Lambda$. This implies if the energy of a state $\omega$ inside any finite region $\Lambda$ is the lowest possible energy, then $\omega$ is a ground state.
\end{remark}

Choose a state $\omega$ such that $\omega(A_v) = \omega(B_f) = 1$ for all $v,f$. Call its GNS triple $(\pi_\omega, \hilb_\omega, \Omega_\omega)$. $\Omega_\omega$ then satisfies $\pi_\omega(A_v)\Omega_\omega = \pi_\omega(B_f) \Omega_\omega = \Omega_\omega$. We have the following fact for $\omega$, which will be proved in the appendix \ref{app:purity}:

\begin{restatable}{theorem}{puregs}
    $\omega$ is the unique pure frustration free translation invariant ground state of the 3dTC. 
    \label{thm:puregs}
\end{restatable}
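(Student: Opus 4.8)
The plan is to prove the three assertions—ground state, uniqueness among frustration-free states, purity—mostly independently, adapting the standard 2d Toric Code arguments to the 3d setting.

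\textbf{Ground state and frustration-freeness.} First I would observe that $H_\Lambda$ is a sum of commuting projectors $\tfrac12(\mathds{1}-A_v)$ and $\tfrac12(\mathds{1}-B_f)$ (up to the harmless factor of $2$), each with spectrum $\{0,1\}$, so $H_\Lambda \ge 0$ with minimal eigenvalue $0$ attained exactly on the common $+1$-eigenspace of all $A_v,B_f$ in $\Lambda$. Since $\omega(A_v)=\omega(B_f)=1$ and these are unitaries with $\omega$ a state, a Cauchy--Schwarz argument gives $\pi_\omega(A_v)\Omega_\omega=\Omega_\omega$ and likewise for $B_f$; hence $\omega(H_\Lambda)=0$ for every finite $\Lambda$, which is the global minimum. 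By the remark following Definition~\ref{def:GS} this immediately yields that $\omega$ is a ground state, and frustration-freeness is the statement that $\omega$ annihilates each local term, which we have just checked. Translation invariance is inherited from the defining property $\omega(A_v)=\omega(B_f)=1$ for \emph{all} $v,f$, since the translated state $\omega\circ\tau$ satisfies the same constraints and (by the uniqueness part) must equal $\omega$.

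\textbf{Uniqueness and purity.} The core of the theorem is that the conditions $\omega(A_v)=\omega(B_f)=1$ for all $v,f$ pin down a \emph{unique} state, which is then automatically pure. I would argue as follows. The abelian group $\mathcal{G}$ generated by all $A_v$ and $B_f$ inside $\cstar{}$ consists of (signed) products of distinct generators; a standard computation shows that a nontrivial such product is never a multiple of $\mathds{1}$ (this uses that $\mathbb{Z}^3$ has no nontrivial ``boundaries of the whole space'': there is no finite set of vertices whose $A_v$-product is $\mathds{1}$, nor a finite set of faces whose $B_f$-product is $\mathds{1}$ aside from products over boundaries of finite $3$-chains, which still act nontrivially on edges in the interior—care needed here). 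The key structural fact is that $\cstar{}$ is generated as a $C^*$-algebra by $\mathcal{G}$ together with the ``string/membrane'' operators dual to it, and any state with $\omega(g)=1$ for all $g\in\mathcal{G}$ forces, via Cauchy--Schwarz ($|\omega(gO)|\le \omega(O^\dagger O)^{1/2}$ and $\omega((g-\mathds{1})^\dagger(g-\mathds{1}))=0$), the value $\omega(O)=\omega(gO)$ for all $O$; averaging over the group then shows $\omega(O)$ is determined by the conditional expectation onto the commutant of $\mathcal{G}$, which turns out to be $\mathbb{C}\mathds{1}$ on the relevant generating set. This determines $\omega$ uniquely on a dense subalgebra, hence everywhere; uniqueness of a state satisfying a family of linear constraints that are extreme points forces purity (equivalently, the GNS representation is irreducible because the commutant of $\pi_\omega(\cstar{})$ is generated by the images of $\mathcal{G}$, all of which act as $\mathds{1}$).

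\textbf{Main obstacle.} The delicate step is the combinatorial/homological claim that no nontrivial product of generators equals a scalar, \emph{and} the identification of the commutant of $\pi_\omega(\cstar{})''$—i.e., showing there are no extra invariants hiding in the $3$-dimensional cell structure. In 3d the $B_f$ operators satisfy the local relations $\prod_{f\in\partial c}B_f=\mathds{1}$ around each unit cube $c$, so the ``magnetic'' sector is governed by $2$-cycles rather than just $1$-cycles as in 2d; I expect the cleanest route is to pass to the dual lattice, reinterpret $B_f$ as ``electric'' vertex terms of a dual Toric Code and $A_v$ as plaquette terms, and then invoke the known $2$d-style argument on each, carefully tracking that $H^1(\mathbb{Z}^3)=H^2(\mathbb{Z}^3)=0$ so that there are no global flux or charge invariants on the infinite lattice. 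The remaining details—Cauchy--Schwarz, GNS, the $C^*$-completion arguments—are routine and parallel to \cite{Alicki2007-pk, Bachmann2016-qx}, so I would cite those and only spell out the homological input, which is genuinely $3$d-specific.
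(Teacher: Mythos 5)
Your route is sound in outline but genuinely different from the paper's. The paper proves the theorem constructively: on each finite box $\Pi_n$ it builds an explicit ``surface-net'' basis for the subspace $\mathcal{W}_n$ stabilized by all $B_f$, organized by boundary conditions $b$ on $\partial\Pi_n^F$; it then shows that projecting with $A^n$ collapses each boundary sector to a single uniform superposition $\eta_n^b$, that $\eta_n^b$ restricted to a smaller box is independent of $b$, and that the restriction of any frustration-free state to $\Pi_n$ must therefore coincide with $\eta^0_m|_n$ (via the same convex-decomposition trick you invoke for purity); the weak-$*$ limit then delivers existence, uniqueness, purity and translation invariance in one package. You instead work at the level of the stabilizer group $\mathcal{G}$ and its centralizer in the local Pauli group: $\omega$ vanishes on any Pauli string anticommuting with some $A_v$ or $B_f$, equals $1$ on products of stabilizers, and the homological facts you identify (every finitely supported $1$-cycle bounds a finite surface, every finitely supported dual $2$-cycle bounds a finite $3$-chain) ensure these two cases exhaust the finitely supported Pauli strings commuting with $\mathcal{G}$; this pins $\omega$ down on the dense subalgebra $\cstar{loc}$, and purity follows from the extreme-point argument. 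This is a legitimate and arguably shorter path to uniqueness and purity, and your handling of translation invariance (the translated state satisfies the same constraints, hence equals $\omega$ by uniqueness) is cleaner than the paper's derivation-based argument. Two points need repair, however. First, your parenthetical claim is wrong as stated: the product of the six $B_f$ over the boundary of a unit cube is exactly $\mathds{1}$ (every edge of the cube lies in precisely two of its faces), so nontrivial products of generators \emph{do} equal $\mathds{1}$ and $\mathcal{G}$ is not freely generated. This is harmless --- what uniqueness actually requires is that no product of generators equals $-\mathds{1}$ (true, since the $X$- and $Z$-parts would each have to be trivial separately, and both then contribute no sign) and that the centralizer of $\mathcal{G}$ among local Paulis is $\mathcal{G}$ itself up to phase, which is exactly the homological input you flag --- but the claim as written should be deleted rather than hedged. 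Second, your sketch presupposes that a state with $\omega(A_v)=\omega(B_f)=1$ for all $v,f$ exists; the paper's finite-volume construction is precisely what supplies existence, so on your route you still owe an existence argument or a citation for it.
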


\subsection{Constructing excited states}
\label{sec:constexcite}
Let $\partial_0 e_i$ ($\partial_1 e_i$) denote the start (end) vertex of $e_i$, and let the boundary map $\partial e_i := \partial_1 e_i - \partial_0 e_i$.

\begin{definition}[\textit{Finite path}]
A finite set $\gamma := \{e_i\}_{i=0}^{l-1} \subset \mathcal{E}(\Gamma)$ is a finite path on the lattice if it satisfies $\partial_1 e_i = \partial_0 e_{i+1}$ for $0\le i < l$ and does not self-intersect (there does not exist $\gamma' \subset \gamma$ such that $\sum_{e' \in \gamma'} \partial e' = 0$). We call $|\gamma|$ the length of the finite path. $\gamma$ is a finite \textit{open} path if $\gamma$ is a finite path and satisfies $\sum_{i=0}^{l-1} \partial e_i = \partial_1 e_{l-1} - \partial_0 e_0 \neq 0$. We denote the start of the path as $\partial_1 \gamma = \partial_1 e_{l-1}$ and end of the path as $\partial_0 \gamma = \partial_0 e_0$. Together we refer to the start and end of $\gamma$ collectively as $\partial \gamma$. $\gamma_c$ is a finite \textit{closed} path if $\gamma_c$ is a finite path and satisfies $\sum_{i=0}^{d-1} \partial e_i= 0$ and thus has no start or end. A finite path $\dg$ is called \textit{trivial} if $\gamma = \emptyset$.
\end{definition}

Any finite self-intersecting path can be decomposed into an open path and finitely many closed paths. So we consider only paths that don't self intersect.\\

One can build charged states by considering the charge operators $$F_\gamma = \prod_{e\in \gamma} \sigma^z_e\in \cstar{}$$ where $\gamma$ is a finite open path on the lattice. Consider an inner automorphism $\alpha_\gamma (O):= F_\gamma O F_\gamma$. The representation $\pi_\omega \circ \alpha_\gamma$ then gives an excited state $\omega_\gamma := \langle \Omega_\omega, \pi_\omega \circ \alpha_\gamma (\cdot) \Omega_\omega\rangle $ in the same GNS Hilbert space. $A_v$ commutes with $F_\gamma$ on all $v \notin {\partial \gamma}$. If $v \in {\partial \gamma}$, we have $A_v F_\gamma = - F_\gamma A_v$ implying $\omega_\gamma(A_v) = -1$. So the state has 2 excitations, one at each endpoint of $\gamma$. These excitations are called charges.\\

One can similarly introduce another kind of excitation called a flux string. We first define $\dG = \mathbb{Z}^3$ as the dual lattice to $\Gamma$ (indeed it is a cell complex dual to $\Gamma$). Denote the set of vertices, edges, faces in $\dG$ as $\mathcal{V}(\dG), \mathcal{E}(\dG), \mathcal{F}(\dG)$. Of particular relevance is the fact that each edge $e \in \mathcal{E}(\Gamma)$ has a unique dual face $f \in \mathcal{F}(\dG)$ and similarly each face $f\in \mathcal{F}(\Gamma)$ has a unique dual edge $e \in \mathcal{E}(\dG)$. Flux loops can be defined as dual-paths on the lattice, or equivalently as paths on the dual lattice. We will adopt the latter nomenclature. To avoid confusion regarding the lattice a path or surface belongs to, we will use an overline ($\overline{\cdot}$) when talking about paths and surfaces on a dual lattice. \\


Let the boundary map on $f \in \mathcal{F}(\Gamma)$ be given by $\partial f = \sum_{e \in f} e$.

\begin{definition}[\textit{Surface}]
     A finite set $S = \{f_i\} \subset \mathcal{F}(\Gamma)$ is a finite surface if it satisfies $\sum_i \partial f_i = \sum_{e \in \dg_c} e$ for some finite closed non self-intersecting path $\gamma_c$, and does not self-intersect (there does not exist $S' \subset S$ such that $\sum_{f' \in S'} \partial f' = 0$). $\gamma_c$ is called the boundary of surface $S$ and denoted as $\partial S$. A dual surface is similarly defined on $\dG$. A surface $S$ is called an \textit{open} surface if its boundary $\dg_c$ is a non-trivial path. It is called a \textit{closed} surface if its boundary $\dg_c$ is a trivial path.
\end{definition}

Any self-intersecting surface can be decomposed into a open surface and finitely many closed surfaces. So we consider only surfaces that don't self-intersect. Similarly for surfaces with a self-intersecting boundary.\\

Consider the flux string operator $$F_\dS = \prod_{e \perp \dS} \sigma^x_e \in \cstar{}$$ where by $e \perp \dS$ we mean $e $ is dual to a given $\overline{f} \in \dS$. Analogous to the case of charged excitations, consider an inner automorphism $\alpha_\dg (O) := F_\dS (O) F_\dS$  such that $\dg = \partial \dS$. $\pi_\omega \circ \alpha_\dg$ then gives us then gives us an excited state $\omega_{\dg}:= \langle \Omega_\omega, \pi_\omega \circ \alpha_\dg (\cdot) \Omega_\omega\rangle$. We have $B_f F_\dS = - F_\dS B_f$ if $f$ is dual to some $\duale \in \dg$. They commute otherwise. This implies for such $f$, $\omega_\dg(B_f) = -1$. So $F_\dS$ produce excitations along $\dg$. So the energy of the excited state is proportional to the size of the boundary $\dg$.

\subsection{Superselection sectors}
\begin{definition}[\textit{Equivalence}]
    Given two representations $\pi_1, \pi_2$ of $\cstar{}$ on $\hilb_1, \hilb_2$ respectively, $\pi_1, \pi_2$ are equivalent if there exists a bounded linear unitary map $U:\hilb_1 \rightarrow \hilb_2$ such that $\pi_1 = U \pi_2 U^\dagger$. We denote [$\pi$] as the equivalence class of $\pi$. We say the states $\omega_1, \omega_2$ are equivalent if their corresponding representations $\pi_1, \pi_2$ are equivalent.
\end{definition}

In general, there are many equivalence classes of representations. A lot of such representations are physically uninteresting due to a variety of reasons (for example, the energy may be unbounded \cite{Naaijkens2010-aq}). To restrict to a physically interesting class of representations, we additionally employ a \textit{superselection criterion}. This criterion tells us which representations we should select.\\

Doplicher-Haag-Roberts (DHR) analysis in algebraic quantum field theory shows that with a physically motivated superselection criterion one can recover all the physically relevant properties like braiding and fusion of charges \cite{Doplicher1971-jd, Doplicher1974-hb}. A similar analysis has been done for the 2d Quantum Double models \cite{Fiedler2015-na, Naaijkens2010-aq}.\\

We choose the following superselection criterion for selecting representations. Consider a continuous manifold $\mathbb{R}^3$ and a differentiable curve $L$ in it. Let plane $P$ be the plane perpendicular to the tangent vector of $L$ at point $p$. Let $\vec{f}(p)$ be a vector function that gives a "framing" vector $\vec{f}(p)$ lying inside plane $P$ for every point $p \in L$ and varies smoothly from a point $p \in L$ to any other point $q \in L$. Figure \ref{fig:wedge} visualises this construction.\\

We adopt the vector notation for points on the plane $P$. Let the point $\vec{p}$ be the origin of plane $P$. We can then define $\Delta_{p, \vec{f}(p), \theta}$ as an infinite triangular section of the plane $P$, with the starting vertex $\vec{p} \in L$ and consisting of all points $\vec{k} \in P$ such that $0 \le \frac{(\vec{k} - \vec{p}) \cdot \vec{f}(p)} {|(\vec{k} - \vec{p}) \cdot \vec{f}(p)|} \le \cos(\theta/2)$. We impose $\theta < \pi$ to ensure the section is triangular (refer to \ref{fig:wedge}). We call an infinite wedge as $W_{L, \vec{f}(L), \theta} = \mathcal{E}(\Gamma) \bigcap (\bigcup_{p \in L} \Delta_{p, \vec{f}(p), \theta})$, i.e, the set of all edges $e \in \mathcal{E}(\Gamma)$ that lie inside $\bigcup_{p \in L} \Delta_{p, \vec{f}(p), \theta}$.\\

Now let $W = \bigcup_i W_i$ be the union of a finite number of non-intersecting wedges $W_i$ defined as above (keeping the variables implicit), and let $W^c$ be the complementary region to $W$. Let an irreducible representation $\pi'$ be such that there exists $W$ with  $$[\pi_\omega \restriction \cstar{W^c}] = [\pi' \restriction \cstar{W^c}]$$ Then we select all representations $\pi$ which satisfy $[\pi] = [\pi']$. Here $\pi_\omega$ is the irreducible representation of the unique frustration-free ground state. \\

\begin{figure}
    \centering
    \includegraphics[width = 0.4 \textwidth]{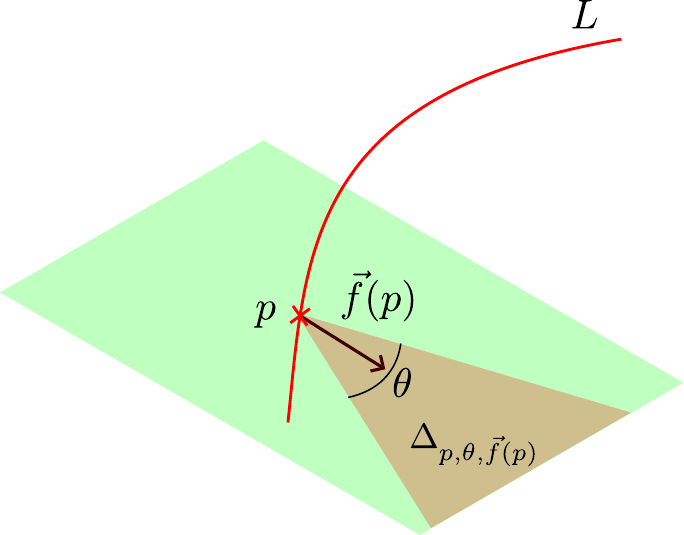}
    \caption{Defining $\Delta_{p,\vec{f}(p), \theta}$}
    \label{fig:wedge}
\end{figure}


\begin{remark}
    The superselection criterion is chosen to allow for representations with finitely many infinite flux strings (to be defined below). The criterion also allows for representations with finitely many charges (to be defined below).
\end{remark}

We call the GNS Hilbert spaces of inequivalent representations $\pi$ satisfying the superselection criterion as superselection sectors. We will refer to superselection sectors simply as sectors. We can construct sectors with different charges by considering the automorphism\begin{align}
    \alpha_v ({O}) := \lim_{n\rightarrow \infty} F_{\gamma_{v;n}} {O} F_{\gamma_{v;n}} \qquad {O} \in \cstar{}
\end{align}
where $\gamma_{v;n}$ is a path starting at $v \in \mathcal{V}(\Gamma)$ and stretching straight down to $v - n \hat{z} \in \mathcal{V}(\Gamma)$. The representation $\pi_v := \pi_\omega \circ \alpha_v$ defines a representation of $\cstar{}$ on to a new sector we call $\hilb_\epsilon$. Call the corresponding state $\omega_v$. It is a ground state, although of the sector $\hilb_\epsilon$. A general charged state having charges at $v_1,\cdots v_N$ would be given by $\omega_{v_1,\cdots, v_N} := \langle \Omega_\omega, \pi_{v_1, \cdots , v_N} (\cdot) \Omega_\omega \rangle$, where $\pi_{v_1, \cdots , v_N} := \pi_\omega \circ \alpha_{v_1} \circ \cdots \circ \alpha_{v_N}$.\\

We can also construct an infinite flux string 
 state through a similar construction. Consider an automorphism -
\begin{align}
    \alpha_\dg ({O}):= \lim_{n\rightarrow \infty} F_{\dS_{\dg_n}} {O} F_{\dS_{\dg_n}} \qquad {O} \in \cstar{}
\end{align}
where $\dS_{\dg_n}$ is a surface such that a finite path $\dg_n \subset \partial \dS_{\dg_n}$ and $\dg = \lim_{n\rightarrow \infty} {\dg_n}$ is an infinite path in $\overline{\Gamma}$ (defined below). $\pi_\dg := \pi_\omega \circ \alpha_\dg$ defines a representation of $\cstar{}$ to a new sector $\hilb_{\dg}$ with $\dg$ as an infinite flux string. Call the corresponding state $\omega_\dg$. Again, composing multiple such automorphisms gives us a state $\omega_{\dg_1,\cdots, \dg_ N}:= \langle \Omega_\omega,\pi_{\dg_1,\cdots, \dg_N}(\cdot) \Omega_\omega \rangle$ where $\pi_{\dg_1,\cdots, \dg_N} := \pi_\omega \circ \alpha_{\dg_1} \circ \cdots \circ \alpha_{\dg_N}$.\\

\begin{restatable}{theorem}{purity}
    All ground states of the form $\omega_{v_1, \cdots, v_n; \dg_1,\cdots ,\dg_m}$ (including $\omega$, the unique translation invariant ground state) are pure.
    \label{thm:purity}
\end{restatable}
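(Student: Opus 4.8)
The plan is to reduce the assertion to two facts: (a) each elementary map $\alpha_{v_i}$ and $\alpha_{\dg_j}$ used to build these states is a genuine $*$-automorphism of the quasi-local algebra $\cstar{}$, and (b) the reference representation $\pi_\omega$ is irreducible. Fact (b) is free: by Theorem~\ref{thm:puregs} the state $\omega$ is pure, so $\pi_\omega$ is irreducible. Granting (a), the composition $\alpha := \alpha_{v_1}\circ\cdots\circ\alpha_{v_n}\circ\alpha_{\dg_1}\circ\cdots\circ\alpha_{\dg_m}$ is again a $*$-automorphism, so $\pi_{v_1,\dots;\dg_1,\dots}=\pi_\omega\circ\alpha$ has commutant $(\pi_\omega(\alpha(\cstar{})))'=\pi_\omega(\cstar{})'=\mathbb{C}\mathds{1}$ and is irreducible; since any vector state of an irreducible representation is pure (its GNS representation is unitarily equivalent to the given one, restricted to the cyclic subspace generated by the vector, which is everything by irreducibility), the state $\omega_{v_1,\dots;\dg_1,\dots}=\langle\Omega_\omega,\pi_\omega\circ\alpha(\cdot)\,\Omega_\omega\rangle$ is pure, and the case of $\omega$ itself is Theorem~\ref{thm:puregs}. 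I note that this argument in fact yields purity of \emph{all} such states, not merely the ground states among them.

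The real work is then step (a). For a charge automorphism $\alpha_v(O)=\lim_n F_{\gamma_{v;n}}OF_{\gamma_{v;n}}$ and $O\in\cstar{loc}$ supported on a finite region $\Lambda$, I would factor $F_{\gamma_{v;n}}$ into its part on $\Lambda$ and its part outside $\Lambda$; the outside factor is a product of $\sigma^z_e$'s that commutes both with $O$ (disjoint supports) and with the inside factor, so conjugation by $F_{\gamma_{v;n}}$ reduces to conjugation by the single \emph{finite} operator $\prod_{e\in\gamma_{v;n}\cap\Lambda}\sigma^z_e$, which is eventually constant in $n$. Hence the limit exists and equals a concrete local operator; on $\cstar{loc}$ the map $\alpha_v$ is therefore a unital $*$-homomorphism, norm-preserving as a pointwise limit of the unitary conjugations $O\mapsto F_{\gamma_{v;n}}OF_{\gamma_{v;n}}$, so it extends to an isometric unital $*$-endomorphism of $\cstar{}$; since $F_{\gamma_{v;n}}^2=\mathds{1}$, the same prescription provides a two-sided inverse and $\alpha_v$ is a $*$-automorphism. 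The flux-string map $\alpha_\dg(O)=\lim_n F_{\dS_{\dg_n}}OF_{\dS_{\dg_n}}$ is treated the same way, now with products of $\sigma^x_e$ for $e\perp\dS_{\dg_n}$, \emph{provided} that for each fixed finite $\Lambda$ the set $\{e\in\Lambda : e\perp\dS_{\dg_n}\}$ is eventually constant in $n$; I would secure this by choosing the truncating surfaces $\dS_{\dg_n}$ coherently, so that near any fixed region they all coincide with one fixed half-infinite surface hanging off $\dg$ (the higher-dimensional analogue of the choice of connecting paths in the 2d toric code). One can also verify, using that the $F$-operator of a closed dual surface is a product of $A_v$'s fixing $\Omega_\omega$, that the resulting state is independent of the admissible surface choice, though purity does not require this.

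The main obstacle is thus not the representation-theoretic packaging, which is soft, but the convergence of the flux-string conjugation on all of $\cstar{loc}$: one must exhibit the coherent family of truncating surfaces $\dS_{\dg_n}$ and check it is compatible with the notion of an infinite flux string $\dg$ used elsewhere in the paper. A minor additional point is that, since each $F_\gamma$ (a product of $\sigma^z$) and each $F_{\dS}$ (a product of $\sigma^x$) commute up to a scalar $\pm1$, the composite $\alpha$ is well defined regardless of the order in which the $\alpha_{v_i}$ and $\alpha_{\dg_j}$ are applied, so the statement is unambiguous.
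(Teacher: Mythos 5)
Your proposal is correct and follows essentially the same route as the paper: purity of $\omega$ from Theorem~\ref{thm:puregs}, plus the fact that the composite of the $\alpha_{v_i}$ and $\alpha_{\dg_j}$ is a $*$-automorphism and automorphisms preserve purity. Your convergence argument for $\alpha_v$ and $\alpha_\dg$ (conjugation stabilizes on each local algebra) is exactly the content of the paper's Lemmas~\ref{lem:aut_conv} and~\ref{lem:aut_conv_2} in Appendix~\ref{app:lattice}.
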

We will prove this theorem in appendix \ref{app:purity}. Purity of states is important since it implies the corresponding GNS representation is irreducible (\cite{Naaijkens2013-ji}, theorem 2.5.14), and one can then talk about equivalence of irreducible representations to be selected by the superselection criterion. A superposition of states in different sectors is necessarily a mixed state.

\subsection{Main results}
We now provide a summary of the main results in the paper.\\

A state is purely charged if it only has flux excitations with a finite boundary. A purely charged sector is the sector containing only purely charged states.

\begin{restatable}{theorem}{purelycharged}
\label{thm:charged-sectors}
        There are only 2 purely charged ground state sectors, given by $\hilb_\omega, \hilb_\epsilon$.

\end{restatable}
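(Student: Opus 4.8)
**Proof proposal for Theorem \ref{thm:charged-sectors} (only 2 purely charged ground state sectors $\hilb_\omega, \hilb_\epsilon$).**

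The plan is to show two things: (i) the vacuum sector $\hilb_\omega$ and the single-charge sector $\hilb_\epsilon$ are genuinely distinct superselection sectors containing purely charged ground states, and (ii) every purely charged ground state sector coincides with one of these two. For (i) one uses that $\omega$ is the unique pure frustration-free translation-invariant ground state (Theorem \ref{thm:puregs}) while $\omega_v$ has $\omega_v(A_v) = -1$, so it is a ground state of a different (translated) sector; the standard DHR-style argument (as in \cite{Cha2020-rz,Naaijkens2010-aq} for the 2d case) shows $\pi_\omega$ and $\pi_\epsilon := \pi_v$ are inequivalent because a charge at the end of a semi-infinite string cannot be removed by any operator localized in a wedge complement. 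The main content is (ii). Given a purely charged ground state $\omega'$ with GNS representation $\pi'$, the superselection criterion provides a union of wedges $W$ with $[\pi_\omega \restriction \cstar{W^c}] = [\pi' \restriction \cstar{W^c}]$. First I would argue that "purely charged" forces all flux excitations to be absent in $\pi'$: since the only excitations in the 3d Toric Code are charges (violations of $A_v$) and flux strings (violations of $B_f$), and purely charged means flux boundaries are finite, a ground-state energy condition (Definition \ref{def:GS}) combined with the fact that a closed flux loop always has strictly positive energy that cannot be lowered forces $\omega'(B_f) = 1$ for all $f$. So $\pi'$ differs from $\pi_\omega$ only through $A_v$-violations.

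Next, I would show the total charge is the only invariant and it is $\mathbb{Z}_2$-valued. Because $A_v^2 = \mathds{1}$, each vertex carries a charge in $\{\pm 1\}$; the state $\omega'$ restricted to $\cstar{W^c}$ being equivalent to the vacuum there means only finitely many vertices (those reachable by paths that must exit into $W$) can carry charge — more precisely, the equivalence on $\cstar{W^c}$ implies all but finitely many $A_v$ have expectation $+1$, and in fact the violated vertices must be "endpoints" of would-be string operators running off into $W$. A parity/flux-through-a-large-sphere argument (the product $\prod_{v \in \text{ball}} A_v$ telescopes to an operator on the boundary sphere of the ball, whose expectation detects the enclosed charge mod $2$) shows that the configuration of charges is, up to application of a local operator $F_\gamma$ connecting pairs of charges, characterized entirely by the total charge modulo $2$. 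Acting by the appropriate $F_\gamma$ (an inner automorphism, hence preserving the sector) reduces $\omega'$ either to $\omega$ (total charge $0$) or to a single-charge state, which is unitarily equivalent to $\omega_v$ for a fixed reference $v$ by translation covariance and the fact that moving a single charge is implemented by a string operator localized along a path, hence by a unitary in $\cstar{}$ (or at worst a cocycle yielding equivalence of representations). This gives $[\pi'] = [\pi_\omega]$ or $[\pi'] = [\pi_\epsilon]$.

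The step I expect to be the main obstacle is the rigorous "finitely many charges" claim and the reduction of an arbitrary finite charge configuration to a single reference charge at the level of \emph{representations} rather than states: one must check that the string operator implementing the move is compatible with the wedge-localization data $W$ in the superselection criterion, i.e. that the equivalence classes $[\pi' \restriction \cstar{W^c}]$ are preserved, and handle the case where charges in $\pi'$ are "attached" to semi-infinite strings escaping through different wedges. The cleanest route is probably to first prove a lemma that any purely charged state satisfying the criterion has only finitely many vertex violations (using the wedge-complement equivalence plus the energy bound), then invoke that a finite even number of charges can be paired up and annihilated by a finite string operator $F_\gamma \in \cstar{loc}$, and a finite odd number can be reduced to one charge at a canonical location by a single semi-infinite string whose automorphism is exactly $\alpha_v$ up to an inner automorphism. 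I would also need the flux-exclusion argument to be careful: a priori $\pi'$ could contain an infinite flux string, but "purely charged" as defined (flux boundaries finite) together with the ground-state property rules this out, since an infinite flux string carries infinite energy relative to any local perturbation and a finite closed flux loop can always strictly lower its energy, contradicting Definition \ref{def:GS}.
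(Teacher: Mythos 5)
Your core classification step is the same as the paper's: finite flux strings do not change the sector, an even number of charges is annihilated by a product of finite string operators $F_\gamma$ pairing them up, an odd number is reduced to a single reference charge by one more string, and $[\pi_\omega]\neq[\pi_v]$ separates the two resulting classes. Where you differ is in scope. The paper's proof is deliberately narrow: it only considers states of the explicitly constructed form $\omega_{v_1,\dots,v_n;\dg_1,\dots,\dg_m}$ and classifies those, so the whole argument is a short unitary-equivalence computation (with the inequivalence $[\pi_\omega]\neq[\pi_v]$ established beforehand by evaluating a closed flux surface operator $F_{\dS}$ enclosing the charge and invoking Theorem \ref{thm:equiv} --- essentially the concrete version of your ``enclosed charge mod 2'' detector). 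You instead try to prove the stronger statement that \emph{any} purely charged ground state satisfying the wedge superselection criterion has only finitely many $A_v$-violations and no $B_f$-violations, and hence is of that form. That would be a genuinely stronger theorem, and your outline of how to get there (wedge-complement equivalence forcing all but finitely many $A_v$ to have expectation $+1$, the flux-exclusion argument from the ground state condition, and the parity invariant) is plausible, but as you acknowledge it is left as a sketch: the finiteness claim, the compatibility of the reducing string operators with the wedge-localization data, and the case of charges attached to semi-infinite strings escaping through different wedges are all unproven. So relative to what the paper actually claims and proves, your proposal contains the complete argument plus an honest (but incomplete) attempt at a more intrinsic characterization that the paper does not attempt; nothing in your reduction contradicts the paper's route.
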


We will call a state charged if it lies in the $\hilb_\epsilon$ sector, and uncharged otherwise.

\begin{definition}[\textit{Infinite path}]

\label{def:infpath}
    An infinite path in $\dG$ is a function $\dg: \mathbb{Z} \rightarrow \edge(\dG)$ such that for any $a,b \in \mathbb{Z}$, $\dg[a,b] := \{\dg(t)\}_{t=a}^{t=b}$ is a finite path. We say $\duale \in \dg$ if there exists $t \in \mathbb{Z}$ such that $\dg(t) = \duale$.
\end{definition}

\begin{remark}
    The definition of a finite path necessarily forces an infinite path to be non self-intersecting.
\end{remark}



\begin{remark}
An infinite flux string is mathematically the automorphism $\alpha_\dg$ on an infinite path $\dg$ in $\overline{\Gamma}$.    
\end{remark}

Each oriented edge can point in the $(r,\sigma)$ direction, where $r\in \{x,y,z\}$ and $\sigma \in \{\pm\}$. Let $ \edge^{r,\sigma}$ denote the set of all edges pointing in the $(r,\sigma)$ direction. Then we have $\mathcal{E}(\overline{\Gamma}) =\cup_{r ,\sigma} \edge^{r,\sigma}$.

\begin{definition}[\textit{Monotonicity}]
\label{def:monopath}
    An infinite path $\dg$ is monotonic if for each $r$ there exists $\sigma_r \in \{\pm\}$ such that $\dg \subseteq \cup_{r} \edge^{r,\sigma_r}$.  
\end{definition}
\begin{remark}
An example of a monotonic and non-monotonic path are shown in figure \ref{fig:mono_vs_non}.
\end{remark}

\begin{figure}[t!]

        \centering
    \begin{subfigure}[t]{0.5\textwidth}
        \centering
\includegraphics[ width=0.5\textwidth]{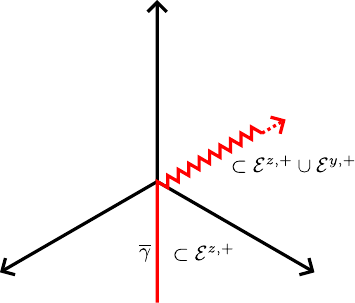}
    \end{subfigure}%
    ~ 
    \begin{subfigure}[t]{0.5\textwidth}
        \centering
\includegraphics[ width=0.5\textwidth]{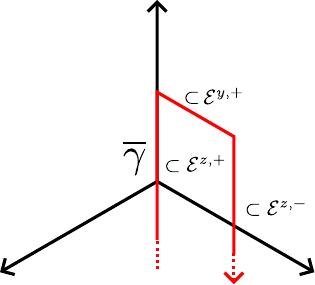}
    \end{subfigure}
    \caption{Examples of monotonic and non-monotonic paths}

    \label{fig:mono_vs_non}
\end{figure}


In what follows, we implicitly assume $\dg$ is an infinite path, unless stated otherwise.

\begin{restatable}{theorem}{monotonicGS}
\label{thm:mono-GS}
        A state $\omega_\dg$ is a ground state iff $\dg$ is a monotonic path in $\dG$.
\end{restatable}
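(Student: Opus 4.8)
The plan is to translate the ground-state condition for $\omega_\dg$ into a combinatorial inequality about finite cycles in $\dG$ and then to settle that inequality by a level-set counting argument. \emph{Step 1 (energy bookkeeping and reduction to an operator inequality).} Since $\alpha_\dg$ is built from $\sigma^x$-operators it commutes with every $A_v$, so $\omega_\dg(A_v)=1$, while $\omega_\dg(B_f)=-1$ exactly when the dual edge $\duale_f$ of $f$ lies on $\dg$ and $=1$ otherwise; hence $\omega_\dg(H_\Lambda)=2\,|\{f\in\Lambda:\duale_f\in\dg\}|$, and in the GNS space one checks directly that $\pi_\dg(H_\Lambda)\Omega_\omega=\omega_\dg(H_\Lambda)\,\Omega_\omega$. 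For local $O$, choose a large enough finite region $\Lambda_O\supseteq\mathrm{supp}(O)$ so that $\delta(O)=i[H_{\Lambda_O},O]$; then with $\psi:=\pi_\dg(O)\Omega_\omega$ the above eigenvector identity gives
\[
 -i\,\omega_\dg\!\big(O^\dagger\delta(O)\big)=\big\langle\psi,\ \big(\pi_\dg(H_{\Lambda_O})-\omega_\dg(H_{\Lambda_O})\big)\psi\big\rangle ,
\]
so $\omega_\dg$ is a ground state iff $\langle\psi,\pi_\dg(H_{\Lambda_O})\psi\rangle\ge\omega_\dg(H_{\Lambda_O})\,\|\psi\|^2$ for every such $\psi$.

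\emph{Step 2 (reduction to a cycle inequality).} A vector $\psi=\pi_\dg(O)\Omega_\omega$ carries no charge and no flux outside $\Lambda_O$: expectations of $A_v,B_f$ far from $\mathrm{supp}(O)$ equal those in $\omega$, and being $\pm1$-valued this forces them to be $+1$. Decompose $\psi$ into joint eigenvectors of the commuting family $\{\pi_\dg(A_v)\}\cup\{\pi_\dg(B_f)\}$, all of which commute with $H_{\Lambda_O}$. Each component has charges on a finite set and flux on a finite $\mathbb{Z}_2$-cycle $D\subseteq\Lambda_O$ (the cube relations $\prod_{f\subset c}B_f=\mathds{1}$ force the flux set to be a $\mathbb{Z}_2$-cycle in $\dG$, and being finite it bounds a surface), and its energy in $\Lambda_O$ equals $2\big(|\{f\in\Lambda_O:\duale_f\in\dg\}|+|D|-2|D\cap\dg|\big)$ plus a non-negative charge term. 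Summing over components, $\omega_\dg$ is a ground state iff
\[
 |D\cap\dg|\ \le\ \tfrac12\,|D|\qquad\text{for every finite }\mathbb{Z}_2\text{-cycle }D\text{ in }\dG ,
\]
the converse being witnessed by $O=F_{\dS'}$ for any finite surface $\dS'$ with $\partial\dS'=D$.

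\emph{Step 3 (the cycle inequality is equivalent to monotonicity).} If $\dg$ is monotonic then, after a lattice reflection, $\dg\subseteq\cup_r\edge^{r,+}$, so the potential $\phi(p)=p_x+p_y+p_z$ increases by exactly $1$ along each edge of $\dg$; thus $\dg$ has precisely one edge between consecutive level sets $\{\phi=k\}$ and $\{\phi=k+1\}$, whereas any $\mathbb{Z}_2$-cycle $D$ crosses each such gap an even number of times, and counting gap-by-gap yields $|D\cap\dg|\le\tfrac12|D|$. Conversely, if $\dg$ is not monotonic then for some direction $r$ it uses $r$-edges of both signs, so reading its $r$-edges in order two consecutive ones have opposite sign; the portion $w$ of $\dg$ strictly between them lies in a single hyperplane $H$, is nonempty (a path cannot backtrack), and its endpoints are joined to the adjacent hyperplane $H'$ by those two $r$-edges. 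The cycle $D$ formed by $w$, these two edges, and the translate of $w$ into $H'$ satisfies $|D|=2+2|w|$ while $|D\cap\dg|\ge 2+|w|>\tfrac12|D|$. Combining Steps 1--3 proves the theorem.

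The step I expect to demand the most care is Step 2: making rigorous that a vector of the form $\pi_\dg(O)\Omega_\omega$ can only realise ``cheap'' flux configurations genuinely contained in $\Lambda_O$, so that the only route to lowering the energy is a finite cycle paid for entirely inside the window $\Lambda_O$ — in other words, that the apparently free energy reduction obtained by shrinking $\Lambda$ around a single excited face is always re-incurred by the operator's support elsewhere. This rests on the localization of excitations for $\pi_\dg(O)\Omega_\omega$ together with the description of $\hilb_\omega$ as the closed span of finite charge/flux configurations; once that is in place the level-set count of Step 3 is routine.
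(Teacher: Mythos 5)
Your proof is correct, but it takes a genuinely different route from the paper. The paper argues variationally and geometrically: Lemma \ref{lem:MonoDefGS} shows a monotonic path uses the minimal number $a+b+c$ of edges between opposite corners of a cuboid, Theorem \ref{thm:LowEState} replaces a non-monotonic segment by a shorter monotonic one with the same endpoints (the ``straightening procedure''), and the ground-state property is then read off from the remark following Definition \ref{def:GS}. You instead push the operator-algebraic reduction all the way through: your Steps 1--2 convert the derivation inequality $-i\omega_\dg(O^\dagger\delta(O))\ge 0$ into the single combinatorial criterion $|D\cap\dg|\le\tfrac12|D|$ over all finite $\mathbb{Z}_2$-cycles $D$, and Step 3 settles that criterion by the potential $\phi=p_x+p_y+p_z$ (each gap between level sets is crossed once by a monotonic $\dg$ and an even number of times by any cycle) together with an explicit ``ladder'' cycle for the converse. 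What your approach buys is a fully explicit passage from the abstract ground-state definition to the energy comparison --- the cycle inequality quantifies over \emph{all} local perturbations at once, not merely those that replace a segment of $\dg$ by another path with the same endpoints, and the level-set double count is cleaner than edge-counting in cuboids. What the paper's route buys is the straightening procedure itself, which is reused throughout (pathological non-monotonicity, surgery on multi-string configurations), whereas your cycle inequality is tailored to the single-string statement. Two points deserve a sentence each if you write this up: in Step 1 you should note that it suffices to check the inequality for $O\in\cstar{loc}$ since the local algebra is a core for $\delta$; and in Step 2 the joint eigenvector decomposition should be justified by observing that Pauli monomials applied to $\Omega_\omega$ span a dense subspace of $\hilb_\omega$ and are simultaneous eigenvectors of all $\pi_\dg(A_v),\pi_\dg(B_f)$ with only finitely many $-1$ eigenvalues. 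Neither is a gap, and you correctly flag Step 2 as the place where care is needed.
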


Performing local operations on an infinite flux string will not change its sector. Thus it is not the microscopic behaviour of $\dg$ that matters to understand the sector it lies in, but the behaviour of $\dg$ in the macroscopic or infinite limit. To this end, we define a few key terms.

\begin{definition}[\textit{Infinity direction}]
\label{def:InfDir}
    Pick $n_0 \in \mathbb{Z}$ and a path $\dg$. $$D_+(\dg) := \biggl\{(i,\sigma)\bigg| i \in \{x,y,z\}, \sigma \in \{\pm\},\#\{n \in \mathbb{Z}| n > n_0, \dg(n) \subseteq \edge^{i,\sigma}\} = \infty \biggr\}$$ $D_+(\dg)$ is called the set of positive infinity directions. Similarly, $$D_-(\dg) := \biggl\{(j,\tau)\bigg| j \in \{x,y,z\}, \tau \in \{\pm\},\#\{n \in \mathbb{Z}| n < n_0, -\dg(n) \subseteq \edge^{j,\tau}\} = \infty \biggr\}$$ where $-\dg(n)$ is the edge $\dg(n)$ but pointing in the opposite direction. $D_-(\dg)$ is called the set of negative infinity directions. $D(\dg) := D_+(\dg) \cup D_-(\dg)$ is the set of infinity directions.
\end{definition}

\begin{remark}
    The definition of infinity directions is independent of the choice of $n_0$, as can be checked by relabelling $n \mapsto n+t$.
\end{remark}

\begin{definition}[\textit{Path equivalence}]
    Let $\dg_1, \dg_2$ be two paths. Define 
\begin{align*}
    r = min\{t | \dg_1 (t) \notin \dg_2\}\\
    s = max\{t | \dg_1 (t) \notin \dg_2\}
\end{align*}
Then $\dg_1, \dg_2$ are path equivalent if $-\infty < r,s < \infty$
\end{definition}
An example of two path equivalent configurations is given in figure \ref{fig:path_eqv}.

\begin{figure}[t!]
    \centering
    \begin{subfigure}[t]{0.45\textwidth}
    \centering
    \includegraphics[width=0.45\textwidth]{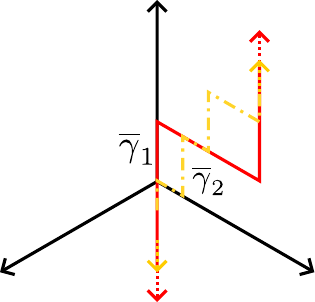}
    \caption{An example of two path equivalent configurations.}
    \label{fig:path_eqv}        
    \end{subfigure}
    \begin{subfigure}[t]{0.45\textwidth}
    \centering
    \includegraphics[width=0.45 \textwidth]{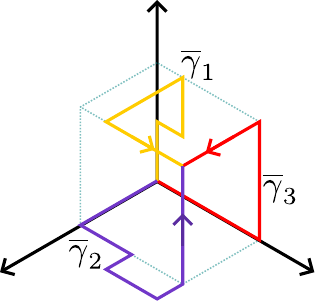}
    \caption{An example of different monotonic and non-monotonic paths between two endpoints of $\dL$. $\dg_2$ is non-monotonic while $\dg_1, \dg_3$ are monotonic.}
    \label{fig:mono_takes_least}      
    \end{subfigure}
    \caption{}

\end{figure}




\begin{restatable}{theorem}{gss}
    \label{thm:GSS}
Let $\{\dg_n\}_{n=1}^{N}$ be a set of monotonic paths. Then $\omega_{\dg_1, \cdots, \dg_N}$ lies in a ground state sector iff $\bigcap_{i=1}^N D(\dg_i) = \emptyset$.
\end{restatable}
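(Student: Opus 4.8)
\textit{Proof strategy.} Since each $\dg_i$ is monotonic, $\omega_{\dg_1,\ldots,\dg_N}$ is a stabilizer-type state whose only frustrated terms are the plaquettes $B_f$ whose dual edge lies in an odd number of the $\dg_i$; write $C := \dg_1 \triangle \cdots \triangle \dg_N$ for this $\mathbb{Z}_2$-$1$-cycle in $\dG$, so that $\omega_{\dg_1,\ldots,\dg_N}(H_\Lambda)$ equals the number of faces $f \in \Lambda$ whose dual edge lies in $C$. Two configurations differing by a finite surface operator $F_{\overline{S}}$ give states in the same superselection sector (the automorphism is inner), and more generally the superselection criterion only constrains the behaviour of $C$ outside a finite union of wedges; the relevant sector datum is therefore the collection of "ends'' of $C$, and $\bigcup_i D(\dg_i)$ records exactly which directions carry flux to infinity. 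The plan is: for $(\Leftarrow)$, use $\bigcap_i D(\dg_i)=\emptyset$ to find, within the sector, a configuration that is energy-minimising in every finite region and hence (by the Remark after Definition~\ref{def:GS}) a ground state; for $(\Rightarrow)$, show that a common infinity direction produces, arbitrarily far out, two nearly parallel flux strands that can always be short-circuited, so that no state in the sector minimises the energy.

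\textit{The direction $(\Leftarrow)$.} Assume $\bigcap_i D(\dg_i)=\emptyset$. The combinatorial input is that this condition lets one replace each $\dg_i$ by a path-equivalent monotonic path $\dg_i'$ (path equivalence implies the states differ by a finite surface operator, hence lie in the same sector) such that the $\dg_i'$ are mutually \emph{asymptotically separated}: there is a finite union of pairwise non-intersecting wedges $W_i$ with $\dg_i'$ contained in $W_i$ up to finitely many edges, and no two of the $\dg_i'$ stay within bounded distance of one another outside a finite region. One then argues that $\omega_{\dg_1',\ldots,\dg_N'}$ is a ground state by a region-by-region (locality) argument: near each $W_i$ the state looks like the single monotonic string $\dg_i'$, which is a ground state by Theorem~\ref{thm:mono-GS}; away from all the $W_i$ it looks like the frustration-free state $\omega$; and since these local pictures are mutually consistent the global state is energy-minimising in every finite $\Lambda$. (For $N \le 3$ this can instead be bootstrapped from the one- and two-string analyses of Sections~\ref{sec:1str} and \ref{sec:2str}.)

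\textit{The direction $(\Rightarrow)$, by contraposition.} Suppose $(r,\sigma)\in\bigcap_i D(\dg_i)$, so every $\dg_i$ sends a strand to infinity in direction $(r,\sigma)$. First, $\omega_{\dg_1,\ldots,\dg_N}$ is not itself a ground state: far out along $(r,\sigma)$ choose a large box $\Lambda$ crossed by at least two of these strands in a nearly parallel fashion, and let $F_{\overline{S}_\Lambda}$ be the local (finite) surface operator whose boundary reconnects the two strands across the short faces of $\Lambda$; then $F_{\overline{S}_\Lambda}$ replaces two long parallel segments of $C$ inside $\Lambda$ by two short caps, so $\omega_{\dg_1,\ldots,\dg_N}(F_{\overline{S}_\Lambda}^\dagger H_\Lambda F_{\overline{S}_\Lambda}) < \omega_{\dg_1,\ldots,\dg_N}(H_\Lambda)$, contradicting Definition~\ref{def:GS}. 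Second, and this is the essential point, the obstruction survives passage to an equivalent state: any $\omega'$ obeying the superselection criterion with $\pi_{\omega'}\cong\pi_{\omega_{\dg_1,\ldots,\dg_N}}$ agrees with $\omega_{\dg_1,\ldots,\dg_N}$ outside a finite union of wedges, and since each wedge has opening angle $<\pi$ no such finite union covers a neighbourhood of the ray in direction $(r,\sigma)$; hence $\omega'$ still carries two parallel strands arbitrarily far out along $(r,\sigma)$, the same cap operator lowers its energy, and $\omega'$ is not a ground state. Thus the sector of $\omega_{\dg_1,\ldots,\dg_N}$ contains no ground state.

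\textit{Expected main obstacle.} The delicate step is the one in $(\Leftarrow)$: proving that asymptotic separation of the $\dg_i'$ actually forces $\omega_{\dg_1',\ldots,\dg_N'}$ to minimise the energy in every finite $\Lambda$. This needs a genuine lattice isoperimetric/flow statement --- that the only way to reduce the number of dual edges of $C'=\dg_1'\triangle\cdots\triangle\dg_N'$ inside $\Lambda$, with boundary data on $\partial\Lambda$ fixed, is to short-circuit parallel strands, which separation precludes --- together with the purely combinatorial fact that $\bigcap_i D(\dg_i)=\emptyset$ is exactly what makes the asymptotic separation realisable. The companion subtlety in $(\Rightarrow)$, that unitary equivalence of representations cannot "hide'' flux escaping to infinity along a ray not contained in any finite wedge union, is where the precise form of the superselection criterion (finitely many wedges, each of angle $<\pi$) must be used carefully.
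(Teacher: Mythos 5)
Your contrapositive direction is essentially the paper's argument: a shared infinity direction is exploited by a surgery (a finite surface operator, hence sector-preserving) performed arbitrarily far out, and Theorem \ref{thm:equiv} guarantees that any pure state in the same sector agrees with $\omega_{\dg_1,\ldots,\dg_N}$ up to $\epsilon$ on observables outside a finite region, so the same energy-lowering move applies to it. Two caveats. First, the tool that makes the obstruction survive passage to equivalent states is Theorem \ref{thm:equiv} (approximate agreement outside finite \emph{regions}), not the wedge geometry of the superselection criterion, which only compares representations to the vacuum. Second, the strands need not be ``nearly parallel'': two monotonic paths sharing $(z,+)$ can separate linearly in $x$, so the reconnecting caps are not short, and the energy gain must come from the directional count the paper uses --- after surgery the merged path has $(z,+)$ in both $D_+$ and $D_-$, and straightening it between heights $c$ and $c'>2c$ strictly shortens it (cf.\ Lemma \ref{lem:InfDirGS1} and Corollary \ref{cor:NotGSSIfInfDir}).

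The genuine gap is in your $(\Leftarrow)$ direction. ``Asymptotic separation'' --- disjoint wedges, mutual distance tending to infinity --- is strictly weaker than disjointness of infinity directions and does not imply that the configuration minimises the energy in every finite region. Concretely, take $\dg_1$ to be the $z$-axis and $\dg_2$ a monotonic path coming in from $(x,+)$-infinity and then alternating $+x,+z$ steps, so that $D(\dg_1)=\{(z,+),(z,-)\}$ and $D(\dg_2)=\{(z,+),(x,+)\}$. These strings lie in disjoint wedges and their mutual distance grows linearly, so they satisfy your separation hypothesis; yet the surgery reconnecting them at heights $c$ and $c'$ removes $3(c'-c)$ edges of the symmetric difference and inserts horizontal caps of total length about $c+c'$, a strict energy decrease whenever $c'>2c+1$, available arbitrarily far out. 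So ``separation precludes short-circuiting'' is false: the obstruction is directional, not metric, which is precisely why the paper runs $(\Leftarrow)$ by tracking how surgery reshuffles the sets $D_\pm(\dg_i)$ among the new paths --- disjointness of the $D(\dg_i)$ is preserved, so no surgery can ever create a path with a direction in both its $D_+$ and $D_-$, the only mechanism for unbounded energy decrease. The same example shows that what your $(\Leftarrow)$ must actually deliver (and what the paper really uses, in Theorem \ref{thm:2monoGS} and in the $N\ge 4$ count $|\bigcup_i D(\dg_i)|\ge 2N$) is \emph{pairwise} disjointness $D(\dg_i)\cap D(\dg_j)=\emptyset$, not merely emptiness of the $N$-fold intersection: for $N\ge 3$ the latter is satisfied by configurations (append a third string with $D(\dg_3)=\{(y,+),(y,-)\}$ to the example above) that are not in ground state sectors.
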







\begin{restatable}{theorem}{onesectorclass}
\label{thm:onesectorclass}
    All {\em inequivalent} ground states with 1 infinite flux string are labelled by $(g \in \mathbb{Z}_2,\dg\in \mathcal{M})$ where $g$ indicates if the ground state is charged or uncharged, $\mathcal{M}$ is the set of all path inequivalent monotonic paths. 
\end{restatable}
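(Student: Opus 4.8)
The plan is to show that the labelling data $(g, \overline{\gamma})$ is both a complete invariant and a well-defined one for the equivalence classes of ground states with a single infinite flux string. By Theorem \ref{thm:mono-GS}, the ground state condition forces $\overline{\gamma}$ to be monotonic, so it suffices to work with monotonic paths. First I would fix a ground state $\omega_{\overline{\gamma}}$ (possibly composed with the charge automorphism $\alpha_v$ to get the charged version) and extract the two pieces of data: $g \in \mathbb{Z}_2$ records whether the state lies in $\hilb_\omega$ or $\hilb_\epsilon$, which is well-defined by Theorem \ref{thm:charged-sectors}, and $\overline{\gamma}$ is taken modulo path equivalence. The content of the theorem is then the two implications: (i) if two such ground states have the same $(g, \overline{\gamma})$ data (with $\overline{\gamma}_1$ path equivalent to $\overline{\gamma}_2$), then the representations are unitarily equivalent; and (ii) if the data differ, the representations are inequivalent.

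For implication (i), the key observation is that path equivalence of $\overline{\gamma}_1$ and $\overline{\gamma}_2$ means they differ only on a finite set of edges, i.e. the "symmetric difference" $\overline{\gamma}_1 \triangle \overline{\gamma}_2$ is a finite closed dual path. Then $\alpha_{\overline{\gamma}_1}$ and $\alpha_{\overline{\gamma}_2}$ differ by conjugation by the flux operator $F_{\overline{S}}$ of a \emph{finite} surface $\overline{S}$ whose boundary is this finite closed path; since $F_{\overline{S}} \in \cstar{}$, the two automorphisms are inner-equivalent, hence $\pi_\omega \circ \alpha_{\overline{\gamma}_1} \cong \pi_\omega \circ \alpha_{\overline{\gamma}_2}$, and similarly after composing with the same $\alpha_v$. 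One must be slightly careful that the limit defining $\alpha_{\overline{\gamma}_i}$ is along a cofinal sequence of surfaces and check that the unitary intertwiner is genuinely bounded and in the GNS Hilbert space — this is the standard argument, analogous to the one already used to show local modifications of the string do not change the sector.

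For implication (ii), I would separate $g$ from $\overline{\gamma}$. Distinguishing $g$ is immediate from Theorem \ref{thm:charged-sectors} (a charged state can never be equivalent to an uncharged one, since equivalence of representations preserves the expectation values $\omega(A_v)$ up to the action of a unitary, and the total charge is a superselected quantity). The substantive part is showing that two \emph{path-inequivalent} monotonic paths $\overline{\gamma}_1, \overline{\gamma}_2$ give inequivalent representations. Here path inequivalence means they disagree on an infinite set of edges in at least one direction along the string; I would argue that this forces a difference that cannot be implemented by any quasi-local unitary. Concretely, suppose for contradiction $\pi_{\overline{\gamma}_1} \cong \pi_{\overline{\gamma}_2}$ via a unitary $U$; then $U$ would intertwine the expectation values of $B_f$ operators, but the set of faces $f$ with $\omega_{\overline{\gamma}_i}(B_f) = -1$ is precisely the set of faces dual to edges in $\overline{\gamma}_i$, and these two infinite sets differ in an infinite region. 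One then derives a contradiction with quasi-locality of $U$ (e.g. by showing $U$ would have to act nontrivially arbitrarily far out along the string, or by a more careful argument using the localization of the representations in wedges and comparing the "flux through a far-away plane").

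The main obstacle I expect is implication (ii) — specifically, turning the intuition "the representations differ in an infinite region hence are inequivalent" into a rigorous argument, since two quasi-local representations can be equivalent even when the obvious string operators look different (one can always deform the string locally). The correct invariant is something like the flux measured through a sequence of cross-sectional planes going to infinity along each infinity direction of the string, or equivalently the asymptotic behaviour captured by $D(\overline{\gamma})$; I would likely need a lemma stating that this asymptotic flux data is invariant under equivalence, and that path-inequivalent monotonic paths necessarily differ in this data. For monotonic paths this should be tractable because monotonicity rigidly constrains how the path can reach infinity — essentially $D(\overline{\gamma})$ together with the finite "core" determines the path equivalence class — so the bulk of the work is the general principle that wedge-localized equivalences cannot change the superselected flux label, which should parallel the DHR-type arguments already invoked in the paper.
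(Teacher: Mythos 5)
Your decomposition into well-definedness (i) and completeness (ii) is the right one, and for (i) you do exactly what the paper does: the charge content reduces to a parity $g$ by the reasoning of Theorem \ref{thm:charged-sectors}, monotonicity is forced by Theorem \ref{thm:mono-GS}, and path-equivalent strings give unitarily equivalent representations via the finite surface $\overline{S}$ with $\partial\overline{S} = (\overline{\gamma}_1\cup\overline{\gamma}_2)\setminus(\overline{\gamma}_1\cap\overline{\gamma}_2)$ --- this is Lemma \ref{lem:PathEqvSameSector} verbatim. The genuine divergence is implication (ii): the paper's proof stops after (i), i.e.\ it shows the labelling factors through $\mathcal{M} = \mathcal{M}'/\sim$ but never argues that the induced map on equivalence classes is injective, so the direction you single out as ``the main obstacle'' is simply absent from the printed proof. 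You also overestimate its difficulty: no new ``asymptotic flux through far-away planes'' invariant is needed, because the paper's Theorem \ref{thm:equiv} (the criterion for equivalence of pure states) does the job directly. If $\overline{\gamma}_1,\overline{\gamma}_2$ are path-inequivalent then, say, $s=\max\{t \mid \overline{\gamma}_1(t)\notin\overline{\gamma}_2\}=+\infty$; since $\overline{\gamma}_1$ is non-self-intersecting, only finitely many of its edges meet any finite region, so outside every $\Lambda_\epsilon$ there is an edge $\overline{\gamma}_1(t)\notin\overline{\gamma}_2$ whose dual face $f$ satisfies $\omega_{\overline{\gamma}_1}(B_f)=-1$ while $\omega_{\overline{\gamma}_2}(B_f)=+1$, a discrepancy of $2\norm{B_f}$ independent of $\epsilon$; purity is supplied by Theorem \ref{thm:purity}, so Theorem \ref{thm:equiv} gives inequivalence. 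Your worry that ``one can always deform the string locally'' is exactly what the path-equivalence quotient removes, so it cannot defeat this argument, and the charged/uncharged label is separated by closed flux surfaces exactly as in the proof that $[\pi_v]\neq[\pi_\omega]$. In short: your proposal, once (ii) is finished off with Theorem \ref{thm:equiv}, is more complete than the paper's own argument, which only establishes that the labelling is well defined, not that it distinguishes sectors.
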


\begin{definition}[\textit{Half-infinite path}]
\label{def:halfpath}
    A positive half-infinite path on $\dG$ is a function $\dg_v: \mathbb{Z}_+ \rightarrow \mathcal{E} (\dG)$ such that for any $[a,b] \in \mathbb{Z}_+$, $\dg[a,b]:= \{\dg(t)\}_{t=a}^{t=b}$ is a finite path. We call $\partial_0 \dg_v = v \in \mathcal{V}(\dG)$ as the starting vertex of $\dg_v$. A negative half-infinite path is similarly defined as a function $\dg_v: \mathbb{Z}_- \rightarrow \mathcal{E}(\dG)$ that has $\partial_1 \dg_v = v$ as its ending vertex.
\end{definition}

    Let $\dg^{(r,\sigma)}_v$ with $r\in \{x,y,z\},\sigma \in \{\pm \}$ be a positive/negative half-infinite path that has its start/end point as $v$ and $\dg^{(r,\sigma)}_v(t) \in \mathcal{E}^{r,\sigma}$. This defines a half-infinite path parallel to the positive/negative $r$ principal direction and starting/ending at $v$.\\

    Let $p_{(r_1, \sigma_1, \tau_1), (r_2, \sigma_2, \tau_2)}$ denote an infinite monotonic path such that $p_{(r_1, \sigma_1, \tau_1), (r_2, \sigma_2, \tau_2)}(t) \in \dg_{\tau_1}^{(r_1,\sigma_1)}$ for all $t>t_+$, and $p_{(r_1, \sigma_1, \tau_1), (r_2, \sigma_2, \tau_2)}(t) \in \dg_{\tau_2}^{(r_2,\sigma_2)}$ for all $t<t_-$ for integer constants $t_\pm$.  Let $\mathcal{P}_{(r_1, \sigma_1, \tau_1), (r_2, \sigma_2, \tau_2)}$ denote the set of infinite paths that are path equivalent to $p_{(r_1, \sigma_1, \tau_1), (r_2, \sigma_2, \tau_2)}$.

\begin{restatable}{theorem}{threesectorclass}
    All {\em inequivalent} ground states containing 3 infinite flux strings $\dg_i$ (for $i = 1,2,3$) and a number of charges are labelled by $(g \in \mathbb{Z}_2, \dg_i \in \mathcal{P}_{(r_i, \sigma_i, \tau_i), (r_i, \overline{\sigma}_i, \tau_i')})$ where $r_i$ are unique elements of $\{x,y,z\}$ and $g$ indicates if the ground state is charged or uncharged.

\end{restatable}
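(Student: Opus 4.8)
The plan is to turn the classification into a finite combinatorial problem via \Cref{thm:GSS}, then straighten the three strings inside wedges into a canonical form, and finally verify that the resulting labels form a complete invariant; I will use \Cref{thm:mono-GS,thm:GSS,thm:onesectorclass,thm:charged-sectors} and the two-string results of Section~\ref{sec:2str}.

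\emph{Reduction to infinity directions.} First I would check that a ground state containing three infinite flux strings has all three $\dg_i$ monotonic: otherwise some $\dg_i$ contains a finite non-monotonic sub-path replaceable by a strictly shorter one with the same endpoints, and (placing the other two strings far away) this is a local move lowering the energy in a finite region, contradicting the ground-state condition exactly as in the proof of \Cref{thm:mono-GS}. With all $\dg_i$ monotonic, \Cref{thm:GSS} says the sector is a ground state sector iff $\bigcap_{i=1}^{3}D(\dg_i)=\emptyset$. Recall that for a monotonic path both $D_+(\dg)$ and $D_-(\dg)$ are nonempty, lie inside $\{(r,\sigma_r)\}_r$ and $\{(r,\overline\sigma_r)\}_r$ respectively, and are disjoint, so $|D(\dg_i)|\ge 2$.

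\emph{The combinatorics forces three distinct axes.} The key claim is that a three-string configuration whose $D(\dg_i)$ are not pairwise disjoint is equivalent to one with strictly fewer infinite strings. If $D(\dg_i)$ and $D(\dg_j)$ share a direction $d$, the two corresponding escaping rays eventually lie in a common wedge about the coordinate axis carrying $d$; applying in that wedge the flux surface operator bounded by the straight line through those rays rewires $\dg_i$ and $\dg_j$ into fewer strings, to which the one- and two-string classifications apply. Hence a genuine three-string ground state sector has $D(\dg_1),D(\dg_2),D(\dg_3)$ pairwise disjoint, and since there are only six oriented coordinate directions and each $|D(\dg_i)|\ge 2$, this forces $|D(\dg_i)|=2$ and $D(\dg_1)\sqcup D(\dg_2)\sqcup D(\dg_3)=\{(x,\pm),(y,\pm),(z,\pm)\}$. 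A short check of which two-element sets can occur as $D_+\sqcup D_-$ for a monotonic path, plus one further round of the same wedge rewiring (which trades an ``L-shaped'' pair for two straight principal lines), shows that up to equivalence each $\dg_i$ may be taken with $D(\dg_i)=\{(r_i,+),(r_i,-)\}$ for distinct $r_1,r_2,r_3$.

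\emph{Canonical form and completeness.} Deforming each such $\dg_i$ inside a thin wedge about the $r_i$-axis brings it into a representative of $\mathcal P_{(r_i,\sigma_i,\tau_i),(r_i,\overline\sigma_i,\tau_i')}$, the leftover freedom being precisely its path-equivalence class (recorded by $\tau_i,\tau_i'$); the finitely many charges fuse pairwise down to at most one, contributing $g\in\mathbb Z_2$ exactly as in \Cref{thm:charged-sectors}, which gives surjectivity of the labelling. For injectivity one verifies that the path-equivalence class of each $\dg_i$ is a sector invariant — non-path-equivalent monotonic strings admit no intertwiner, as used for \Cref{thm:onesectorclass} — and that $g$ distinguishes $\hilb_\omega$ from $\hilb_\epsilon$. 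The step I expect to be the main obstacle is the reduction: making precise that sharing an infinity direction genuinely lowers the number of infinite strings inside the selected class (the wedge-supported surface operator is not close to inner, so it both cancels infinite flux and may create a new straight string, and one must check the rewired paths stay monotonic so the lower-string classifications apply), and dually that no reduction is possible once the $D(\dg_i)$ are pairwise disjoint; everything else is bookkeeping on top of \Cref{thm:GSS,thm:onesectorclass,thm:charged-sectors}.
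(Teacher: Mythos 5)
Your skeleton matches the paper's: monotonicity of each $\dg_i$, reduction to infinity directions via \Cref{thm:GSS}, the pigeonhole argument forcing $|D(\dg_i)|=2$ with the three sets partitioning the six oriented directions, surgery to trade L-shaped pairs for three axis-aligned strings, \Cref{thm:Pset} for the $\mathcal{P}$-labels, and charge parity for $g$. But your ``key claim'' --- that a three-string configuration whose $D(\dg_i)$ are not pairwise disjoint is \emph{equivalent to one with strictly fewer infinite strings}, to be handled by the one- and two-string classifications --- is not the right statement, and the mechanism you propose for it does not work. The correct dichotomy, which is what the paper proves, is that such a configuration is not in a ground state sector \emph{at all}: a \emph{finite} surgery (\Cref{thm:set_of_paths}) on the two strings sharing a direction produces the same number of strings, one of which is pathologically non-monotonic, and \Cref{cor:NotGSSIfInfDir} then shows its energy can be lowered indefinitely. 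No appeal to the lower-string classifications is needed or possible; the configuration is simply excluded. The operator you propose instead --- a surface operator bounded by the two escaping rays, supported on an infinite wedge --- is not quasi-local, so conjugating by it is an outer automorphism that changes the superselection sector. It therefore cannot be used to identify the given sector with a fewer-string sector, and the issue you flag as ``the main obstacle'' is not an obstacle to be overcome but a sign that this reduction is the wrong move; the exclusion you need is already supplied by the pairwise application of \Cref{thm:2monoGS} underlying \Cref{thm:GSS}.

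A second, smaller point: the subsequent reduction of the L-shaped cases (the paper's cases B and C) to three straight axis-aligned strings must likewise be performed by \emph{finite} surgery, which is implemented by $\pi_\omega(F_{\dS})$ for a finite surface $\dS$ and hence preserves the sector. Your phrase ``one further round of the same wedge rewiring'' suggests reusing the infinite wedge-supported operator, which would again move you out of the sector you are trying to label. With that step replaced by finite surgery inside the region $\dL$ supplied by \Cref{lem:region}, and the exclusion step replaced by the direct non-ground-state argument above, your argument becomes the paper's.
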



\begin{restatable}{theorem}{foursectorclass}
    There does not exist any ground state with a configuration of infinite flux strings ${{\{\dg_n\}}_{n=1}^N}$ for $N\ge 4$.
\end{restatable}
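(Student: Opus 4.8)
The plan is to reduce the statement to \Cref{thm:GSS}, the necessary and sufficient condition $\bigcap_{i=1}^N D(\dg_i) = \emptyset$ for a multi-string ground state, together with a counting argument on the possible infinity directions. The key observation is that $D(\dg)$ is never empty: each $\dg_i$ is monotonic (otherwise $\omega_{\dg_i}$ alone is not a ground state by \Cref{thm:mono-GS}, and one checks a non-ground-state single string cannot become a ground state inside a larger configuration — or, more directly, the argument below only needs the monotonic case since non-monotonic strings are handled separately), and a monotonic infinite path must ``run off to infinity'' in at least one coordinate direction both as $n \to +\infty$ and as $n \to -\infty$. So each string contributes at least one element to $D_+(\dg_i)$ and at least one to $D_-(\dg_i)$, and both of these are pairs $(r,\sigma)$ with $r \in \{x,y,z\}$.

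First I would establish the pigeonhole skeleton. The set of all infinity-direction labels $(r,\sigma)$ with $r\in\{x,y,z\}$, $\sigma\in\{\pm\}$ has six elements, but the relevant constraint is finer: I claim that if $\dg$ is monotonic then $D(\dg)$ cannot contain both $(r,+)$ and $(r,-)$ for the same $r$ unless one comes from $D_+$ and the other from $D_-$, and moreover $|D(\dg)| \le $ (number of distinct coordinates in which $\dg$ is unbounded in each tail). More usefully, I would argue that the ``effective'' label of each string is a single coordinate axis $r\in\{x,y,z\}$ appearing in $D_+(\dg_i)\cup D_-(\dg_i)$ — this is essentially the content of \Cref{thm:threesectorclass}, where each of the three strings in a ground-state configuration gets a \emph{unique} $r_i$. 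So the real statement is: the map assigning to each string its infinity-direction axis data must be such that the $D(\dg_i)$ are pairwise disjoint, and with only three axes available and each $D(\dg_i)$ nonempty, four pairwise-disjoint nonempty subsets of a six-element set whose structure is constrained by monotonicity cannot exist.

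Concretely, the steps in order: (1) Show $D_\pm(\dg_i) \neq \emptyset$ for each monotonic string — a monotonic path $\dg_i:\mathbb{Z}\to\edge(\dG)$ visits infinitely many edges, so by pigeonhole among the at-most-three allowed directions $(r,\sigma_r)$ in its positive tail, at least one is hit infinitely often, giving an element of $D_+(\dg_i)$; symmetrically for $D_-$. (2) Handle non-monotonic strings: if some $\dg_i$ is non-monotonic, reduce to the monotonic case or show directly the configuration fails — I expect the cleanest route is that the results of \Cref{sec:1str,sec:2str,sec:3str} already force strings in ground-state configurations to be (path-equivalent to) monotonic ones, so this case is vacuous or folds in. (3) Invoke \Cref{thm:GSS}: for $\omega_{\dg_1,\dots,\dg_N}$ to be a ground state we need $\bigcap_i D(\dg_i)=\emptyset$, hence in particular the nonempty sets $D(\dg_i)$ are... not necessarily pairwise disjoint, only the total intersection is empty, so I need the stronger combinatorial input. (4) Supply that stronger input: show that for a \emph{ground-state} configuration the pairwise intersections must also vanish — this should follow by applying \Cref{thm:GSS} to sub-configurations, since any sub-collection of strings in a ground state must itself (after the appropriate local modification) sit in a ground-state sector, forcing $D(\dg_i)\cap D(\dg_j)=\emptyset$ for all $i\ne j$. (5) Conclude by counting: pairwise-disjoint nonempty $D(\dg_i)\subseteq \{x,y,z\}\times\{\pm\}$, each with at least one element in its ``positive half'' and one in its ``negative half'', cannot number more than three — a clean pigeonhole once the per-axis uniqueness from \Cref{thm:threesectorclass}-type reasoning is in place.

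The main obstacle I anticipate is step (4): upgrading ``the global intersection is empty'' to ``all pairwise intersections are empty,'' and more precisely pinning down exactly how much of $\{x,y,z\}\times\{\pm\}$ a single monotonic string's $D(\dg)$ can occupy and still be compatible with three others. The subtlety is that \Cref{thm:GSS} as stated only gives $\bigcap_{i=1}^N D(\dg_i)=\emptyset$, which for $N\ge 4$ is \emph{easy} to satisfy set-theoretically (four subsets of a six-element set with empty common intersection certainly exist), so the theorem cannot follow from \Cref{thm:GSS} alone — there must be an additional mechanism, presumably that the energy cost of the region where strings ``share'' an infinity direction cannot be screened unless they genuinely diverge into distinct half-spaces, which is really a refinement of the $D(\dg_i)$ disjointness to something like: the \emph{tails} of the strings must be separable by disjoint wedges, and six half-axis directions only admit three such disjoint wedge-pairs. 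I would structure the proof around making that separability statement precise and then counting.
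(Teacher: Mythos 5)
Your proposal is correct and, at its core, is the same pigeonhole argument the paper uses: each monotonic string has $|D(\dg_i)|\ge 2$ (since $D_\pm(\dg_i)\neq\emptyset$ and, by lemma \ref{lem:InfDirGS1}, $D_+(\dg_i)\cap D_-(\dg_i)=\emptyset$), the sets $D(\dg_i)$ must be disjoint, and only six directions $(r,\sigma)$ exist, so $2N\le 6$. The one place you add value is exactly your step (4): you are right that the condition $\bigcap_{i=1}^N D(\dg_i)=\emptyset$ as literally stated in theorem \ref{thm:GSS} is too weak to run the count (the paper's own proof writes $|\bigcap_{i=1}^N D(\dg_i)|\ge 2N$, which only makes sense if read as the cardinality of the \emph{union} under \emph{pairwise} disjointness), and your proposed repair — apply the two-string criterion of theorem \ref{thm:2monoGS} to each pair $(\dg_i,\dg_j)$, noting that the surgery producing a pathologically non-monotonic path from a shared infinity direction is insensitive to the remaining strings — is the correct way to obtain $D(\dg_i)\cap D(\dg_j)=\emptyset$ for all $i\neq j$. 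Your steps (2) and the hand-wringing in the final paragraph about ``separability by wedges'' are unnecessary once pairwise disjointness is in hand: non-monotonic strings are excluded immediately by the local straightening argument of theorem \ref{thm:mono-GS}, and the counting $2N\le 6$ finishes the proof with no further geometric input.
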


We wish to comment that a classification exists for sectors containing 2 infinite flux strings. Though the classification is more involved and needs to be split up into smaller cases. For this reason we do not state it in the summary. We will fully work out the classification in section \ref{sec:2str}.
\section{Purely charged/uncharged ground states}
\label{sec:pure-charge}
\subsection{Finite string/surface operators}
We first study the case when we have an uncharged ground state.  This state $\omega$ is defined to be $\omega(A_v) = \omega(B_f) = 1$ for all $v,f$. It is not hard to see that this state has the lowest possible energy, since it has an eigenvalue $+1$ for all $A_v, B_f$. The GNS triple for this state is denoted by $(\pi_\omega, \hilb_\omega, \Omega_\omega)$, with $\omega (O) = \langle \Omega_\omega, \pi_\omega (O) \Omega_\omega \rangle$. We take ${\Omega_\omega}$ as the vacuum vector, defined by the property $\pi_\omega(A_v) {\Omega_\omega} = \pi_\omega(B_f) {\Omega_\omega} = {\Omega_\omega}$ for all $v,f$.\\

\begin{lemma}
Equivalence of representations implies the corresponding GNS vectors lie in the same sector.
\end{lemma}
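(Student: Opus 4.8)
The plan is to extract the unitary intertwiner guaranteed by the equivalence and use it to carry the cyclic GNS vector of one state into the GNS representation space of the other; once that is done, both distinguished vectors manifestly live in a single representation space, which is exactly what ``the same sector'' means. So suppose $\omega_1,\omega_2$ are states on $\cstar{}$ with GNS triples $(\pi_1,\hilb_1,\Omega_1)$ and $(\pi_2,\hilb_2,\Omega_2)$, and suppose $[\pi_1]=[\pi_2]$, i.e.\ there is a unitary $U:\hilb_1\to\hilb_2$ with $\pi_1(A)=U^\dagger\pi_2(A)U$ for all $A\in\cstar{}$. The vector to track is $\xi:=U\Omega_1\in\hilb_2$.

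First I would verify that $\xi$ represents $\omega_1$ inside the representation $\pi_2$: for every $A\in\cstar{}$,
\begin{align}
\langle\xi,\pi_2(A)\,\xi\rangle &= \langle U\Omega_1,\pi_2(A)\,U\Omega_1\rangle = \langle\Omega_1,U^\dagger\pi_2(A)U\,\Omega_1\rangle \\
&= \langle\Omega_1,\pi_1(A)\,\Omega_1\rangle = \omega_1(A).
\end{align}
Next I would check cyclicity of $\xi$: since $\pi_2(A)\xi=U\pi_1(A)\Omega_1$ and $U$ is unitary, the set $\{\pi_2(A)\xi:A\in\cstar{}\}$ is the $U$-image of the dense set $\{\pi_1(A)\Omega_1:A\in\cstar{}\}$ and hence dense in $\hilb_2$. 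Therefore $(\pi_2,\hilb_2,\xi)$ is itself a GNS triple for $\omega_1$, so by uniqueness of the GNS construction up to unitary equivalence this is the GNS representation of $\omega_1$, now realized on the same Hilbert space $\hilb_2$, with the same representation $\pi_2$, that also carries $\Omega_2$. Hence the GNS vector of $\omega_1$ (namely $\xi$) and that of $\omega_2$ (namely $\Omega_2$) are both honest vectors of $\hilb_2$, i.e.\ they lie in the same sector. Swapping the roles of $1$ and $2$ and using $U^\dagger$ puts both vectors in $\hilb_1$ instead, which is equally valid since a sector is an equivalence class rather than a fixed space.

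I do not expect a genuine obstacle here: the argument is bookkeeping with the intertwiner, and the one point needing care is the meaning of ``the same sector''. A sector is strictly an equivalence class of (irreducible) representations, so the claim is not that $\Omega_1$ and $\Omega_2$ sit inside one fixed $\hilb$ a priori, but that after the canonical identification $\hilb_1\cong\hilb_2$ supplied by $U$ both become vectors of one representation space; it is also worth recording that $U$ maps cyclic vectors to cyclic vectors, which is what lets us recognize the transported triple as a bona fide GNS triple. In the typical application --- $\pi_2=\pi_1\circ\alpha$ for an inner automorphism $\alpha=\mathrm{Ad}(F)$ with $F\in\cstar{}$ unitary, so that one may take $U=\pi_1(F)$ and $\hilb_1=\hilb_2=\hilb_\omega$ --- no identification is needed at all, and the statement collapses to the remark that $\pi_\omega(F)\Omega_\omega$ is a cyclic vector of $\hilb_\omega$ implementing the excited state.
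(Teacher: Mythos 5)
Your proof is correct and follows essentially the same route as the paper: transport the cyclic vector with the intertwining unitary and invoke uniqueness of the GNS triple. In fact your version is slightly more careful — you correctly identify $U\Omega_1$ as a vector in $\hilb_2$ implementing $\omega_1$ (verifying cyclicity along the way), whereas the paper's displayed chain ends with an equality to $\omega_2(O)$ that only holds if $U\Omega_1=\Omega_2$ up to phase; the intended conclusion is the one you state.
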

\begin{proof}
    Let $\omega_1, \omega_2$ be two states of $\cstar{}$ with the GNS triples $(\pi_1, \hilb_1, \Omega_1), (\pi_2, \hilb_2, \Omega_2)$ respectively. Equivalence of representations implies the existence of a unitary map $U:\hilb_1 \rightarrow \hilb_2$ such that $\pi_2 = U \pi_1 U^\dagger$. Then we have,
    \begin{align*}
        \omega_1 ({O}) &= \langle \Omega_1, \pi_1 ({O}) \Omega_1\rangle \qquad O \in \cstar{} \\
        &= \langle \Omega_1, U^\dagger \pi_2 ({O}) U \Omega_1 \rangle\\
        &= \langle  U\Omega_1, \pi_2 ({O}) U \Omega_1 \rangle\\
        &= \omega_2 ({O})
    \end{align*}
    With the new GNS triple $(\pi_2, \hilb_1, U \Omega_1)$, but such a triple is unique up to equivalence. We see that the states both live in the same sector. 
\end{proof}

A 2-charge state $\omega_\gamma = \langle \Omega_\omega, \pi_\omega \circ \alpha_\gamma(\cdot) \Omega_\omega \rangle$ can be built for a finite path $\gamma$ following section \ref{sec:constexcite}. The unitary $U = \pi_\omega(F_\gamma)$ gives us the equivalence relation $\pi_\omega= U (\pi_\omega \circ \alpha_\gamma) U^\dagger$. Thus the two states lie in the same sector $\hilb_\omega$. An analoguous analysis exists for the state with finite flux excitations. We recall $\omega_\dg$ as the state with the representation $\pi_\dg$ corresponding to a finite flux string along $\dg$ as defined in section \ref{sec:constexcite}.

\begin{proposition}
\label{pro:EnergyBdy}
    The energy of a state $\omega_\dg$ having a single finite flux string is proportional to $|\dg|$. The energy of a flux string inside region $\Lambda$ is proportional to the number of edges in $\dg$ that lie inside $\Lambda$.
\end{proposition}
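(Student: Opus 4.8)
The plan is to compute $\omega_\dg(H_\Lambda)$ directly by tracking which stabilizer terms $A_v, B_f$ are violated in the state $\omega_\dg = \langle \Omega_\omega, \pi_\omega \circ \alpha_\dg(\cdot)\,\Omega_\omega\rangle$, where $\alpha_\dg(O) = F_{\dS} O F_{\dS}$ and $\dg = \partial \dS$. Recall from section \ref{sec:constexcite} that $\omega(A_v) = \omega(B_f) = 1$ on the reference ground state, so for any term $T \in \{A_v, B_f\}$ we have $\omega_\dg(T) = \omega(F_\dS T F_\dS) = \pm 1$ according to whether $F_\dS$ commutes or anticommutes with $T$. Since $F_\dS$ is a product of $\sigma^x$ operators, it commutes with every $A_v$; hence $\omega_\dg(A_v) = 1$ for all $v$, contributing zero to the energy. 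For the $B_f$ terms, the computation in section \ref{sec:constexcite} shows $B_f F_\dS = -F_\dS B_f$ precisely when $f$ is dual to some edge $\duale \in \dg = \partial \dS$, and they commute otherwise. Therefore $\omega_\dg(\mathds{1} - B_f) = 2$ exactly when $f$ is dual to an edge of $\dg$, and $=0$ otherwise.

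The first step, then, is to establish the global statement: $E = \omega_\dg(H) = \sum_v \omega_\dg(\mathds{1} - A_v) + \sum_f \omega_\dg(\mathds{1} - B_f) = 2 \cdot \#\{f \in \mathcal{F}(\Gamma) : f \text{ dual to some } \duale \in \dg\}$. Because each edge $\duale \in \mathcal{E}(\dG)$ has a unique dual face $f \in \mathcal{F}(\Gamma)$ (stated in the excerpt), this count is exactly $|\dg|$, the number of edges in the path $\dg$. Hence $E = 2|\dg|$, which is proportional to $|\dg|$. The second step is the localized version: replacing $H$ by $H_\Lambda = \sum_{v \in \Lambda}(\mathds{1} - A_v) + \sum_{f \in \Lambda}(\mathds{1} - B_f)$, the same term-by-term evaluation gives $E_\Lambda = \omega_\dg(H_\Lambda) = 2 \cdot \#\{f \in \Lambda : f \text{ dual to some } \duale \in \dg\}$, and via the edge–face duality bijection this equals $2 \cdot \#\{\duale \in \dg : \text{the dual face of } \duale \text{ lies in } \Lambda\}$, i.e. proportional to the number of edges of $\dg$ whose dual faces lie in $\Lambda$ (identifying $\Lambda$ with its dual region in the obvious way).

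One routine point to handle carefully is that $F_\dS$ need not be the \emph{only} surface operator with boundary $\dg$: different choices of $\dS$ with $\partial \dS = \dg$ differ by a closed surface operator, which is a product of $A_v$'s, hence acts trivially on $\Omega_\omega$, so $\omega_\dg$ is well-defined and depends only on $\dg$ — this should be noted but is immediate. The only genuine subtlety, and the main thing to verify rather than assert, is the commutation bookkeeping: one must check that $B_f$ anticommutes with $F_\dS$ iff exactly one edge of $f$ (as a face of $\Gamma$) is perpendicular to $\dS$, and that this happens iff the dual edge of $f$ lies on $\partial\dS = \dg$. This is a standard local computation on the cube complex — counting parity of intersections of the plaquette $f$ with the dual surface $\dS$ — and is the combinatorial heart of the argument, but it is elementary and mirrors the 2d Toric Code case cited in the excerpt.
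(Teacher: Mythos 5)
Your proposal is correct and follows essentially the same route as the paper: evaluate $\omega_\dg$ on each stabilizer term, use the fact that $F_\dS$ commutes with all $A_v$ and anticommutes with $B_f$ exactly when $f$ is dual to an edge of $\dg=\partial\dS$, and count via the edge--face duality to get $2|\dg|$ globally and $2|\dg|_{\overline{\Lambda}}$ locally. Your added remarks on gauge-independence of the bounding surface and on the parity bookkeeping behind the anticommutation are more careful than the paper's terse proof, but they do not change the argument.
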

\begin{proof}
    Consider $\omega_\dg$ having a finite flux string, with $\dS$ being the bounding surface of $\dg$. We have $B_f F_\dS = - F_\dS B_f$ if $f$ is dual to some $\duale \in \dg$, and commute otherwise. This implies for such $f$, $\omega_\dg(B_f) = -1$. Consider a finite region $\overline{\Lambda} \subset \dG$ such that $\dS$ is entirely contained in $\overline{\Lambda}$. Then the energy in $\overline{\Lambda}$ is given by $\omega_\dg(H_{\overline{\Lambda}}) = 2|\dg|$. If $\overline{\Lambda}$ does not entirely contain $\dS$, then the same calculation will give us $\omega_\dg(H_{\overline{\Lambda}}) =2|\dg|_{\overline{\Lambda}}  $ where $|\dg|_{\overline{\Lambda}}:=2|\dg \setminus \mathcal{E}(\overline{\Lambda}^c)|$ is the \# of edges of $\dg$ inside $\overline{\Lambda}$ and $\overline{\Lambda}^c \subset \dG$ is the complement of $\overline{\Lambda}$.
\end{proof}

\begin{proposition}
    Finite flux strings are always closed paths on the dual lattice.
\end{proposition}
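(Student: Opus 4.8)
The plan is to argue by contradiction via an energy/excitation count, using the structure already set up in Proposition~\ref{pro:EnergyBdy} and the definition of the flux string operator $F_{\overline{S}}$. Suppose a finite flux string configuration is realized by $F_{\overline{S}}$ for some finite dual surface $\overline{S}$, and suppose the associated path $\overline{\gamma} = \partial \overline{S}$ is an \emph{open} path rather than a closed one. By the definition of a surface, $\partial\overline{S}$ is always some finite non-self-intersecting closed path on $\dG$; so to even speak of an "open finite flux string" we must be positing a dual path $\overline{\gamma}$ that is not the boundary of any finite surface, and then asking whether the corresponding state can be produced by a local (finite) operator. The cleanest route is: first show that any product of the generating flux operators $\sigma^x_e$ over a \emph{finite} set of edges $T$ creates $B_f$-excitations exactly on the set of faces $f$ whose dual edge $\overline{e}$ has an odd number of its two endpoints "surrounded oddly" — more precisely, the excited faces form the dual boundary $\partial(\text{dual region})$, which is automatically a closed dual path (a cycle).

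Concretely, the key steps I would carry out are: (1) Recall that for a single edge operator $\sigma^x_e$, the faces $f$ anticommuting with it are exactly the four faces containing $e$, i.e.\ the four faces whose dual edges $\overline{f}$ bound the dual face $\overline{e}$ dual to $e$; equivalently the set of excited dual edges is $\partial\overline{e}$ (the boundary of the dual 2-cell), a closed loop. (2) For a finite product $\prod_{e\in T}\sigma^x_e$, the set of excited faces is the symmetric difference (mod 2) of these boundaries, hence equals $\partial\big(\sum_{e\in T}\overline{e}\big)$ in $\mathbb{Z}_2$-chains on $\dG$: it is the boundary of a finite $2$-chain on the dual lattice, and therefore a $\mathbb{Z}_2$-cycle, i.e.\ a disjoint union of finite closed dual paths. (3) Conclude that the locus of flux excitations of any state obtained from $\omega$ by a \emph{local} automorphism $\alpha = \operatorname{Ad}(F)$ with $F$ a finite product of $\sigma^x$'s is always a finite closed dual path (possibly several). (4) Since by definition (Section~\ref{sec:constexcite}) a finite flux string is exactly such a configuration — produced by $F_{\overline{S}}$ with $\overline{S}$ a finite surface — its excitation locus $\overline{\gamma}=\partial\overline{S}$ is a closed path; and conversely, an open dual path cannot arise this way, because an open path is not a $\mathbb{Z}_2$-cycle (it has two boundary endpoints, which would have to be $\mathbb{Z}_2$-boundary of a $1$-chain but $\partial\partial=0$ forbids a $0$-chain of odd weight arising as $\partial\partial$).

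I would phrase the contradiction cleanly using the boundary map already introduced: if $\overline{\gamma}$ were open, then $\sum_{\overline{e}\in\overline{\gamma}}\partial\overline{e} = \partial_1\overline{\gamma} - \partial_0\overline{\gamma}\neq 0$, so $\overline{\gamma}$ is not the boundary of any $2$-chain on $\dG$ (as $\partial\circ\partial = 0$), hence there is no finite surface $\overline{S}$ with $\partial\overline{S} = \overline{\gamma}$, hence no finite flux string operator $F_{\overline{S}}$ producing it; by Definition of the surface/flux-string construction in Section~\ref{sec:constexcite}, every finite flux string \emph{is} produced by such an $F_{\overline{S}}$, so the path must be closed.

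The main obstacle I anticipate is purely bookkeeping: being careful about the duality dictionary (edges $e\in\mathcal{E}(\Gamma)$ $\leftrightarrow$ faces $\overline{f}\in\mathcal{F}(\dG)$, faces $f\in\mathcal{F}(\Gamma)$ $\leftrightarrow$ edges $\overline{e}\in\mathcal{E}(\dG)$) so that the statement "the excited faces of $\prod_{e\in T}\sigma^x_e$ form a closed dual path" is expressed correctly as $\partial$ of a $2$-chain on $\dG$, and in handling the $\mathbb{Z}_2$ arithmetic (cancellation of doubly-covered dual edges) rigorously rather than pictorially. There is also a minor subtlety in that "finite flux string" in this paper is \emph{defined} via $\overline{S}$ finite, so strictly the proposition is close to a tautology once one unwinds the definition; the content worth writing is the homological reason \emph{why} no finite surface can have an open boundary, i.e.\ $\partial^2 = 0$ on the cell complex $\dG$, which is what I would emphasize.
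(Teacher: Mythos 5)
Your proposal is correct and is essentially the paper's own argument: the paper likewise observes that a finite flux string is the boundary of a surface built as a product of faces, that each face has a closed boundary, and that the ``pathwise addition'' (mod-$2$ sum) of closed boundaries is closed --- which is exactly your $\mathbb{Z}_2$-chain computation and the $\partial\circ\partial=0$ observation in more formal dress. Your remark that the statement is nearly tautological given the paper's definition of a surface is also apt; the real content, in both versions, is that every $\mathbb{Z}_2$-sum of faces has a closed boundary.
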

\begin{proof}
Let $\dS$ be a surface with the smallest boundary. It contains only a single face, and $|\dS| = 1$. The boundary of a face is a closed path. Since all surfaces are products of faces, the boundary of products of faces is given by pathwise addition of boundaries of the individual faces. But the pathwise addition of closed paths is always closed. It follows that since finite flux strings are the boundary excitations of a surface operator $F_\dS$, they are closed paths on the dual lattice $\dG$.
\end{proof}

\subsection{Constructing a charged sector $\hilb_\epsilon$}
We can construct states in a charged sector by considering automorphisms $\alpha_v$ defined in section \ref{sec:recap}. We restate the definition here:
\begin{align}
    \alpha_v(O) = \lim_{n \rightarrow \infty} F_{\gamma_v(n)}OF_{\gamma_v(n)} \qquad O \in \cstar{}
\end{align}
Where $\gamma_v(n)$ is a path that starts at $v$, stretches down in the $-\hat{z}$ direction and ends at $v - n \hat{z}$.

\begin{remark}
    The particular limit of $n \rightarrow \infty$ was an arbitrary choice. In the physics literature this is referred to as a gauge choice. In principle, any sufficiently nice path \footnote{The representation $\pi_v$ should lie inside a 3d conelike region. For a thorough 2d treatment, refer to \cite{Naaijkens2010-aq}.} stretching to infinity in an arbitrary direction would have worked.
\end{remark}

\begin{remark}
    We prove the convergence of $\alpha_v$ in lemma \ref{lem:aut_conv} in appendix \ref{app:lattice}.
\end{remark}

\begin{theorem}[\cite{Naaijkens2013-ji}, proposition 3.2.8]
\label{thm:equiv}
    Let $\cstar{}$ be a quasilocal algebra of some spin system, and suppose $\omega_1, \omega_2$ are pure states on $\cstar{}$. Then the following criteria are equivalent:\\

    \begin{itemize}
        \item The corresponding GNS representations $\pi_1, \pi_2$ are equivalent.
        \item For each $\epsilon>0$, there is a $\Lambda_\epsilon \in \Lambda_f$ such that $$|\omega_1(O) - \omega_2(O)| < \epsilon ||O||$$ for all $O \in \cstar{\Lambda}$ and $\Lambda$ a finite region in $\Lambda_\epsilon^c$.
    \end{itemize}
\end{theorem}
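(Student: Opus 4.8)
This is a standard fact about quasi-local (spin) $C^*$-algebras, and I would prove it by treating the two implications separately. Both use that $\cstar{}$ factorizes as $\cstar{\Lambda}\otimes\cstar{\Lambda^c}$ for every finite $\Lambda$, with $\cstar{\Lambda}$ a full matrix algebra, and that operators localized in disjoint regions commute. I would also record at the outset that purity of $\omega_1,\omega_2$ makes $\pi_1,\pi_2$ irreducible, so in each GNS space every nonzero vector is cyclic.

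\textbf{($\Rightarrow$).} Suppose $U\colon\hilb_1\to\hilb_2$ is unitary with $\pi_2 = U\pi_1 U^\dagger$. Then $\omega_2 = \langle\psi,\pi_1(\cdot)\psi\rangle$ for the unit vector $\psi := U^\dagger\Omega_2\in\hilb_1$. The key move is to promote this to an honest unitary of the algebra: by the unitary form of Kadison's transitivity theorem (applicable since $\pi_1$ is irreducible and $\Omega_1,\psi$ are unit vectors) there is a \emph{unitary} $u\in\cstar{}$ with $\pi_1(u)\Omega_1 = \psi$, hence $\omega_2 = \omega_1(u^\dagger\,\cdot\,u)$. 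Given $\epsilon>0$ I would approximate $u$ in norm by a local element and replace the latter by its unitary polar part inside the ambient finite matrix algebra, obtaining a finite region $\Lambda_\epsilon$ and a unitary $v\in\cstar{\Lambda_\epsilon}$ with $\lVert u-v\rVert<\epsilon/2$. For any $O\in\cstar{\Lambda}$ with $\Lambda\subseteq\Lambda_\epsilon^c$ finite we have $[v,O]=0$, so $v^\dagger O v = O$ and $\lvert\omega_1(O)-\omega_2(O)\rvert = \lvert\omega_1(v^\dagger O v)-\omega_1(u^\dagger O u)\rvert \le 2\lVert u-v\rVert\,\lVert O\rVert < \epsilon\lVert O\rVert$, as required.

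\textbf{($\Leftarrow$).} Here it is enough to use the hypothesis for a single value $\epsilon<2$; fixing one, set $\Lambda_0:=\Lambda_\epsilon$, so that by density $\lVert(\omega_1-\omega_2)\restriction\cstar{\Lambda_0^c}\rVert\le\epsilon<2$, where $\cstar{\Lambda_0^c}$ is the norm closure of the local algebras sitting in $\Lambda_0^c$. Using $\cstar{}\cong\cstar{\Lambda_0}\otimes\cstar{\Lambda_0^c}$ with $\cstar{\Lambda_0}\cong M_d(\mathbb{C})$, I would write $\hilb_i=\mathbb{C}^d\otimes\mathcal{K}_i$ with $\pi_i(\cstar{\Lambda_0})=M_d(\mathbb{C})\otimes\mathds{1}$ and $\pi_i(\cstar{\Lambda_0^c})=\mathds{1}\otimes\sigma_i(\cstar{\Lambda_0^c})$, and check that irreducibility of $\pi_i$ forces each $\sigma_i$ to be an irreducible representation of $\cstar{\Lambda_0^c}$. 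Since $\omega_i\restriction\cstar{\Lambda_0^c}$ is a normal state in $\sigma_i$ (its density operator on $\mathcal{K}_i$ being the partial trace of $\lvert\Omega_i\rangle\langle\Omega_i\rvert$ over $\mathbb{C}^d$), its GNS representation is quasi-equivalent to $\sigma_i$. As the two states are at norm distance $<2$ on $\cstar{\Lambda_0^c}$, their GNS representations cannot be disjoint; disjointness being a quasi-equivalence invariant, $\sigma_1$ and $\sigma_2$ are then not disjoint, and being irreducible they are unitarily equivalent. Finally, an intertwiner $\mathcal{K}_1\to\mathcal{K}_2$ for $\sigma_1,\sigma_2$ tensored with $\mathds{1}_{\mathbb{C}^d}$ intertwines $\pi_1$ and $\pi_2$ on $\cstar{\Lambda_0}\otimes\cstar{\Lambda_0^c}$, hence on all of $\cstar{}$ by continuity, so $\pi_1\cong\pi_2$.

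\textbf{Main obstacle.} The $(\Rightarrow)$ direction is essentially bookkeeping once one extracts a genuine unitary of $\cstar{}$ via transitivity; working directly with $\psi$ and approximating it by $\pi_1(A)\Omega_1$ would instead force one to fight the non-unitarity of $A$. The substantive part is $(\Leftarrow)$: one must (i) set up the tensor factorization and verify that the "infinite-volume remainders" $\sigma_i$ remain irreducible, and (ii) invoke the right general $C^*$-facts — that states at norm distance $<2$ have non-disjoint GNS representations, that a normal state's GNS representation is quasi-equivalent to the ambient irreducible representation, and that non-disjoint irreducible representations are equivalent. The genuinely delicate point is that the restriction of a pure state to a subalgebra is generally mixed, so the argument cannot be phrased purely in terms of vector states and must pass through quasi-equivalence.
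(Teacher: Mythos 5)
The paper offers no proof of this statement at all --- it is imported verbatim from the cited reference \cite{Naaijkens2013-ji}, so there is nothing internal to compare against. Your argument is correct and is essentially the standard proof of that proposition: Kadison transitivity plus norm-approximation of the intertwining unitary by a local unitary for the forward direction, and for the converse the factorization $\cstar{}\cong\cstar{\Lambda_0}\otimes\cstar{\Lambda_0^c}$ together with the facts that states at norm distance $<2$ have non-disjoint GNS representations and that non-disjoint irreducible representations are unitarily equivalent.
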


\begin{lemma}
    Representations $\pi_v:= \pi_\omega \circ \alpha_v, \pi_\omega$ are inequivalent
\end{lemma}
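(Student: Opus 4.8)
The plan is to use Theorem~\ref{thm:equiv} (the characterization of equivalence of pure-state representations via local approximation in the complement of a finite region) to \emph{deny} equivalence, by exhibiting an observable arbitrarily far out on which $\omega_v$ and $\omega$ differ by an $O(1)$ amount. Concretely, recall $\omega_v(O) = \langle\Omega_\omega, \pi_\omega(\alpha_v(O))\Omega_\omega\rangle$, where $\alpha_v(O) = \lim_n F_{\gamma_v(n)} O F_{\gamma_v(n)}$ and $\gamma_v(n)$ is the straight path from $v$ down to $v - n\hat z$. The key fact is that for the vertex $w_n := v - n\hat z$ at the bottom of $\gamma_v(n)$, the operator $A_{w_n}$ anticommutes with $F_{\gamma_v(n)}$ (the path has exactly one endpoint at $w_n$, the other being $v$, which moves off to infinity). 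Hence $\alpha_v(A_{w_n}) = -A_{w_n}$ for all large $n$, so $\omega_v(A_{w_n}) = -\omega(A_{w_n}) = -1$, whereas $\omega(A_{w_n}) = +1$.

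First I would make precise that $A_{w_n}$ is a local observable supported on the six edges around $w_n$, with $\|A_{w_n}\| = 1$, and that $w_n \to \infty$, so for any finite region $\Lambda_\epsilon$ one can choose $n$ large enough that $A_{w_n} \in \cstar{\Lambda}$ for some finite $\Lambda \subset \Lambda_\epsilon^c$. Then $|\omega_v(A_{w_n}) - \omega(A_{w_n})| = |-1 - 1| = 2$, which is not less than $\epsilon\|A_{w_n}\| = \epsilon$ for $\epsilon < 2$. By Theorem~\ref{thm:equiv}, together with purity of both $\omega$ and $\omega_v$ (Theorem~\ref{thm:purity}), the representations $\pi_v$ and $\pi_\omega$ cannot be equivalent. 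I would also note that the conclusion is independent of the gauge choice in $\alpha_v$: any path stretching to infinity leaves, for each finite region, a ``moving endpoint'' charge detectable by the corresponding $A_w$ arbitrarily far out.

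I would organize the argument as: (1) compute $\alpha_v(A_{w_n}) = -A_{w_n}$ using the anticommutation $A_{w_n} F_{\gamma_v(n)} = -F_{\gamma_v(n)} A_{w_n}$ and the fact that $F_{\gamma_v(m)}$ for $m \ge n$ still anticommutes with $A_{w_n}$ only if $w_n$ is an endpoint --- wait, here one must be slightly careful, since as $m$ grows past $n$ the bottom endpoint of $\gamma_v(m)$ moves \emph{below} $w_n$, so $w_n$ becomes an interior vertex of $\gamma_v(m)$ and $A_{w_n}$ then \emph{commutes} with $F_{\gamma_v(m)}$. So instead I would evaluate on $A_v$ itself, the vertex at the \emph{fixed} top of all the paths $\gamma_v(m)$: for every $m$, $v$ is an endpoint of $\gamma_v(m)$, so $A_v F_{\gamma_v(m)} = -F_{\gamma_v(m)} A_v$, hence $\alpha_v(A_v) = -A_v$ and $\omega_v(A_v) = -1 \ne 1 = \omega(A_v)$. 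But $A_v$ is a \emph{fixed} local observable, so this alone does not place the discrepancy in the complement of an arbitrary $\Lambda_\epsilon$.

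The main obstacle is therefore this: to apply Theorem~\ref{thm:equiv} I need the discrepancy to be witnessed \emph{arbitrarily far away}, not at the fixed vertex $v$. The resolution is that the choice of basepoint $v$ in $\alpha_v$ is immaterial up to unitary equivalence --- equivalently, the representation $\pi_v$ is translation-covariant in the sense that translating $v$ gives an equivalent representation --- so if $\pi_v \sim \pi_\omega$ then also $\pi_{v'} \sim \pi_\omega$ for every vertex $v'$, and then $\omega_{v'}(A_{v'}) = -1$ gives a discrepancy at $v'$, which we can take arbitrarily far out, contradicting Theorem~\ref{thm:equiv}. Alternatively, and more cleanly, one argues directly: suppose for contradiction $\pi_v \sim \pi_\omega$; by Theorem~\ref{thm:equiv} applied to the pure states $\omega$ and $\omega_v$, for $\epsilon = 1$ there is a finite $\Lambda_1$ with $|\omega(O) - \omega_v(O)| < \|O\|$ for all $O$ localized in $\Lambda_1^c$; but choosing a charged-sector witness --- e.g. a product of $A$-operators along a short path connecting $v$ to a vertex $u$ far outside $\Lambda_1$, which detects the string $\gamma_v$ crossing it --- yields an $O$ with $\|O\|=1$ and $|\omega(O)-\omega_v(O)|$ bounded below by a constant, the contradiction. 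I expect the cleanest writeup uses the translation-covariance observation to move the defect out to infinity; verifying that covariance (that $\alpha_v$ and $\alpha_{v'}$ yield unitarily equivalent representations when both basepoints are in the bulk) is the one point needing a short argument, and it follows from the fact that $\alpha_v \circ \alpha_{v'}^{-1}$ differs from an inner automorphism implemented by a finite string operator $F_\gamma$ only by a convergent tail, hence is inner or at least implementable.
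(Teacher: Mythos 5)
There is a genuine gap. You correctly diagnose why the two naive witnesses fail ($A_{w_n}$ at the moving endpoint commutes with $F_{\gamma_v(m)}$ once $m>n$, and $A_v$ sits at a fixed location so it cannot be pushed into $\Lambda_\epsilon^c$), but both of your proposed repairs break down. The translation-covariance argument is circular: if $\pi_v\sim\pi_\omega$ then indeed $\pi_{v'}\sim\pi_\omega$, but the discrepancy $\omega_{v'}(A_{v'})=-1\neq\omega(A_{v'})$ lives at the \emph{single fixed} vertex $v'$, and Theorem~\ref{thm:equiv} applied to the pair $(\omega_{v'},\omega)$ only requires agreement outside \emph{some} finite region, which may simply be chosen to contain $v'$. (Two equivalent states are perfectly allowed to disagree on a fixed local observable --- compare $\omega$ and $\omega_\gamma$ for finite $\gamma$.) To contradict $\pi_v\sim\pi_\omega$ you must exhibit, for the \emph{fixed} pair $(\omega_v,\omega)$, observables arbitrarily far from $v$ on which they differ by $O(1)$, and no uniformity over $v'$ is available since the intertwining unitaries $F_{\gamma(v,v')}$ grow with $v'$. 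Your second repair, ``a product of $A$-operators along a short path connecting $v$ to a vertex $u$ far outside $\Lambda_1$,'' is not localized in $\Lambda_1^c$ (it contains $A_v$), and any product of $A_w$'s avoiding the endpoint $v$ commutes with the limiting automorphism.

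The missing idea --- and the paper's actual witness --- is the \emph{closed surface} flux operator $F_{\dS}$ on a closed dual surface $\dS$ enclosing $v$ but lying entirely in $\Lambda_\epsilon^c$. The half-infinite string $\gamma_v(n)$ pierces $\dS$ exactly once for all large $n$, so $F_{\gamma_v(n)}F_{\dS}=-F_{\dS}F_{\gamma_v(n)}$, giving $\omega_v(F_{\dS})=-\omega(F_{\dS})=-1$ while $\omega(F_{\dS})=+1$; the discrepancy $2\norm{F_{\dS}}$ is witnessed at arbitrary distance from $v$, and Theorem~\ref{thm:equiv} concludes. Your ``product of $A$-operators'' instinct is in fact one step away from this: the product $\prod_{w\in V_0}A_w$ over all vertices $V_0$ in a large ball containing $v$ equals exactly such a closed-surface operator supported only on the boundary of the ball (all interior $\sigma^x$'s cancel in pairs), which is the Gauss-law detection of the enclosed charge. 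Had you taken the product over a solid region enclosing $v$ rather than along a path, your argument would have coincided with the paper's.
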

\begin{proof}
    Consider $\omega_v := \omega \circ \alpha_v$. Let its GNS representation be $\pi_v$. We will use theorem \ref{thm:equiv} to prove this lemma. Consider a spherical region $\Lambda_\epsilon$ centered at $v$. We can then consider $A = F_{\dS} \in \cstar{\Lambda}$ as a flux operator on a closed surface $\dS$ going around $\Lambda_\epsilon$, as shown in figure \ref{fig:spherical_regions}. Here $\Lambda$ is a finite region in $\Lambda_\epsilon^c$. We then have:
    \begin{align*}
        |\omega \circ \alpha_v (F_{\dS}) - \omega(F_\dS)| &= |\lim_{n \rightarrow \infty} \omega(F_{\gamma_v(n)}F_{\dS} F_{\gamma_v(n)}) - \omega(F_\dS)|\\
        &= |\lim_{n \rightarrow \infty} -\omega(F_{\dS}F_{\gamma_v(n)} F_{\gamma_v(n)}) - \omega(F_\dS)|\\
        &= | -\omega(F_{\dS}) - \omega(F_\dS)|\\
        &= 2 ||F_{\dS}|| 
    \end{align*}
    Which is independent of $\epsilon$. From theorem \ref{thm:equiv} $[\pi_v] \neq [\pi_\omega]$. This concludes our proof. 
\end{proof}

\begin{figure}[t!]
    \centering

        \centering
    \begin{subfigure}[t]{0.45\textwidth}
        \centering
\includegraphics[ width=0.4\textwidth]{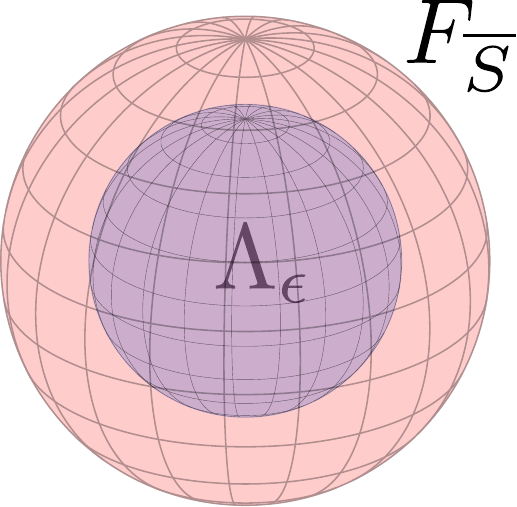}
    \caption{A blue spherical region $\Lambda_\epsilon$ and a flux operator $F_\dS$ supported on $\Lambda_\epsilon^c$ colored in red.}
    \label{fig:spherical_regions}

    \end{subfigure}%
    ~ 
    \begin{subfigure}[t]{0.45\textwidth}
        \centering
\includegraphics[ width=0.4\textwidth]{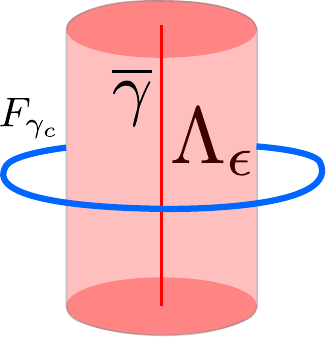}
    \caption{A red cylindrical region $\Lambda_\epsilon$ and the string operator $F_{\gamma_c}$ supported on $\Lambda_\epsilon^c$ colored in blue.}
    \label{fig:cylindrical_regions}
    \end{subfigure}

    \caption{}

\end{figure}

These automorphisms are involutary ($\alpha_v^2 = \mathds{1}$) and translation covariant ($T_x \circ \alpha_v = \alpha_{v+x} \circ T_x$) for $x \in \mathcal{V}$. Different automorphisms are related to each other via a unitary transformation: $$U \alpha_v U^\dagger = \alpha_{v+x}$$ where $U = F_{\gamma(v,v+x)}$ and $\gamma(v,v+x)$ is a path from $v$ to $v+x$.\\

There is a family of purely charged ground states in $\hilb_\epsilon$, given by $$\omega_v (O) := \langle \Omega_\omega, \pi_v (O) \Omega_\omega\rangle$$ the ground states are distinguished by $A_{v'}$: $\omega_v (A_{v'}) = 1 -2\delta_{v,v'}$. 

\begin{remark}
\label{rem:aut_commute}
    $\alpha_v \circ \alpha_{v'} = \alpha_{v'} \circ \alpha_{v}$ since charge operators commute. Similarly, $\alpha_{\dg} \circ \alpha_{\dg'} = \alpha_{\dg'} \circ \alpha_{\dg}$ since the flux strings commute. 
\end{remark}

\begin{remark}
    $\alpha_v \circ \alpha_{\dg} = \alpha_{\dg} \circ \alpha_v$ for a finite flux string $\dg$. This is a special case of lemma \ref{lem:aut_commute}, which we will prove in the next subsection.
\end{remark}

We now recall and prove the theorem \ref{thm:charged-sectors} for purely charged states.

\purelycharged*
\begin{proof}
    Let a general state with $n$ charges and $m$ finite flux strings $\dg_i$ be given by $$\omega_{v_1,\cdots, v_n; \dg_1,\cdots, \dg_m}(O):= \langle \Omega_\omega, \pi_{v_1,\cdots, v_n; \dg_1,\cdots, \dg_m} O \Omega_\omega \rangle \qquad (O) \in \cstar{}$$ where $\pi_{v_1,\cdots, v_n; \dg_1,\cdots, \dg_m}:= \pi_\omega \circ \alpha_{v_1} \circ \cdots \circ \alpha_{v_n} \circ \alpha_{\dg_1} \circ \cdots \circ \alpha_{\dg_m}$. \\

    We have $[\pi_{v_1,\cdots, v_n; \dg_1,\cdots, \dg_m}] = [\pi_{v_1,\cdots, v_n}]$ as the flux strings are finite. If $n$ is even, then $[\pi_{v_1,\cdots, v_n}] = [\pi_\omega]$ with the unitary given by $U = \pi_\omega(\prod_{i=2}^n F_{\gamma(v_i,v_1)})$. If $n$ is odd, $[\pi_{v_1,\cdots, v_n}] = [\pi_v]$ with the unitary given by $U = \pi_\omega(\prod_{i=1}^n F_{\gamma(v_i,v)})$. Since $[\pi_\omega] \neq [\pi_v]$, there are two possible ground state sectors: $\mathcal{H}_e, \mathcal{H}_\omega$.
\end{proof}
\section{1 string configurations}
\label{sec:1str}
Through the remainder of this paper, we will work with the dual lattice $\dG$ as it is more convenient for constructing the infinite flux string states. Since a dual-path in $\Gamma$ is just a path in $\dG$, we will be referring to $\dg \in \dG$ as paths for the sake of brevity. However, to not confuse the reader, we still choose to denote objects in the dual lattice by $(\overline{\cdot})$.\\

In this section, we will aim to first build an infinite flux string state $\omega_\dg$ using infinite surface automorphisms. We will then focus on determining the necessary and sufficient conditions for $\omega_\dg$ to be a ground state. Finally we will try to classify the ground state sectors with a single path $\dg$.
\subsection{Building an infinite flux string state}
Building a new sector is a little more involved for flux strings. Let's first understand how to build a state in the new sector using an example. Consider an infinite path $\dg$ as defined in definition \ref{def:infpath} going in the $+\hat{z}$ direction, as depicted in figure \ref{fig:parallel_to_z}. We will refer to figure \ref{fig:finite_surface_for_gamma} for the proceeding construction of the infinite flux string $\dg$.\\

We start by defining a series of finite surface operators $\dS_{\dg_n}$, that have a finite section $\dg_n$ of $\dg$ as a part of their boundary. We also have $\dS_{\dg_n} \subset \dS_{\dg_{n+1}}$. Upon taking the limit $n \rightarrow \infty$ we obtain $\dg$ as the only finite boundary of $\lim_{n\rightarrow \infty} \dS_{\dg_n}$.\\

\begin{figure}[t!]
    \centering

        \centering
    \begin{subfigure}[t]{0.5\textwidth}
        \centering
\includegraphics[ width=0.5\textwidth]{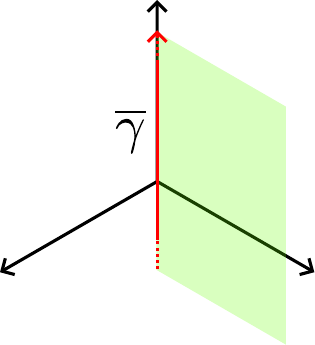}
\caption{The configuration we wish to achieve}
    \label{fig:parallel_to_z}

    \end{subfigure}%
    ~ 
    \begin{subfigure}[t]{0.5\textwidth}
        \centering
\includegraphics[ width=0.5\textwidth]{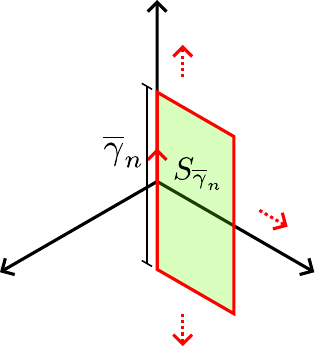}
\caption{The surface whose boundary will give us finite sections of $\dg$}

    \label{fig:finite_surface_for_gamma}
    \end{subfigure}

\end{figure}


Consider now the automorphism $$\alpha_{\dg}(O) := \lim_{n\rightarrow \infty} F_{\dS_{\dg_n}} O F_{\dS_{\dg_n}} \qquad O \in \cstar{}$$ we can obtain a new state $\omega_\dg := \langle \Omega_\omega, \pi_\dg (\cdot) \Omega_\omega \rangle$ where $\pi_\dg := \pi_\omega \circ \alpha_\dg$.

\begin{remark}
    We prove the convergence of $\alpha_\dg$ in lemma \ref{lem:aut_conv_2} in appendix \ref{app:lattice}.
\end{remark}

\begin{lemma}
    Representations $\pi_\dg, \pi_\omega$ are inequivalent.
    \label{lem:path_inequiv}
\end{lemma}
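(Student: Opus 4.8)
The plan is to argue by contradiction using the equivalence criterion for pure states (Theorem \ref{thm:equiv}), in the same spirit as the proof that $\pi_v$ and $\pi_\omega$ are inequivalent. Both $\omega$ and $\omega_\dg$ are pure by Theorem \ref{thm:purity}, so their GNS representations are irreducible and Theorem \ref{thm:equiv} applies: if $\pi_\dg$ were equivalent to $\pi_\omega$, then for every $\epsilon>0$ there would be a finite region $\Lambda_\epsilon\in\Lambda_f$ with $|\omega_\dg(O)-\omega(O)|<\epsilon\|O\|$ for all $O\in\cstar{\Lambda}$ supported on a finite region $\Lambda$ in $\Lambda_\epsilon^c$. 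The strategy is to produce, for every such $\Lambda_\epsilon$, a single observable of norm $1$, supported arbitrarily far from $\Lambda_\epsilon$, on which the two states differ by exactly $2$; taking $\epsilon<2$ gives the contradiction.

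Concretely, I would use the fact that $\dg$ is an infinite path, so the edges $\{\dg(t)\}_{t\in\mathbb{Z}}$ of $\dG$ escape every finite region: given $\Lambda_\epsilon$, choose $t_0\in\mathbb{Z}$ such that the edge $\dg(t_0)$, and hence the unique face $f_0\in\mathcal{F}(\Gamma)$ dual to it together with its four bounding edges, lies entirely in $\Lambda_\epsilon^c$. Take $O=B_{f_0}$, which is supported on a finite region in $\Lambda_\epsilon^c$ with $\|B_{f_0}\|=1$. Then I would compute $\omega_\dg(B_{f_0})=\lim_{n\to\infty}\omega\big(F_{\dS_{\dg_n}}B_{f_0}F_{\dS_{\dg_n}}\big)$. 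For $n$ large enough that $\dg(t_0)\in\dg_n\subseteq\partial\dS_{\dg_n}$, the face $f_0$ is dual to an edge of $\partial\dS_{\dg_n}$, so the anticommutation relation recalled in Section \ref{sec:constexcite} gives $F_{\dS_{\dg_n}}B_{f_0}F_{\dS_{\dg_n}}=-B_{f_0}$ (using $F_{\dS_{\dg_n}}^2=\mathds{1}$). Hence $\omega_\dg(B_{f_0})=-\omega(B_{f_0})$, and since $\omega$ is the frustration-free ground state, $\omega(B_{f_0})=1$, so $|\omega_\dg(B_{f_0})-\omega(B_{f_0})|=2=2\|B_{f_0}\|$. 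As this holds for every candidate $\Lambda_\epsilon$, Theorem \ref{thm:equiv} forces $\pi_\dg\not\cong\pi_\omega$. (One could equivalently detect the string with the small Wilson loop $F_{\partial f_0}=B_{f_0}$ viewed as linking $\dg$ once; the computation is the same.)

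The only point that needs genuine care is the claim that the flux excitation of $\omega_\dg$ can always be found outside any prescribed finite region while keeping the detecting observable local: this is exactly the content of $\dg$ being an infinite (hence unbounded) path, so $\dg[-n,n]$ has diameter tending to $\infty$, and any fixed finite $\Lambda_\epsilon$ is eventually avoided. Everything else is bookkeeping with the $B_f$--$F_{\dS}$ anticommutation relation and the fact that the surface operators square to the identity, so I do not expect a serious obstacle here.
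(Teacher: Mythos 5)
Your argument is correct and rests on the same skeleton as the paper's proof: both invoke Theorem \ref{thm:equiv} for the pure states $\omega$ and $\omega_\dg$ and then exhibit, outside every candidate finite region $\Lambda_\epsilon$, a norm-one local observable on which the two states differ by $2$. The only real difference is the witness. The paper uses a closed charge loop $F_{\gamma_c}$ encircling the infinite string, which anticommutes with every $F_{\dS_{\dg_n}}$ for large $n$ because $\gamma_c$ links $\dg$ once; you instead use a single plaquette $B_{f_0}$ with $f_0$ dual to an edge $\dg(t_0)$ chosen far out along the string, and read off $\omega_\dg(B_{f_0})=-1$ versus $\omega(B_{f_0})=1$ from the anticommutation relation already recorded in Section \ref{sec:constexcite}. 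Your version is slightly more elementary (a four-edge observable, no need to draw a loop around a cylindrical neighbourhood of $\dg$), and it correctly isolates the one nontrivial point, namely that the excited plaquettes of an infinite non-self-intersecting path escape every finite region, which is what fails for a finite flux loop and is why no such argument applies there. The paper's Wilson-loop witness is the more robust, ``topological'' detector: it is insensitive to finite deformations of $\dg$ (it would return $-1$ for any path-equivalent representative, whereas your plaquette must be re-chosen on the deformed string) and it is the same detector used throughout the superselection analysis, which is presumably why the author prefers it. For the lemma as stated, either observable does the job.
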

\begin{proof}
    Consider $\omega_\dg := \omega \circ \alpha_\dg$. Let its GNS representation be $\pi_\dg$. Consider a cylindrical region $\Lambda_\epsilon$ centered around $\dg$. We can then consider $A = F_{\gamma_c} \in \cstar{\Lambda}$ as a charge operator on a closed string $\gamma_c$ going around $\Lambda_\epsilon$, as shown in figure \ref{fig:cylindrical_regions}. Here $\Lambda$ is a finite region in $\Lambda_\epsilon^c$. We then have:
    \begin{align*}
        |\omega \circ \alpha_\dg (F_{\gamma_c}) - \omega(F_{\gamma_c})| &= |\lim_{n \rightarrow \infty} \omega(F_{\dS_{\dg_n}} F_{\gamma_c} F_{\dS_{\dg_n}}) - \omega(F_{\gamma_c})|\\
        &= |\lim_{n \rightarrow \infty} -\omega(F_{\gamma_c}F_{\dS_{\dg_n}}F_{\dS_{\dg_n}}) - \omega(F_{\gamma_c})|\\
        &= | -\omega(F_{\gamma_c}) - \omega(F_{\gamma_c})|\\
        &= 2 ||{F_{\gamma_c}}|| 
    \end{align*}
    Which is independent of $\epsilon$. From theorem \ref{thm:equiv} $[\pi_\dg] \neq [\pi_\omega]$. This concludes our proof. 
\end{proof}

$\omega_\dg$ thus lies in a new sector, which we denote by $\hilb_\dg$.\\

The bounding surface needs to be defined for the specific infinite flux line $\dg$. This is known as choosing the gauge. Picking a surface is a matter of choice, as the operators in $\cstar{}$ cannot physically detect the bounding surface. This can always be done as long as one can find a 3d wedge that $\dg$ lies entirely within. There is no universally "nice" gauge choice for a generic infinite path $\dg$ contrasting the case for charged excitations.

\subsection{Necessary and sufficient conditions for a ground state}
\label{sec:CondForGS1}

\begin{lemma}
\label{lem:PathEqvSameSector}
    If two paths $\dg_1, \dg_2$ are path equivalent, then $[\pi_{\dg_1}] = [\pi_{\dg_2}]$.
\end{lemma}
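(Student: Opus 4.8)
The plan is to show that if $\dg_1$ and $\dg_2$ are path equivalent, then their infinite-string automorphisms differ only by an inner automorphism implemented by a \emph{local} (finite) flux operator, which immediately gives unitary equivalence of the representations $\pi_{\dg_1} = \pi_\omega \circ \alpha_{\dg_1}$ and $\pi_{\dg_2} = \pi_\omega \circ \alpha_{\dg_2}$. Concretely, recall that path equivalence means $r = \min\{t \mid \dg_1(t) \notin \dg_2\}$ and $s = \max\{t \mid \dg_1(t) \notin \dg_2\}$ are both finite; so outside the finite window $[r,s]$ the two paths coincide (as sets of edges), and they disagree only on the finite segments $\dg_1[r,s]$ and the corresponding piece of $\dg_2$. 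The first step is to make this precise: the symmetric difference of the edge sets of $\dg_1$ and $\dg_2$ is contained in a finite region, and moreover $\dg_1[r,s]$ together with the matching sub-path of $\dg_2$ forms a finite closed (dual) path $\dg_c$, since both sub-paths share the same endpoints $\partial_0 \dg_1(r) $-side and $\partial_1 \dg_1(s)$-side vertices (the points where the paths rejoin).

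Next I would fix the gauges: let $\dS^{(1)}_{\dg_{1,n}}$ and $\dS^{(2)}_{\dg_{2,n}}$ be the bounding-surface sequences defining $\alpha_{\dg_1}$ and $\alpha_{\dg_2}$ respectively. For $n$ large enough, the finite surface $\dS^{(1)}_{\dg_{1,n}} \,\triangle\, \dS^{(2)}_{\dg_{2,n}}$ (symmetric difference, or more carefully the surface obtained by gluing $\dS^{(1)}_{\dg_{1,n}}$ to $\dS^{(2)}_{\dg_{2,n}}$ along their common far boundary) stabilizes to a \emph{fixed finite} surface $\dS_c$ whose boundary is exactly the finite closed path $\dg_c$ above. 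Then $F_{\dS^{(1)}_{\dg_{1,n}}} = F_{\dS_c} \, F_{\dS^{(2)}_{\dg_{2,n}}}$ up to signs for all large $n$ (the $\sigma^x$ operators on the overlapping part cancel), and taking the limit gives $\alpha_{\dg_1}(O) = \alpha_{F_{\dS_c}} \circ \alpha_{\dg_2}(O)$ for all $O \in \cstar{}$, where $\alpha_{F_{\dS_c}}(\cdot) = F_{\dS_c}(\cdot)F_{\dS_c}$ is an inner automorphism by the local operator $F_{\dS_c} \in \cstar{loc}$. Hence $\pi_{\dg_1} = \pi_\omega \circ \alpha_{F_{\dS_c}} \circ \alpha_{\dg_2} = U \, \pi_{\dg_2}(\cdot) \, U^\dagger$ with $U = \pi_\omega(F_{\dS_c})$ a unitary on $\hilb_\omega$, giving $[\pi_{\dg_1}] = [\pi_{\dg_2}]$.

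The main obstacle I anticipate is the bookkeeping around the gauge choice: $\alpha_{\dg_1}$ and $\alpha_{\dg_2}$ are \emph{a priori} defined with independently chosen surface sequences $\dS^{(1)}$, $\dS^{(2)}$, and one must check that the difference of the two surfaces really does stabilize to a finite surface and that the limiting manipulation $F_{\dS^{(1)}_{\dg_{1,n}}} F_{\dS^{(2)}_{\dg_{2,n}}} \to F_{\dS_c}$ is legitimate inside the automorphism limit (one needs the convergence of $\alpha_{\dg_i}$ from Lemma~\ref{lem:aut_conv_2}, plus the fact that for each fixed local $O$ the operators $F_{\dS^{(i)}_{\dg_{i,n}}} O F_{\dS^{(i)}_{\dg_{i,n}}}$ are eventually $n$-independent). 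A clean way to sidestep subtleties is to observe that the choice of gauge does not affect the equivalence class of $\pi_{\dg_i}$ at all — any two gauge choices for the \emph{same} path already differ by an inner automorphism by a finite flux operator (the bounding surfaces differ by a closed surface, which has a finite flux operator up to the far part that cancels in the limit) — so we may assume $\dS^{(1)}$ and $\dS^{(2)}$ agree outside $[r,s]$ from the start, after which the computation above is immediate. I would also double-check the orientation/sign conventions so that the overlapping $\sigma^x$ factors genuinely cancel rather than merely commute, but since each $\sigma^x_e$ squares to $\mathds{1}$ this is routine.
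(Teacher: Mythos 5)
Your proposal is correct and takes essentially the same approach as the paper: the symmetric difference $(\dg_1 \cup \dg_2)\setminus(\dg_1\cap\dg_2)$ is a finite closed path bounding a finite surface $\dS$, and $U = \pi_\omega(F_\dS)$ is the intertwining unitary. Your extra care about reconciling the two gauge (bounding-surface) choices and the convergence of the automorphism limits is a legitimate refinement of details the paper's two-line proof leaves implicit, but the underlying argument is identical.
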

\begin{proof}
If $\dg_1, \dg_2$ are path equivalent, then we can construct a finite surface $\dS$ whose boundary is $\partial \dS = (\dg_1 \cup \dg_2) \setminus (\dg_1 \cap \dg_2)$. We then have the required unitary $U = \pi_\omega (F_\dS)$ such that $\pi_{\dg_1} = U \pi_{\dg_2} U^\dagger$.    
\end{proof}

\begin{lemma}
\label{lem:energy-diff}
    If two paths $\dg_1, \dg_2$ are path equivalent, then one can choose a finite region $\overline{\Lambda}$ which satisfies $\dg_1 (t) \in \dg_2$ for all $\dg_1 (t) \in \edge(\overline{\Lambda}^c)$. The energy difference $\Delta E$ between $\omega_{\dg_1},\omega_{\dg_2}$ is given by $\Delta E = \omega_{\dg_1}(H_{\overline{\Lambda}}) - \omega_{\dg_2} (H_{\overline{\Lambda}})$. $\Delta E$ is finite and independent of the choice of region $\overline{\Lambda}$.
\end{lemma}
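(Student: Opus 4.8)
The plan is to exploit that path equivalence makes $\dg_1$ and $\dg_2$ differ only inside a finite region, so the associated flux-string states differ only by a local perturbation, and then to localize the energy computation. First I would use the definition of path equivalence: the integers $r = \min\{t \mid \dg_1(t) \notin \dg_2\}$ and $s = \max\{t \mid \dg_1(t) \notin \dg_2\}$ are both finite. Hence the symmetric difference $(\dg_1 \cup \dg_2)\setminus(\dg_1\cap\dg_2)$ is a finite set of edges, contained in the finite segments $\dg_1[r,s]$ and $\dg_2[r',s']$ (with $r',s'$ the analogous integers for $\dg_2$, also finite by symmetry of the relation). Choose $\overline{\Lambda}$ to be any finite region of $\dG$ large enough to contain all these edges together with all dual faces $\overline{f}$ adjacent to them; then by construction $\dg_1(t)\in\dg_2$ whenever $\dg_1(t)\in\edge(\overline{\Lambda}^c)$, which is the claimed property of $\overline{\Lambda}$.

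Next I would compute the energy difference. Outside $\overline{\Lambda}$, the two configurations $\dg_1$ and $\dg_2$ coincide as sets of edges, so for every vertex $v$ and every face $f$ with support in $\overline{\Lambda}^c$ we have $\omega_{\dg_1}(A_v)=\omega_{\dg_2}(A_v)$ and $\omega_{\dg_1}(B_f)=\omega_{\dg_2}(B_f)$ — indeed both states assign $A_v = 1$ everywhere (the infinite flux string is an $F_{\dS}$-type automorphism, which commutes with all $A_v$), and $B_f$ takes the value $-1$ precisely on the faces dual to edges of the string, which agree for $\dg_1,\dg_2$ outside $\overline{\Lambda}$ by the previous paragraph. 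Therefore, for any finite region $\overline{\Lambda}'\supseteq\overline{\Lambda}$, the difference $\omega_{\dg_1}(H_{\overline{\Lambda}'}) - \omega_{\dg_2}(H_{\overline{\Lambda}'})$ only receives contributions from terms supported in $\overline{\Lambda}$, so it equals $\omega_{\dg_1}(H_{\overline{\Lambda}}) - \omega_{\dg_2}(H_{\overline{\Lambda}})$, which is a finite sum of bounded terms, hence finite. This also shows independence of the choice of $\overline{\Lambda}$: any two admissible choices can be compared through a common larger region, and the extra terms cancel. I would then simply \emph{define} $\Delta E$ to be this common value, and note that by Proposition \ref{pro:EnergyBdy}-type reasoning the contributions localize exactly at the faces dual to the edges in the symmetric difference, making the value manifestly finite.

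The only mildly delicate point — and the step I expect to need the most care — is justifying that $\omega_{\dg_i}(B_f)$ really is determined locally, i.e. that it depends only on which edges of $\dg_i$ touch $f$ and not on the (gauge-dependent, infinitely extended) bounding surface $\dS_{\dg_i,n}$ used to define the automorphism $\alpha_{\dg_i}$. This follows from the construction in Section \ref{sec:constexcite}: $B_f$ anticommutes with $F_{\dS}$ exactly when $f$ is dual to an edge of $\partial\dS$, and in the limit $n\to\infty$ the finite part of the boundary converges to $\dg_i$, so $\omega_{\dg_i}(B_f) = -1$ iff $f$ is dual to an edge of $\dg_i$ and $+1$ otherwise, independently of the gauge. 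Once this is in hand, the cancellation outside $\overline{\Lambda}$ is immediate and the finiteness and well-definedness of $\Delta E$ follow at once.
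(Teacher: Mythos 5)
Your proposal is correct and follows essentially the same route as the paper: choose $\overline{\Lambda}$ containing the finite symmetric difference of the two paths, observe that the local energy contributions of $\omega_{\dg_1}$ and $\omega_{\dg_2}$ agree term by term outside $\overline{\Lambda}$ (the paper packages this as Proposition \ref{pro:EnergyBdy}), and conclude that the difference computed in any larger region reduces to the one in $\overline{\Lambda}$. Your explicit remark that $\omega_{\dg_i}(B_f)$ is gauge-independent (determined by $\dg_i$ alone, not the bounding surface) is a point the paper leaves implicit, but it does not change the argument.
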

\begin{proof}
    Let $\dg_1, \dg_2$ be path equivalent. Then there exist $r,s$ with $r < s$ such that for all $t<r, t>s$, $\dg_1 (t) \in \dg_2$. We can similarly define $r',s'$ with $r' < s'$ such that $\dg_2(t') \in \dg_1$ for all $t'<r', t'> s'$. Choose a finite region $\overline{\Lambda}$ such that $\dg_1(t), \dg_2(t') \in \mathcal{E}(\overline{\Lambda})$ for all $r<t<s$ and $ r'<t'<s'$ and $\dg_1(t), \dg_2(t') \in \mathcal{E}(\overline{\Lambda}^c)$ otherwise. This is the required region where $\dg_1, \dg_2$ share all edges outside of $\overline{\Lambda}$. It is minimal in the sense that only the edges in $\dg_1, \dg_2$ that are different are in $\mathcal{E}(\dL)$ and the shared edges are in $\mathcal{E}(\dL^c)$.\\

     Choose a region $\dL' \supset \dL$. This region will also satisfy $\dg_1(t) \in \dg_2$ for all $\dg_1(t) \in \mathcal{E}(\dL')$. We can always split this into two disjoint subsets: $\dL' = \dL \cup \dL''$. The energy difference inside $\dL'$ is given by 

     \begin{align*}
         \Delta E_{\dL'} &= \omega_{\dg_1}(H_{\dL'}) - \omega_{\dg_2}(H_{\dL'})\\
         &= |\dg_1|_{\dL'} - |\dg_2|_{\dL'}\\
         &= |\dg_1|_{\dL} - |\dg_2|_{\dL}
     \end{align*}
     Where $|\dg|_{\dL}$ was defined in proposition \ref{pro:EnergyBdy}. The last equality follows from the fact that $\dg_1, \dg_2$ share all their edges inside $\dL''$. So the energy difference between $\omega_{\dg_1}, \omega_{\dg_2}$ is independent of the region $\dL'$, and is finite.
\end{proof}


We can now begin the program of proving the theorem \ref{thm:mono-GS}. Recall definition \ref{def:monopath} of a monotonic infinite path. Figure \ref{fig:mono_takes_least} shows some examples of finite sections of monotonic and non-monotonic paths.

\begin{figure}
    \centering
    \includegraphics[width=0.3 \textwidth]{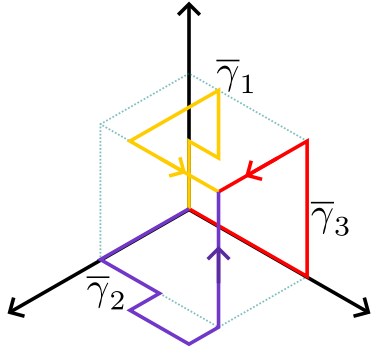}
    \caption{An example of different monotonic and non-monotonic paths between two endpoints of $\dL$. $\dg_3$ is non-monotonic while $\dg_1, \dg_2$ are monotonic.}
    \label{fig:mono_takes_least}
\end{figure}

\begin{lemma}
\label{lem:MonoDefGS}
    Consider a monotonic infinite path $\dg$. Choose a finite cuboidal region $\overline{\Lambda}$ such that $\partial_0\dg(n)$ and $\partial_1 \dg(m)$ lie on the longest diagonal of $\overline{\Lambda}$ for arbitrary integer constants $m,n$. Inside $\overline{\Lambda}$, all monotonic deformations $\dg'$ of $\dg$ have the same energy. All such $\omega_{\dg'}$ have the lowest energy inside $\dL$.
\end{lemma}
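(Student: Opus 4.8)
The plan is to translate the statement into a claim about lengths of self-avoiding lattice paths, using Proposition~\ref{pro:EnergyBdy}. For any infinite path $\dg'$ that agrees with $\dg$ outside $\overline{\Lambda}$, that proposition gives $\omega_{\dg'}(H_{\overline{\Lambda}}) = 2\,|\dg'|_{\overline{\Lambda}}$, i.e.\ twice the number of edges of $\dg'$ lying in $\mathcal{E}(\overline{\Lambda})$. So the lemma reduces to: (i) every monotonic deformation of $\dg$ inside $\overline{\Lambda}$ has the same value of $|\,\cdot\,|_{\overline{\Lambda}}$; and (ii) that value is minimal among all deformations. I would relabel the axes and their orientations (via Definition~\ref{def:monopath}) so that $\dg$ is coordinate-nondecreasing, pick indices $n<m$, and write $P=\partial_0\dg(n)$, $Q=\partial_1\dg(m)$, the two points specified to lie on the space diagonal of $\overline{\Lambda}$. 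The point of the hypothesis that $P,Q$ are the extreme corners along the diagonal is exactly that monotonicity then confines the finite section $\dg[n,m]$ to the cuboid $[P,Q]$ and pushes the two half-infinite tails of $\dg$ into the opposite open octants, so that $\overline{\Lambda}$ can be chosen (the cuboid $[P,Q]$, possibly slightly enlarged along the diagonal) with $\dg\cap\mathcal{E}(\overline{\Lambda})=\dg[n,m]$ and every deformation pinned to $\dg$ off $\overline{\Lambda}$.

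The geometric heart is two elementary facts about coordinate-monotone lattice paths: such a path stays inside the cuboid spanned by any two of its vertices, and between two of its vertices it runs without backtracking, so the number of its edges between them equals their $\ell^1$-distance. Applying this to $\dg[n,m]$ gives $|\dg|_{\overline{\Lambda}} = \|Q-P\|_1 = a+b+c$, where $a,b,c$ are the side lengths of the cuboid. Any monotonic deformation $\dg'$ coincides with $\dg$ on the tails, so inside $\overline{\Lambda}$ it is again a coordinate-monotone path from $P$ to $Q$, and the same count yields $|\dg'|_{\overline{\Lambda}}=a+b+c$; with Proposition~\ref{pro:EnergyBdy} this is claim (i), and all such $\omega_{\dg'}$ have energy $2(a+b+c)$ in $\overline{\Lambda}$.

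For (ii), an arbitrary deformation $\dg''$ restricted to $\overline{\Lambda}$ is a lattice path from $P$ to $Q$ (it is pinned to $\dg$ outside), hence has at least $\|Q-P\|_1=a+b+c$ edges, so $\omega_{\dg''}(H_{\overline{\Lambda}}) \ge 2(a+b+c) = \omega_{\dg'}(H_{\overline{\Lambda}})$, and the monotonic deformations realise the minimum. If "lowest energy" is intended over all states with the prescribed behaviour on $\mathcal{E}(\overline{\Lambda}^c)$ rather than over flux-string deformations only, one argues the same way: the $B_f=-1$ faces inside $\overline{\Lambda}$ form a $\mathds{Z}_2$ one-chain whose boundary is $\{P,Q\}$ (up to closed components, and with charges contributing only to $\sum_v(\mathds{1}-A_v)\ge 0$), so that chain still contains a $P$–$Q$ path of length $\ge \|Q-P\|_1$; together with the remark after Definition~\ref{def:GS} this is the form in which the lemma feeds into Theorem~\ref{thm:mono-GS}.

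The routine part is the path-length counting; the two things that need care are (a) the lattice bookkeeping at $\partial\overline{\Lambda}$ — choosing the cuboid (and its slight enlargement) so that no edge of $\dg$ or $\dg'$ lying on a face or edge of $\overline{\Lambda}$ spoils the identification of $|\dg|_{\overline{\Lambda}}$ with the section length, which genuinely uses that $P,Q$ are the extreme corners along the diagonal — and (b), under the stronger reading, justifying that a general low-energy state respecting the boundary data still carries a $P$–$Q$ defect path inside $\overline{\Lambda}$. I expect (b) to be the main obstacle; under the weaker reading, comparing only to the deformations $\omega_{\dg'}$, the lemma follows cleanly from (i), (ii) and Proposition~\ref{pro:EnergyBdy}.
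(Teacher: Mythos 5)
Your proposal is correct and follows essentially the same route as the paper: reduce the energy to an edge count via Proposition~\ref{pro:EnergyBdy}, note that any path between the two diagonal corners of the cuboid needs at least $\|Q-P\|_1=a+b+c$ edges, and observe that every monotonic path achieves exactly this bound. The paper only intends the weaker reading (comparison among flux-string deformations with the endpoints held fixed), so the ``stronger reading'' you flag as the main obstacle is not actually required.
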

\begin{proof}
WLOG assume $\overline{\Lambda}$ is a cuboidal region $[0,a] \times [0,b] \times [0,c]$. Consider a finite monotonic path such that $\partial_0\dg(0) = (0,0,0)$ and $\partial_1 \dg(N) = (a,b,c)$. Let $N_r = \#\{\dg(t) | \dg(t) \in \edge(r,\pm)\}$ for $r\in \{x,y,z\}$. To go from $(0,0,0)$ to $(a,b,c)$ one needs at least $a+b+c$ edges.\\

Since $\dg$ is monotonic in $x,y,z$ directions, it takes the least amount of edges to get to $(a,b,c)$. So $N_x = a, N_y = b,N_z = c$ . If there are any edges that don't go towards $(a,b,c)$, as is the case in a non-monotonic path, that path would be longer. All monotonic deformations $\dg'$ of $\dg$ inside $\dL$ will also take the least number of edges on account of being monotonic. $\omega_\dg(H_{\overline{\Lambda}}) = |\dg|_{\overline{\Lambda}}$ from proposition \ref{pro:EnergyBdy}. As all $\dg'$ already have the least $\#$ of edges, we have that $\omega_{\dg'}$ all have the lowest energy inside $\dL$ if the endpoints are held fixed.
\end{proof}

\begin{theorem}
\label{thm:LowEState}
    For a given path $\dg$, $\omega_\dg$ is not a ground state if $\dg$ is non-monotonic. 
\end{theorem}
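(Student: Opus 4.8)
The plan is to show that if $\dg$ is non-monotonic, then $\omega_\dg$ cannot be a ground state by exhibiting a competing state in the same sector (equivalently, a monotonic deformation of $\dg$) that has strictly lower energy inside some finite region. Concretely: if $\dg$ fails monotonicity, then there is a principal direction $r$ (say $r=x$) and integers $t_1 < t_2$ such that $\dg(t_1) \subseteq \edge^{x,+}$ and $\dg(t_2) \subseteq \edge^{x,-}$ (or vice versa) — i.e., the path reverses direction along $x$ somewhere. Pick integers $m < t_1$ and $n > t_2$ so that the finite subpath $\dg[m,n]$ contains this reversal, and let $\overline{\Lambda}$ be a finite cuboidal region chosen (as in Lemma \ref{lem:MonoDefGS}) so that $\partial_0\dg(m)$ and $\partial_1\dg(n)$ lie on the longest diagonal of $\overline{\Lambda}$ and $\dg(t)\in\edge(\overline{\Lambda})$ exactly for $m \le t \le n$.

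The key step is the energy comparison. By Proposition \ref{pro:EnergyBdy}, $\omega_\dg(H_{\overline{\Lambda}}) = |\dg|_{\overline{\Lambda}}$, which is (twice) the number of edges of $\dg$ inside $\overline{\Lambda}$, i.e. $\#\dg[m,n]$ up to boundary bookkeeping. Because $\dg[m,n]$ contains at least one $x$-reversal, its length in the $x$-direction, $N_x$, strictly exceeds $|\text{(}x\text{-displacement between its endpoints)}|$: every backward $x$-edge must be compensated by an extra forward $x$-edge. Hence $N_x > a$ where $a$ is the $x$-extent of $\overline{\Lambda}$, while a monotonic path $\dg'$ with the same endpoints achieves $N_x' = a$, $N_y' = b$, $N_z' = c$ by Lemma \ref{lem:MonoDefGS}. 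Replacing $\dg$ by $\dg'$ outside nothing — i.e. defining $\dg'$ to agree with $\dg$ for $t \le m$ and $t \ge n$ and to be a monotonic path between the endpoints inside $\overline{\Lambda}$ — gives a path $\dg'$ path-equivalent to $\dg$ (their symmetric difference is finite, supported in $\overline{\Lambda}$). By Lemma \ref{lem:PathEqvSameSector}, $\omega_{\dg'}$ lies in the same sector $\hilb_\dg$, and by Lemma \ref{lem:energy-diff} the energy difference is $\Delta E = \omega_\dg(H_{\overline{\Lambda}}) - \omega_{\dg'}(H_{\overline{\Lambda}}) = 2(N_x - a) > 0$, finite and positive.

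Finally I would invoke the characterization of ground states in the remark following Definition \ref{def:GS}: a ground state must minimize the energy in every finite region (more precisely, $\omega(O^\dagger H_\Lambda O)/\omega(O^\dagger O) \ge \omega(H_\Lambda)$ for all finite $\Lambda$ and all $O$). Taking $O = \pi_\omega(F_{\dS})$ where $\dS$ is the finite surface implementing the deformation $\dg \mapsto \dg'$ realizes $\omega_{\dg'}$ as such a local modification of $\omega_\dg$, so the strict inequality $\omega_{\dg'}(H_{\overline{\Lambda}}) < \omega_\dg(H_{\overline{\Lambda}})$ directly contradicts $\omega_\dg$ being a ground state. I expect the main obstacle to be purely bookkeeping: carefully defining the monotonic replacement path $\dg'$ inside $\overline{\Lambda}$ with the correct endpoints so that it is genuinely path-equivalent to $\dg$, and making sure the region $\overline{\Lambda}$ (its longest diagonal passing through both endpoints) can always be chosen once the reversal has been localized — one may need to enlarge $\overline{\Lambda}$ slightly so that the endpoints $\partial_0\dg(m)$, $\partial_1\dg(n)$ are actually diagonally opposite corners, but this is always possible by pushing $m$ down and $n$ up. The conceptual content — "a backtrack costs at least two extra edges, which a local monotonic re-routing removes" — is straightforward.
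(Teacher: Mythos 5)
Your proposal is correct and follows essentially the same route as the paper: localize the non-monotonicity in a finite region, replace the enclosed segment by a monotonic path with the same endpoints, use Proposition \ref{pro:EnergyBdy} (together with the edge-counting in Lemma \ref{lem:MonoDefGS}) to see the replacement strictly lowers the energy, and conclude from the local-minimality characterization of ground states. Your write-up is in fact somewhat more explicit than the paper's (which defers the details to the subsequent ``straightening procedure''), but the underlying idea — a reversal along a principal direction costs extra edges that a local monotonic re-routing removes — is identical.
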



\begin{figure}
    \centering
    \includegraphics[width = \textwidth]{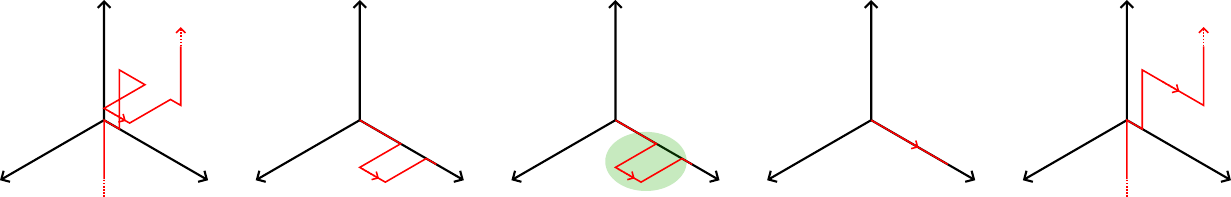}
    \caption{Straightening an example path that is non-monotonic in 2 directions (x,z in this figure). First project to $x-y$ plane. Then straighten in that plane. Then lift back up to entire lattice.}
    \label{fig:straightening}
\end{figure}

\begin{proof}
    Suppose $\dg$ is a non-monotonic path inside a finite region $\dL$ (and possibly outside $\dL$ as well). Then one can instead consider a path $\dg'$ which is the same as $\dg$ outside $\dL$, but monotonic inside $\dL$. This can always be done because there always exists at least one monotonic path between any two points inside $\dG$. Due to proposition \ref{pro:EnergyBdy}, $\dg'$ has a lower energy than $\dg$, so $\omega_\dg$ is not a ground state.

\end{proof}
Let a non-monotonic path $\dg$ be given. We explicitly sketch the construction of a shorter path $\dg'$. We will call this construction the "straightening procedure", or "straighten" in short.\\

    Let $\mu, \nu \in \{x,y,z\}$ be two directions. We can divide the straightening procedure into 3 cases of increasing complexity:\\

    (I)  First consider the case where $\dg$ is parallel to one of the principal planes, call it the $\nu$- plane, such that $\nu$ is the normal to the plane. Assume that it is non-monotonic only along the $\mu$ direction. Then there must exist two edges $\dg(m), \dg(n)$ such that $(\partial_0 \dg(m))_\mu = (\partial_1 \dg(n))_\mu$. Then we can construct a new path $\dg'$ such that $\dg'(t) \in \dg$ for all $t<m, t>n$, and it is monotonic between points $\partial_0 \dg(m), \partial_1 \dg(n)$. $\dg'$ is path equivalent to $\dg$ and is monotonic in this region. Thus $\omega_{\dg'}$ must have a lower energy than $\omega_\dg$ so it couldn't have been the ground state. \\

    (II) Now we consider the case where $\dg$ is still non-monotonic along the $\mu$ direction, but is now no longer restricted to lie parallel to the $\nu$ plane. In this case we can define a projection $P_\nu$ that projects $\dg$ on the $\nu$ plane. This can be done by simply throwing out the edges in the $\nu$ direction. The path $\dg_\nu$ thus obtained may now be finite. We can apply the same procedure as (I) to obtain a monotonic path $\dg_\nu'$. We can then lift this path back into the $\nu$ direction by reintroducing the edges we threw out to obtain $\dg'$, which is path equivalent to $\dg$ but is monotonic inside this region.\\

    (III) If $\dg$ is non-monotonic along more than one direction, then one can always choose a smaller region where it is monotonic in 2 of the 3 principal directions.

\begin{corollary}
\label{cor:InfLowE}
    We can use the straightening procedure to construct another state $\omega_{\dg'}$ with lower energy inside $\dL$ than a state $\omega_\dg$ with a  non-monotonic path $\dg$. If we end up with a monotonic path $\dg$ after straightening a finite number of times in different regions, $\dg'$ is path-equivalent to $\dg$, and by lemma \ref{lem:PathEqvSameSector} $\omega_{\dg_1}, \omega_{\dg_2}$ are equivalent.
\end{corollary}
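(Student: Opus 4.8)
The plan is to build the corollary directly out of the three-case straightening construction described above, combined with Proposition~\ref{pro:EnergyBdy} and Lemmas~\ref{lem:MonoDefGS} and \ref{lem:PathEqvSameSector}. The central point to isolate first is that a single straightening step is purely \emph{local}: it replaces a non-monotonic finite segment of $\dg$ contained in some finite region $\dL_i$ by a monotonic finite segment with the \emph{same two endpoints}, and leaves $\dg$ unchanged on every edge outside $\dL_i$. I would verify this locality in each of cases (I)--(III), being careful in (II) and (III) that the projection $P_\nu$, the in-plane straightening, and the subsequent lift only ever touch edges of $\dg$ lying in the chosen region.

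For the energy assertion, recall from Proposition~\ref{pro:EnergyBdy} that $\omega_\dg(H_{\dL})=2|\dg|_{\dL}$ depends only on the edges of $\dg$ inside $\dL$. Hence straightening inside $\dL_i$ leaves the energy unchanged in every region disjoint from $\dL_i$, while inside $\dL_i$ the replacement segment, being monotonic between two fixed lattice points, uses the minimal Manhattan number of edges and therefore strictly fewer than the non-monotonic original, by the edge-count argument of Lemma~\ref{lem:MonoDefGS}. Thus $\omega_{\dg'}(H_{\dL_i})<\omega_\dg(H_{\dL_i})$ with energies elsewhere unaffected, which is the first claim (and simultaneously re-establishes Theorem~\ref{thm:LowEState}, since such a $\dg$ cannot satisfy the energy-minimization characterization of a ground state in Definition~\ref{def:GS}). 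In the lifted case one notes that the replacement segment's edge count equals that of the in-plane monotonic segment plus the fixed number of $\nu$-direction edges removed, which is still strictly below the original count, so the strict decrease survives the project--straighten--lift operation.

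For the equivalence assertion, suppose finitely many straightening steps, performed in regions $\dL_1,\dots,\dL_k$, terminate in a monotonic path $\dg'$. Then $\dg$ and $\dg'$ agree on every edge outside the finite set $\bigcup_{i}\dL_i$, so $r=\min\{t:\dg(t)\notin\dg'\}$ and $s=\max\{t:\dg(t)\notin\dg'\}$ are both finite; hence $\dg$ and $\dg'$ are path equivalent, and Lemma~\ref{lem:PathEqvSameSector} yields $[\pi_\dg]=[\pi_{\dg'}]$, i.e.\ $\omega_\dg$ and $\omega_{\dg'}$ lie in the same sector. The only delicate point is the termination hypothesis itself: a generic infinite path can be non-monotonic in infinitely many disjoint regions, so no finite sequence of local straightenings will globalize it --- this is precisely why the equivalence conclusion is stated conditionally. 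The main obstacle is therefore not conceptual but bookkeeping: confirming that each straightening step is genuinely confined to its region (so that energies add up correctly and path equivalence is preserved under composition), after which the corollary is immediate.
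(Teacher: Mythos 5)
Your argument is correct and follows essentially the same route as the paper: the path-equivalence claim is obtained exactly as in the paper's proof, by taking the union $\dL=\bigcup_i \dL_i$ of the finitely many straightening regions and observing that $\dg$ and $\dg'$ agree outside this finite set, then invoking Lemma~\ref{lem:PathEqvSameSector}. Your treatment of the energy decrease via Proposition~\ref{pro:EnergyBdy} and the Manhattan edge count of Lemma~\ref{lem:MonoDefGS} (including the project--straighten--lift bookkeeping) is a correct, somewhat more explicit rendering of what the paper relegates to Theorem~\ref{thm:LowEState} and the surrounding discussion of the straightening procedure.
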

\begin{proof}
    Let $\dg'$ be a result of straightening $\dg$ N times. Let $\dL_n$ be the finite region that encompasses the section of $\dg$ that was straightened in the $n$th time. Then we can consider a region $\dL =\cup_{n=1}^N \dL_n$. Since $\dg'$ was obtained from $\dg$ through modifications inside $\dL$, we have $\dg'(r) \in \dg$ for all $\dg'(r) \notin \edge(\dL)$. Hence $\dg'$ is path equivalent to $\dg$.
\end{proof}

Let us now recall and prove theorem \ref{thm:mono-GS}:
\monotonicGS*
\begin{proof}
    Consider a path $\dg$. If it is non-monotonic, theorem \ref{thm:LowEState} shows $\omega_\dg$ is not a ground state. If it is monotonic, then from lemma \ref{lem:MonoDefGS} it has the lowest energy inside any given cuboidal region $\dL$ such that $\partial_0\dg(n)$, $\partial_1\dg(m)$ lie on the endpoints of a longest diagonal of $\dL$. Since we can freely choose said $\dL$, $\omega_\dg$ must be a ground state.\\

    Conversely, If $\omega_\dg$ is a ground state, then it must have the lowest energy and thus least number of edges. Through lemma \ref{lem:MonoDefGS} we know that a path that takes the fewest edges to get from start to end point inside $\dL$ must be monotonic. So $\omega_\dg$ being a ground state implies $\dg$ is monotonic. 
\end{proof}

\begin{figure}[t!]
    \centering

        \centering
    \begin{subfigure}[t]{0.5\textwidth}
        \centering
\includegraphics[ width=0.5\textwidth]{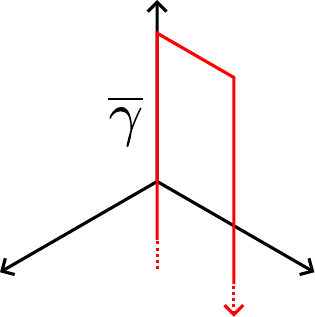}

    \end{subfigure}%
    ~ 
    \begin{subfigure}[t]{0.5\textwidth}
        \centering
\includegraphics[ width=0.5\textwidth]{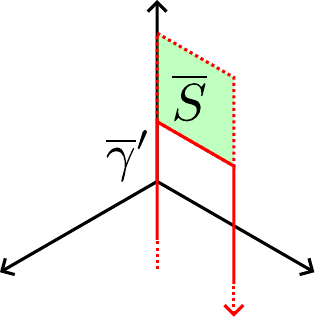}

    \end{subfigure}
    \caption{We can keep straightening this configuration}
    \label{fig:inverse_U}
\end{figure}


As an example of the straightening procedure, let us explicitly start from a non-monotonic state construct a new state with lower energy. Consider a state $\omega_\dg$ with a U shaped path $\dg$ as shown in figure \ref{fig:inverse_U}. Straighten $\dg$ inside region $\dL$ as shown in the figure to build a new state $\omega_{\dg(n)}$ with a path $\dg(n)$. We can indefinitely straighten it by choosing a different region $\dL(n)$. The limit in $n$ of this procedure will give us $\omega$, the state with no flux strings. However, the representations $\pi_\dg$ and $\pi_\omega$ are inequivalent from lemma \ref{lem:path_inequiv}. So $\omega_\dg$ does in-fact belong to a different sector. However this is not a ground state sector as the energy of $\omega_\dg$ can be lowered indefinitely.


\subsection{Infinity directions and a classification}
We talk about a classification of ground states. We start by considering Infinity directions (recall definition \ref{def:InfDir}) as our tool for understanding which states could be reduced to ground states. 

\begin{lemma}
\label{lem:redefinfdir}
     The state $\omega_\dg$ does not have a canonical labelling of positive or negative infinity directions. Positive and negative infinity directions are thus interchangeable for any path $\dg$.
\end{lemma}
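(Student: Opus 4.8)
The plan is to prove the lemma by exhibiting, for an arbitrary infinite path $\dg$, an explicit reparametrization that produces literally the same state $\omega_\dg$ but exchanges the two ends of the path. Since the two ends are what carry the positive and negative infinity directions, this shows that calling one set "positive" and the other "negative" is a choice, not a structure of $\omega_\dg$.

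The first step is to observe that the automorphism $\alpha_\dg$ — and hence the state $\omega_\dg=\omega\circ\alpha_\dg$ and the representation $\pi_\dg$ — depends only on the underlying edge set $\{\dg(t):t\in\mathbb{Z}\}\subset\edge(\dG)$ together with a gauge choice of bounding surfaces, and not on the parametrization $t\mapsto\dg(t)$. Indeed, the surface operator $F_{\dS}=\prod_{e\perp\dS}\sigma^x_e$ is insensitive to the orientation of its boundary path, so two parametrizations of the same edge set can be built from the very same sequence of surfaces $\dS_{\dg_n}$, giving the same limiting automorphism. Now define the \emph{reversed path} $\dg^{\mathrm{rev}}(t):=-\dg(-t)$, where $-e$ denotes the edge $e$ with its orientation flipped. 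One checks $\dg^{\mathrm{rev}}$ is again an infinite path in the sense of Definition \ref{def:infpath}: for $a\le b$ the finite set $\dg^{\mathrm{rev}}[a,b]$ is exactly $\dg[-b,-a]$ traversed backwards with every edge flipped, and reversal-plus-flip sends a finite non-self-intersecting path to a finite non-self-intersecting path (it swaps $\partial_0$ and $\partial_1$ and preserves $\partial_1 e_i=\partial_0 e_{i+1}$). Since $\dg$ and $\dg^{\mathrm{rev}}$ have the same edge set, the previous observation gives $\omega_{\dg^{\mathrm{rev}}}=\omega_\dg$ and $\pi_{\dg^{\mathrm{rev}}}=\pi_\dg$.

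The second step is to compute the infinity directions of $\dg^{\mathrm{rev}}$ and compare with those of $\dg$. Using $\dg^{\mathrm{rev}}(n)=-\dg(-n)$, substituting $m=-n$, and invoking the remark after Definition \ref{def:InfDir} that the sets are independent of $n_0$ (relabel $n_0\mapsto-n_0$), one gets
\[
(i,\sigma)\in D_+(\dg^{\mathrm{rev}})\iff \#\{m<-n_0: -\dg(m)\subseteq\edge^{i,\sigma}\}=\infty\iff (i,\sigma)\in D_-(\dg),
\]
and symmetrically $D_-(\dg^{\mathrm{rev}})=D_+(\dg)$; in particular $D(\dg^{\mathrm{rev}})=D(\dg)$. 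Thus $\dg$ and $\dg^{\mathrm{rev}}$ define the same state while their positive and negative infinity-direction sets are interchanged, which is precisely the statement of the lemma.

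I do not expect a real obstacle here; the only delicate point is the orientation bookkeeping — confirming $-\dg^{\mathrm{rev}}(n)=\dg(-n)$ and matching it correctly against the $-\dg(n)$ that appears in the definition of $D_-$, and checking that the reversed path genuinely satisfies the path axioms. Everything else is a routine change of index together with the orientation-independence of $F_{\dS}$.
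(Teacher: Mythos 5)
Your proof is correct and follows essentially the same route as the paper's: reverse the parametrization $t\mapsto -t$ (with the orientation flip), observe that the automorphism $\alpha_\dg$ is insensitive to the orientation of the path so the state is unchanged, and check that this exchanges $D_+$ and $D_-$. You have merely spelled out the index bookkeeping and the orientation-independence of $F_{\dS}$ (which the paper records separately as Lemma \ref{lem:orient_independent}) in more detail than the paper does.
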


\begin{proof}
    Consider a path $\dg$ whose edges are labelled by an integer parameter $t$. The transformation $t\mapsto -t$ reverses the orientation of $\dg$ and maps $D_+ (\dg)$ to $D_- (\dg)$ and $D_- (\dg)$ to $D_+ (\dg)$. However the orientation of $\dg$ is irrelevant since $\alpha_\dg ^\dagger = \alpha_\dg$. So there is no canonical labelling and $D_- (\dg)$ and $D_+ (\dg)$ are interchangeable.
\end{proof}

\begin{lemma}
\label{lem:region}
    Let $D(\dg)$ be the infinity directions for path $\dg$. One can always choose a finite region $\dL$ such that for all $\dg(t) \in \mathcal{E}(\dL^c)$ we have $\dg(t) \in \cup_{(r,\sigma) \in D(\dg)} \mathcal{E}^{r,\sigma}$
\end{lemma}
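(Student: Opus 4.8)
The statement to prove is Lemma~\ref{lem:region}: given an infinite path $\dg$ in $\dG$ with infinity direction set $D(\dg)$, there is a finite region $\dL$ so that every edge $\dg(t)$ lying outside $\dL$ points in one of the directions of $D(\dg)$. Let me think about what's really being claimed and how to prove it.

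$D_+(\dg)$ consists of pairs $(r,\sigma)$ such that infinitely many edges $\dg(n)$ with $n > n_0$ point in direction $(r,\sigma)$. The complementary statement: $(r,\sigma) \notin D_+(\dg)$ means only finitely many such edges with $n>n_0$. So for each of the (at most 6) directions NOT in $D_+(\dg)$, there are finitely many indices $n > n_0$ with $\dg(n) \subseteq \mathcal{E}^{r,\sigma}$. Take the max over these finitely many "bad" directions of the largest such index; call it $N_+$. Then for all $n > N_+$, $\dg(n)$ points in a direction in $D_+(\dg) \subseteq D(\dg)$. Symmetrically on the negative side with $-\dg(n)$ and $D_-(\dg)$: there is $N_-$ so that for all $n < N_-$, $-\dg(n)$ points in a direction in $D_-(\dg)$, hence $\dg(n)$ points in a direction in $D_-(\dg) \subseteq D(\dg)$ as well (the direction set is closed under sign flip in the relevant sense here, or more precisely $-\dg(n) \in \mathcal{E}^{r,\tau}$ just means $\dg(n) \in \mathcal{E}^{r,\bar\tau}$ and the pair recorded in $D_-$ is exactly what we need to compare against). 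So the only edges $\dg(n)$ potentially pointing outside $D(\dg)$ are those with $N_- \le n \le N_+$, a finite set.

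**Constructing $\dL$.** Having isolated that only finitely many indices $n \in [N_-, N_+]$ can host an edge $\dg(n)$ not pointing along a direction of $D(\dg)$, I take $\dL$ to be any finite cuboidal region of $\dG$ containing all the (finitely many) edges $\{\dg(n) : N_- \le n \le N_+\}$ — for instance the smallest cuboid containing all their endpoints. Then $\mathcal{E}(\dL^c)$ contains no edge $\dg(n)$ with $n \in [N_-, N_+]$, so any $\dg(t) \in \mathcal{E}(\dL^c)$ has $t > N_+$ or $t < N_-$, and by the previous paragraph such $\dg(t)$ points in a direction of $D(\dg)$. That is exactly the conclusion $\dg(t) \in \bigcup_{(r,\sigma)\in D(\dg)}\mathcal{E}^{r,\sigma}$.

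**Main obstacle.** There is no serious obstacle — this is essentially a finiteness bookkeeping argument, unwinding Definition~\ref{def:InfDir}. The one point that needs a little care is the bookkeeping on the negative side: making sure that "$-\dg(n)$ points along a direction in $D_-(\dg)$" correctly translates to "$\dg(n)$ points along a direction in $D(\dg)$." Since $D(\dg) = D_+(\dg) \cup D_-(\dg)$ and the definition of $D_-$ is phrased in terms of $-\dg(n)$, I should state explicitly that $\dg(n) \subseteq \mathcal{E}^{r,\sigma}$ iff $-\dg(n) \subseteq \mathcal{E}^{r,\bar\sigma}$, and that the direction pair we are comparing against is consistently the one appearing in $D_-$; after Lemma~\ref{lem:redefinfdir} this orientation ambiguity is harmless anyway. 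The rest is just taking maxima over finite sets and choosing a big enough box.
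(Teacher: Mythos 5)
Your proof is correct and follows essentially the same route as the paper's: since each direction outside $D(\dg)$ is taken by only finitely many edges of $\dg$, those finitely many ``bad'' edges can be enclosed in a finite region $\dL$, and everything outside $\dL$ then points along $D(\dg)$. Your explicit bookkeeping with the cutoff indices $N_\pm$ and your remark about the orientation convention on the negative side ($-\dg(n)$ versus $\dg(n)$) are, if anything, slightly more careful than the paper's two-line version, which glosses over the same orientation point.
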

\begin{proof}
    Since for all $(r,\sigma) \notin D(\dg)$ we have $\#\{\dg(t)| \dg(t) \in \mathcal{E}^{r,\sigma}\} < \infty$, we can choose a finite region $\dL$ such that $\dg(t) \in \mathcal{E}(\dL)$ for all $\dg(t) \in \cup_{r,\sigma \notin D(\dg)}\mathcal{E}^{r,\sigma}$. It follows that for all $\dg(t) \in \mathcal{E}(\dL^c)$ we have $\dg(t) \in \cup_{(r,\sigma) \in D(\dg)} \mathcal{E}^{r,\sigma}$.
\end{proof}

\begin{lemma}
\label{lem:InfDirGS1}
    $\omega_\dg$ is in a ground state sector iff $D_+(\dg) \cap D_-(\dg) = \emptyset$
\end{lemma}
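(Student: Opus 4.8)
The plan is to prove both implications by translating ``$\omega_\dg$ lies in a ground state sector'' into the purely combinatorial statement ``$\dg$ is path equivalent to a monotonic path'', and then checking that the latter holds precisely when $D_+(\dg)\cap D_-(\dg)=\emptyset$. The bridge between the analytic and combinatorial pictures is Theorem~\ref{thm:equiv}: applied to the plaquette operators $B_f$ it shows that two states $\omega_{\dg_1},\omega_{\dg_2}$ can be equivalent only if $\dg_1$ and $\dg_2$ differ on finitely many edges, i.e.\ only if they are path equivalent; together with Lemma~\ref{lem:PathEqvSameSector} this gives $[\pi_{\dg_1}]=[\pi_{\dg_2}]\iff\dg_1,\dg_2$ path equivalent. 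Combining this with Theorem~\ref{thm:mono-GS} (which singles out the ground states among the $\omega_{\dg'}$ as exactly the monotonic ones), together with the fact, established elsewhere, that every ground state carrying a single infinite flux string is of the form $\omega_{\dg'}$, reduces the lemma to the claim: \emph{$\dg$ is path equivalent to a monotonic path iff $D_+(\dg)\cap D_-(\dg)=\emptyset$.}

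For the ``only if'' direction of the reduced claim I would argue by contraposition. Suppose $(r,\sigma)\in D_+(\dg)\cap D_-(\dg)$. By definition of $D_+$ there are infinitely many $t$ with $\dg(t)\in\edge^{r,\sigma}$, and by definition of $D_-$ (note the orientation reversal built into it) there are infinitely many $t$ with $\dg(t)\in\edge^{r,-\sigma}$, so $\dg$ traverses edges in \emph{both} orientations along the $r$-axis infinitely often. A monotonic path uses only one of $\edge^{r,+},\edge^{r,-}$, and path equivalence alters only finitely many edges, so every path equivalent to $\dg$ still traverses both orientations infinitely often and is therefore non-monotonic. Hence $\dg$ is not path equivalent to any monotonic path.

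For the ``if'' direction, assume $D_+(\dg)\cap D_-(\dg)=\emptyset$. Using Lemma~\ref{lem:region}, the fact that each direction outside $D_+(\dg)$ (resp.\ $D_-(\dg)$) is used only finitely often by the forward (resp.\ backward) half of $\dg$, and the fact that an infinite non-self-intersecting path leaves every bounded region, I would fix a finite region $\dL$ and an integer $N$ so that $\dg(t)\notin\edge(\dL)$ for $|t|>N$, every $\dg(t)$ with $t>N$ points in a direction of $D_+(\dg)$, and every $\dg(t)$ with $t<-N$ points in a direction $(r,-\sigma)$ with $(r,\sigma)\in D_-(\dg)$. Applying the straightening procedure of Corollary~\ref{cor:InfLowE} finitely many times \emph{inside the single region} $\dL$ (which stays within the sector by Lemma~\ref{lem:PathEqvSameSector}, unlike the infinite straightening of the inverse-U example), one reduces to the case where each of these two tails is monotonic: for every axis $r$ the forward tail uses a single sign $\sigma_r^{+}$ and the backward tail a single sign $\sigma_r^{-}$. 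The hypothesis $D_+(\dg)\cap D_-(\dg)=\emptyset$ is then \emph{exactly} the statement that no axis $r$ satisfies $\sigma_r^{+}=-\sigma_r^{-}$, so one can choose a globally consistent sign $\sigma_r$ for every axis; routing through the box $\dL$ using only edges of $\bigcup_r\edge^{r,\sigma_r}$ yields a monotonic $\dg'$ that agrees with $\dg$ outside $\dL$, hence is path equivalent to $\dg$. By Theorem~\ref{thm:mono-GS}, $\omega_{\dg'}$ is then a ground state in the same sector as $\omega_\dg$.

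I expect the main obstacle to lie in the ``if'' direction: extracting enough of the global shape of $\dg$ near infinity from the purely asymptotic data $D_\pm(\dg)$, and in particular justifying that after finitely many straightenings both tails of $\dg$ become monotonic, so that the two tails can be joined by one monotonic deformation inside a single finite region. Care is needed that every straightening step is carried out in one fixed finite box, so that path equivalence—hence sector invariance—is preserved; performing infinitely many straightenings would generically change the sector (as the inverse-U discussion shows), which is precisely why the infinity-direction condition cannot be relaxed. The ``only if'' direction is comparatively formal, needing only that monotonic paths use one orientation per axis, that path equivalence is a finite modification, and the standard fact (Theorem~\ref{thm:equiv} applied to the $B_f$) that the flux pattern outside a finite region is a sector invariant.
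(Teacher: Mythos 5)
Your ``only if'' direction is sound and is essentially the paper's own argument in different clothing: a direction $(r,\sigma)\in D_+(\dg)\cap D_-(\dg)$ forces $\dg$ to traverse both orientations of the $r$-axis infinitely often, so every path-equivalent configuration (hence, by the $B_f$-argument via Theorem~\ref{thm:equiv}, every configuration in the same sector) retains folds arbitrarily far out and is non-monotonic; Theorem~\ref{thm:mono-GS} then excludes a ground state from the sector. The paper phrases the same idea as the ability to keep straightening folds at larger and larger values of the $r$-coordinate, lowering the energy indefinitely.

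The gap is in your ``if'' direction, at the step ``applying the straightening procedure finitely many times inside the single region $\dL$, one reduces to the case where each of the two tails is monotonic.'' This does not follow from $D_+(\dg)\cap D_-(\dg)=\emptyset$, because a \emph{single} tail may contain both orientations of an axis among its infinity directions. Concretely, let the forward tail zigzag in the $x$--$z$ plane with growing amplitude (one step $+z$, one step $+x$, one step $+z$, two steps $-x$, one step $+z$, three steps $+x$, and so on), and let the backward tail run straight along the $y$-axis. This is a legitimate non-self-intersecting infinite path with $D_+(\dg)=\{(x,+),(x,-),(z,+)\}$ and $D_-(\dg)$ a single $y$-direction, so $D_+(\dg)\cap D_-(\dg)=\emptyset$; yet every path equivalent to $\dg$ still contains infinitely many $+x$ and infinitely many $-x$ edges, hence is not monotonic, and by your own bridge $\omega_\dg$ is not in a ground state sector. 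Thus your reduced combinatorial claim is false as stated, no finite $\dL$ can complete your construction of $\dg'$, and the straightenings required are genuinely infinite (the path is pathologically non-monotonic). What your routing-through-the-box argument actually needs is the stronger hypothesis that each of $D_+(\dg)$ and $D_-(\dg)$ contains at most one orientation per axis --- automatic when $\dg$ is monotonic, which is the situation in which the paper later applies this lemma --- and under that hypothesis it does go through. Note that the paper's own ``converse'' paragraph does not supply this direction either: it only verifies that a monotonic $\dg$ satisfies $D_+(\dg)\cap D_-(\dg)=\emptyset$, which is the implication already covered by the first half. So the implication you are attempting genuinely requires an extra hypothesis or a reformulation of the statement, and your proposal cannot be repaired without one.
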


\begin{proof}
    Consider a state $\omega_\dg$ such that $\dg$ has a direction $(r,\sigma) \in D_+(\dg) \cap D_-(\dg)$. \\

    There will then exist an integer $c = (\partial_0 \dg(n))_r = (\partial_1 \dg(m))_r$. This means $\dg$ is non-monotonic in at least the $r$ direction. We can then straighten it to a path $\dg'$ with a lower energy. But since there are infinite such edges, one can always consider another integer $c' = (\partial_0 \dg(n'))_r = (\partial_1 \dg(m'))_r$ such that $\sigma c' > \sigma c$. Thus we can construct paths of lower energy indefinitely, implying we cannot reach a path $\dg'$ for which $\omega_{\dg'}$ is a ground state.\\

    We can easily see the converse from theorem \ref{thm:mono-GS}. If $\omega_\dg$ is a ground state, it must be monotonic. So there exists $\sigma_r \in \{\pm\}$ such that $\dg \subseteq \cup_{r=1}^3 \edge^{r,\sigma_r}$. Then $D_+(\dg) \subseteq \{(r,\sigma_r)\}_{r\in\{x,y,z\}}$. Whereas $D_-(\dg)\subseteq \{(r,\sigma_r^c)\}_{r\in\{x,y,z\}}$ where $\sigma^c_r\in \{\pm\}$ is the complement of $\sigma_r$. It follows that $D_+(\dg) \cap D_-(\dg) =\emptyset$.
\end{proof}
\begin{definition}[\textit{Pathologically non-monotonic}]
    A path $\dg$ is pathologically non-monotonic if  there exists an infinite sequence $\{\dg_n\}_{n = 1}^{\infty}$ such that $\dg_1 = \dg$, $\dg_n$ is path equivalent to $\dg_m$ for all $m,n \in \mathbb{Z}_+$, and we can straighten $\dg_n$ to $\dg_{n+1}$ for all $n \ge 1$ resulting in a state $\omega_{\dg_{n+1}}$ with lower energy than $\omega_{\dg_n}$.
\end{definition}

\begin{corollary}
\label{cor:NotGSSIfInfDir}
    If $\dg$ is pathologically non-monotonic, then $D_+(\dg) \cap D_-(\dg) \neq \emptyset$. Consequently, $\omega_\dg$ is not in a ground state sector.
\end{corollary}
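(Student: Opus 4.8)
By Lemma~\ref{lem:InfDirGS1}, ``$\omega_\dg$ lies in a ground state sector'' and ``$D_+(\dg)\cap D_-(\dg)=\emptyset$'' are equivalent, so the two assertions of the corollary are contrapositives of one another and it suffices to prove either; the plan is to show $\omega_\dg$ is \emph{not} in a ground state sector. The guiding picture is short: if $\dg$ is pathologically non-monotonic then the defining sequence $\{\dg_n\}$ consists of pairwise path-equivalent paths, so by Lemma~\ref{lem:PathEqvSameSector} all the $\omega_{\dg_n}$ live in the single sector $\hilb_\dg$ while their energies strictly decrease, whereas a ground state sector contains an energy-minimizing state; hence $\hilb_\dg$ cannot be one. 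The only real content is making ``decrease'' bite, since the straightening at step $n$ occurs in a finite window that may drift off to infinity, so I would bypass this by perturbing a putative ground state directly.

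So, for contradiction, suppose $\omega_\dg$ lies in a ground state sector $\mathcal S$ with ground state $\omega_{gs}$, GNS triple $(\pi_{gs},\hilb_{gs},\Omega_{gs})$. First I would record that a ground state minimizes energy inside every finite region over its whole sector: Definition~\ref{def:GS} gives $\pi_{gs}(H_\Lambda)\ge\omega_{gs}(H_\Lambda)\mathds{1}$ for each finite $\Lambda$, and since each $\dg_n$ is path equivalent to $\dg$ we have $[\pi_{\dg_n}]=[\pi_\dg]=[\pi_{gs}]$, so transporting this operator inequality through the implementing unitary yields $\omega'(H_\Lambda)\ge\omega_{gs}(H_\Lambda)$ for \emph{every} state $\omega'$ whose GNS representation lies in $\mathcal S$. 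Next I would locate $\omega_{gs}$: it is pure ($\mathcal S$ is irreducible), so Theorem~\ref{thm:equiv} applies, and testing against the local observables $\tfrac12(\mathds{1}-B_f)$ and $\tfrac12(\mathds{1}-A_v)$ with $\epsilon<1$ --- using that ground-state expectations of the $A_v,B_f$ of the 3dTC are $\pm1$ --- forces $\omega_{gs}(B_f)=\omega_\dg(B_f)$ and $\omega_{gs}(A_v)=\omega_\dg(A_v)=1$ for all $v,f$ outside some finite region $\dL_0$; thus $\omega_{gs}$ carries finitely many charges and its flux excitations agree with $\dg$ outside $\dL_0$. Finally I would use that a pathologically non-monotonic $\dg$ is non-monotonic outside \emph{every} finite region (were it monotonic outside a finite region, all its non-monotonic sections would lie in a fixed bounded neighbourhood of that region, each admissible straightening would strictly lower the finite edge count there, and the sequence would terminate). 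Pick a finite cuboid $\overline{R}$, disjoint from $\dL_0$ and from the charges of $\omega_{gs}$, containing a non-monotonic section of $\dg$ between two of its vertices; straighten that section as in Section~\ref{sec:CondForGS1} to a shorter monotonic one, obtaining a path $\dg''$, path equivalent to $\dg$, with $|\dg''|_{\overline{R}}<|\dg|_{\overline{R}}$ (Lemma~\ref{lem:MonoDefGS}); let $\dS\subset\overline{R}$ be the finite surface with $\partial\dS=(\dg\cup\dg'')\setminus(\dg\cap\dg'')$, and set $\omega':=\omega_{gs}(F_\dS\,\cdot\,F_\dS)$. Then $\omega'$ is $\omega_{gs}$ composed with an inner automorphism, hence has its GNS representation in $\mathcal S$; and since $F_\dS$ commutes with every $A_v$ and anticommutes with $B_f$ exactly for $f$ dual to an edge of $\partial\dS\subset\overline{R}$, the localization of $\omega_{gs}$ just established together with Proposition~\ref{pro:EnergyBdy} gives $\omega'(H_\Lambda)-\omega_{gs}(H_\Lambda)=2\bigl(|\dg''|_{\overline{R}}-|\dg|_{\overline{R}}\bigr)<0$ for every finite $\Lambda\supseteq\dL_0\cup\overline{R}$, contradicting the energy-minimality of $\omega_{gs}$. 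Hence $\omega_\dg$ does not lie in any ground state sector, and by Lemma~\ref{lem:InfDirGS1}, $D_+(\dg)\cap D_-(\dg)\ne\emptyset$.

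The step I expect to cost the most care is the last: passing from the abstract statement ``the straightening sequence never terminates'' to an honest turnaround of $\dg$ sitting far enough out that its rerouting inside the cuboid $\overline{R}$ disturbs neither $\omega_{gs}$'s finitely many charges nor the comparison region $\dL_0$, and then carrying out the energy accounting across $\dL_0$, $\overline{R}$ and their complement. This is exactly where the infinitude of the sequence is essential --- it yields turnarounds of $\dg$ arbitrarily far from any fixed region --- while the remaining estimates are routine bookkeeping with Proposition~\ref{pro:EnergyBdy} once the two preliminary facts (sectorwise energy-minimality of $\omega_{gs}$ and the $B_f$-localization) are in hand.
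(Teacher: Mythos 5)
Your route is genuinely different from the paper's: the paper proves the first assertion directly by a pigeonhole argument (infinitely many straightenings but only six directions $(r,\sigma)$, so some direction recurs infinitely often, and the nested turnaround points it produces force $(r,\sigma)\in D_+(\dg)\cap D_-(\dg)$), and only then invokes Lemma~\ref{lem:InfDirGS1} to get the sector statement. You run the equivalence the other way, attacking the sector statement head-on with an operator-algebraic energy argument against a putative ground state $\omega_{gs}$. That is a legitimate and in some ways more honest strategy --- you correctly identify that the naive ``energies strictly decrease'' argument does not bite because the straightening windows drift to infinity --- but it imports a burden the paper's combinatorial proof never has to carry.

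That burden is where the gap sits. Your energy bookkeeping $\omega'(H_\Lambda)-\omega_{gs}(H_\Lambda)=2(|\dg''|_{\overline R}-|\dg|_{\overline R})$ requires the \emph{exact} identities $\omega_{gs}(B_f)=\omega_\dg(B_f)$ and $\omega_{gs}(A_v)=1$ outside a finite region. Theorem~\ref{thm:equiv} only delivers $|\omega_{gs}(B_f)-\omega_\dg(B_f)|<\epsilon$; to upgrade this to equality you invoke ``ground-state expectations of the $A_v,B_f$ of the 3dTC are $\pm1$,'' which is proved nowhere in the paper, is false for mixed ground states (convex combinations), and for pure ground states is essentially a piece of the classification this corollary is helping to establish. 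If you instead work with the $\epsilon$-inequalities, the perturbed energy difference becomes $2\bigl[\,|\dg''\setminus\dg|_{\overline R}-(1-\epsilon)|\dg\setminus\dg''|_{\overline R}\,\bigr]$ (up to the analogous error on the $\dg''\setminus\dg$ terms), and negativity is no longer automatic: a straightening guarantees only that the rerouted section is strictly shorter, while $|\dg\setminus\dg''|_{\overline R}$ itself can be arbitrarily large, so you must additionally select a turnaround whose rerouting changes a controlled number of edges relative to $\epsilon$. Relatedly, your parenthetical argument that a pathologically non-monotonic path has turnarounds outside every finite region is only sketched; making it precise is exactly the nested-turnaround bookkeeping that constitutes the paper's entire proof. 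So the proposal is not wrong in conception, but as written it rests on an unestablished localization claim, and closing it honestly requires either proving that claim or redoing the quantitative $\epsilon$-accounting --- at which point the paper's direct combinatorial argument is considerably shorter.
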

\begin{proof}
    If $\dg$ is pathologically monotonic, then we have an infinite sequence $\{\dg_n\}$  such that $\dg_n$ can be straightened to $\dg_{n+1}$ such that $\omega_{\dg_{n+1}}$ has a lower energy than $\omega_{\dg_n}$. Call the direction along which we straighten $\dg_n$ to $\dg_{n+1}$ as $(r_n, \sigma_n)$.\\

Let $N_{(r,\sigma)} = \#\{(r_n,\sigma_n)|n \in \mathbb{Z}_+, r_n = r, \sigma_n = \sigma\}$, where $r \in \{x,y,z\}, \sigma \in {\pm}$. Since $\#\{(r_n, \sigma_n)\} = \infty$, there must exist at least one direction $(r,\sigma)$ such that $N_{(r,\sigma)} = \infty$. \\

Pick $\dg_m$ such that $(r_m,\sigma_m) = (r,\sigma)$. Follow the straightening procedure from $\dg$ to $\dg_m$. Since $\dg$ can be straightened to $\dg_m$, there must exist two edges $\dg_m(a), \dg_m(b) \in \dg$ such that  $(\partial_0\dg_m(a))_r = (\partial_1 \dg_m(b))_r$. Let $g,h \in \mathbb{Z}$ be such that $\dg(g) = \dg_m(a), \dg(h) = \dg_m(b)$. Now pick an edge $\dg(t)$ of $\dg$ such that $g<t<h$. \\

Since $N_{r,\sigma} = \infty$, we must have another $\dg(m')$ with $m'>m$ such that $\dg$ can be straightened to it. We now have $\dg_{m'}(a), \dg_{m'}(b) \in \dg$ such that  $(\partial_0\dg_{m'}(a))_r = (\partial_1 \dg_{m'}(b))_r$. Let $g',h' \in \mathbb{Z}$ be such that $\dg(g') = \dg_{m'}(a'), \dg(h') = \dg_{m'}(b')$. We then have $g' < t < h'$. As $\dg_{m}$ can also be straightened to $\dg_{m'}$, we have $g' < g, h' > h$. This implies the direction $(r,\sigma)$ belongs to both $D_+(\dg), D_-(\dg)$. Hence $D_+(\dg) \cap D_-(\dg) \neq \emptyset$.\\
    
    Using lemma \ref{lem:InfDirGS1}, it follows that $\omega_\dg$ is not in a ground state sector. This concludes our proof.
\end{proof}

\begin{lemma}
    $\alpha_v \circ \alpha_\dg = \alpha_\dg \circ \alpha_v$, that is the automorphisms $\alpha_v, \alpha_\dg$ commute.
    \label{lem:aut_commute}
\end{lemma}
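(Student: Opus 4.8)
The plan is to push everything down to the finite building blocks $F_{\gamma_v(n)}$ and $F_{\dS_{\dg_m}}$ and then use the single elementary fact that these two operators commute up to a sign which drops out of the conjugation. Since both $\alpha_v\circ\alpha_\dg$ and $\alpha_\dg\circ\alpha_v$ are $*$-automorphisms of $\cstar{}$, hence isometric and norm-continuous, it suffices to check the identity on the dense subalgebra $\cstar{loc}$, so I would fix an arbitrary $O\in\cstar{\Lambda_0}$ with $\Lambda_0\in\Lambda_f$.

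First I would recall from the convergence arguments for these automorphisms (lemmas \ref{lem:aut_conv} and \ref{lem:aut_conv_2}) that on a local operator the defining sequences are eventually constant: $\gamma_v(n+1)$ differs from $\gamma_v(n)$ by a single edge $e_n$ that eventually lies outside $\Lambda_0$, and $\dS_{\dg_{m+1}}$ differs from $\dS_{\dg_m}$ by finitely many faces whose dual edges eventually lie outside $\Lambda_0$; since $\sigma^z_{e_n}$ and $F_{\dS_{\dg_{m+1}}\setminus\dS_{\dg_m}}$ then commute with $O$, one gets, for all $n,m$ beyond a threshold depending only on $\Lambda_0$,
$$\alpha_v(O) = F_{\gamma_v(n)}\,O\,F_{\gamma_v(n)}, \qquad \alpha_\dg(O) = F_{\dS_{\dg_m}}\,O\,F_{\dS_{\dg_m}}.$$
I would also observe that conjugation by $F_{\gamma_v(n)}$ or by $F_{\dS_{\dg_m}}$ — being conjugation by a tensor product of single-edge unitaries — cannot enlarge the support of an operator, so $\alpha_v(O)$ and $\alpha_\dg(O)$ are again supported in $\Lambda_0$. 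Feeding one into the other, for all large enough $n,m$,
$$\alpha_v(\alpha_\dg(O)) = F_{\gamma_v(n)}\,F_{\dS_{\dg_m}}\,O\,F_{\dS_{\dg_m}}\,F_{\gamma_v(n)},$$
$$\alpha_\dg(\alpha_v(O)) = F_{\dS_{\dg_m}}\,F_{\gamma_v(n)}\,O\,F_{\gamma_v(n)}\,F_{\dS_{\dg_m}}.$$

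The key step is then the observation that $F_{\gamma_v(n)}$ is a product of $\sigma^z$'s and $F_{\dS_{\dg_m}}$ a product of $\sigma^x$'s, so they either commute or anticommute: $F_{\gamma_v(n)}F_{\dS_{\dg_m}} = s\,F_{\dS_{\dg_m}}F_{\gamma_v(n)}$ with $s=(-1)^{k}\in\{+1,-1\}$, where $k$ is the number of edges that simultaneously belong to $\gamma_v(n)$ and are dual to a face of $\dS_{\dg_m}$. For unitaries with $UV=sVU$ and any operator $X$ one has $UVXVU = s^2\,VUXUV = VUXUV$, that is, the sign enters once on each side and cancels. Applying this with $U=F_{\gamma_v(n)}$, $V=F_{\dS_{\dg_m}}$, $X=O$ shows the two displayed expressions coincide, hence $\alpha_v(\alpha_\dg(O))=\alpha_\dg(\alpha_v(O))$ for every local $O$, and by density and norm-continuity $\alpha_v\circ\alpha_\dg=\alpha_\dg\circ\alpha_v$.

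The one point I would be careful about is the implicit interchange of the limits $n\to\infty$ and $m\to\infty$: this is precisely why I would lean on the eventual-constancy of the defining sequences on local operators, so that for $n,m$ past a common threshold there is genuinely nothing to interchange, rather than trying to manipulate norm limits of the unbounded products $F_{\gamma_v(n)}$ and $F_{\dS_{\dg_m}}$ directly. The remaining verifications — that the ``new'' edge of $\gamma_v(n)$ and the ``new'' faces of $\dS_{\dg_m}$ escape any fixed finite region, and that conjugation preserves support — are routine.
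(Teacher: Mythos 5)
Your proof is correct and follows essentially the same route as the paper's: both rest on the single fact that $F_{\gamma_v(n)}$ and $F_{\dS_{\dg_m}}$ commute up to a sign which appears twice under conjugation and therefore cancels. Your additional care about eventual constancy of the defining sequences on local operators (so the two limits need not actually be interchanged) and the extension by density tightens a point the paper's proof glosses over, but it is the same argument at heart.
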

\begin{proof}
    Let $O \in \cstar{}$. Then we have,
    
    \begin{align*}
        \alpha_v \circ \alpha_\dg (O) &= \lim_{n\rightarrow \infty} \lim_{m \rightarrow \infty} F_{\gamma_v(n)} F_{\dS_{\dg(m)}} O F_{\dS_{\dg(m)}} F_{\gamma_v(n)}\\
        &= (-1)^{2\{\text{linking number}\}}\lim_{n\rightarrow \infty} \lim_{m \rightarrow \infty}  F_{\dS_{\dg(m)}} F_{\gamma_v(n)} O  F_{\gamma_v(n)} F_{\dS_{\dg(m)}}\\
        &= \alpha_\dg \circ \alpha_v (O)
    \end{align*}
    Where we have used the property that $F_{\gamma_v(n)} F_{\dS_{\dg(m)}} = (-1)^{\{\text{linking number}\}}F_{\dS_{\dg(m)}} F_{\gamma_v(n)}$.
\end{proof}

Lemma \ref{lem:aut_commute} implies that one does not need to worry about the order of the automorphisms, and to build a state with charges $v_i$ and fluxes $\dg_i$, one may choose a convention to first apply $\alpha_{\dg_i}$, then $\alpha_{v_i}$.

We now use infinity directions to attempt to classify the different ground state sectors of the 3d Toric Code. Let recall and prove theorem \ref{thm:onesectorclass}:

\onesectorclass*

\begin{proof}
    Consider a state $\omega_{\{v_i\}, \dg}$ with a number of charges at $v_i$ and an infinite flux string denoted by path $\dg$. We can immediately see (with the same reasoning as theorem \ref{thm:charged-sectors}) that $[\pi_{\{v_i\}, \dg}] = [\pi_{\dg}]$ if $|\{v_i\}|$ is even, and $[\pi_{\{v_i\}, \dg}] = [\pi_{v,\dg}]$ if $|\{v_i\}|$ is odd. $g\in \mathbb{Z}_2$ is the parity of $|\{v_i\}|$, indicating if the state is charged. From theorem \ref{thm:mono-GS}, if $\dg$ is a ground state then necessarily $\dg \in \mathcal{M}'$ where $\mathcal{M}'$ is the set of all monotonic paths. From \ref{lem:PathEqvSameSector}, any two states are equivalent if their paths are path equivalent. So to find all inequivalent ground states, we need $\dg \in \mathcal{M'}/\sim =: \mathcal{M}$ where $\sim$ indicates the equivalence class of path equivalent paths. Hence the inequivalent ground states are completely labelled by ($g \in \mathbb{Z}_2$, $\dg \in \mathcal{M}$). 

\end{proof}




\section{2 string configurations}
\label{sec:2str}

Using the formalism developed in section \ref{sec:1str} we can readily tackle the classification of 2 infinite flux string states after first defining the concepts of surgery and truncation. From theorem \ref{thm:mono-GS} we already know that for $\omega_{\dg_1, \dg_2}$ to be a ground state, we have a prerequisite condition that $\dg_1, \dg_2$ must individually be monotonic. However there are additional conditions that we will explore now.\\

\subsection{Performing surgery}
\label{subsec:surgery}
Consider as an example state $\omega_{\dg_1,\ \dg_2}$ with $\dg_1, \dg_2$ as two infinite monotonic paths as shown in figure \ref{fig:parallel}. Let us choose a rectangular finite surface $\dS$ as shown in the figure \ref{fig:double_U} such that we get $ \omega_{\dg,\dg_1, \dg_2} = \omega_{\dg_1', \dg_2'}$ where $\dg_1', \dg_2'$ are two disconnected U shaped paths and $\dg = \partial \dS$. Since ${\dg_1', \dg_2'}$ are both pathologically non-monotonic, using corollary \ref{cor:NotGSSIfInfDir} $\hilb_{\dg_1, \dg_2}$ is not a ground state sector.\\

\begin{figure}[t!]
    \centering

        \centering
    \begin{subfigure}[t]{0.5\textwidth}
        \centering
\includegraphics[ width=0.5\textwidth]{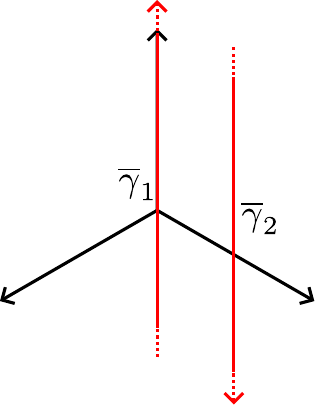}
\caption{}
\label{fig:parallel}
    \end{subfigure}%
    ~ 
    \begin{subfigure}[t]{0.5\textwidth}
        \centering
\includegraphics[ width=0.5\textwidth]{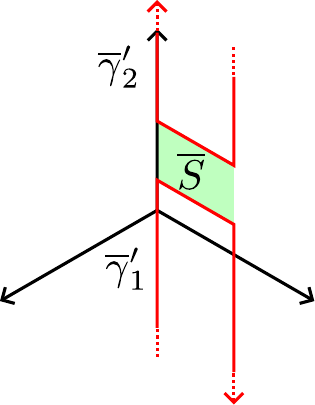}
\caption{}
\label{fig:double_U}

    \end{subfigure}
    \caption{Two parallel paths $\dg_1, \dg_2$ on which we perform surgery using $\dS$ which converts into a double U shaped configuration $\dg_1', \dg_2'$}
    \label{fig:parallelpaths}
\end{figure}

The above example can be called a "surgery" as we are in effect cutting the strings and reattaching them differently. In general we may perform surgery on any N infinite path configuration. It amounts to the following algorithm:

\begin{itemize}
    \item Let $\{\dg_n\}_{n=1}^N$ be a set of non-overlapping paths in the configuration.
    \item Consider a finite simply connected convex surface $\dS$ such that $\partial \dS \cap \dg_n \neq \emptyset$ for at least a single $n$.
    \item Performing surgery geometrically amounts to considering the set $(\partial \dS \bigcup_{n=1}^N \dg_n)\setminus \bigcup_{n=1}^N(\partial \dS \cap \dg_n)$
    \item     The set $(\partial \dS \bigcup_{n=1}^N \dg_n)\setminus \bigcup_{n=1}^N(\partial \dS \cap \dg_n)$ is a set of new paths $\{\dg_n'\}_{n=1}^{N}$. We prove this fact in theorem \ref{thm:set_of_paths}.
\end{itemize}

\begin{remark}
$\dS$ is finite, so surgery is performed using $\alpha_\dg \in \cstar{}$ and does not change the sector of configuration.
\end{remark}

\begin{remark}
    The set $(\partial \dS \bigcup_{n=1}^N \dg_n)\setminus \bigcup_{n=1}^N(\partial \dS \cap \dg_n)$ is specifically considered to reflect the $\mathbb{Z}_2$ grading of the flux operators $F_\dS$. This property geometrically means that two flux strings going over the same edge cancel out on that edge.
\end{remark}

\begin{theorem}
\label{thm:2monoGS}
    A state $\omega_{\dg_1, \dg_2}$ for $\dg_1, \dg_2$ as monotonic paths is in a ground state sector iff $D(\dg_1) \cap D(\dg_2) = \emptyset$
\end{theorem}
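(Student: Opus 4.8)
The plan is to prove both implications, in each case using the two sector-preserving moves already at our disposal: surgery along a finite surface $\dS$ (Section \ref{subsec:surgery}) and the straightening procedure (Section \ref{sec:1str}). Throughout we may assume $\dg_1,\dg_2$ are non-overlapping, as in the surgery algorithm, and we may ignore charges, since adjoining a charge $\alpha_v$ only replaces the sector by its charged partner, which contains a ground state iff the original one does (cf.\ Theorem \ref{thm:charged-sectors} and the proof of Theorem \ref{thm:onesectorclass}). A useful bookkeeping fact, checked directly from the definition, is that surgery with a finite $\dS$ changes only how the half-infinite rays of the configuration are attached, so it leaves invariant the multiset $\bigsqcup_i D(\dg_i)$ of escape directions; in particular the hypothesis $D(\dg_1)\cap D(\dg_2)=\emptyset$, and its negation, are stable under surgery.

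\emph{Necessity} ($\Rightarrow$), by contraposition. Suppose $(r,\sigma)\in D(\dg_1)\cap D(\dg_2)$. Using the orientation-reversal freedom of Lemma \ref{lem:redefinfdir} on each path independently, we may relabel so that $(r,\sigma)\in D_+(\dg_1)\cap D_+(\dg_2)$; thus both paths have infinitely many edges in $\edge^{r,\sigma}$ accumulating at their ``$+$'' ends, and in particular their $r$-coordinates both run off to $\sigma\infty$. Choosing points $q_1\in\dg_1$, $q_2\in\dg_2$ far out along this common direction and a finite surface $\dS$ that reconnects the two ``$+$''-rays through a finite bridge (the prototype being the double-U of Figure \ref{fig:parallelpaths}, after first straightening $\dg_1,\dg_2$ near these rays so that they run parallel), surgery with $\dS$ produces a pair $\{\dg_1',\dg_2'\}$ in the same sector as $\omega_{\dg_1,\dg_2}$ having one component $\dg_i'$ that escapes to infinity in direction $(r,\sigma)$ on both of its ends, so that $(r,\sigma)\in D_+(\dg_i')\cap D_-(\dg_i')$. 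Iterating the straightening procedure on $\dg_i'$ alone (Corollary \ref{cor:InfLowE}), holding the other component fixed, yields — exactly as in the proofs of Lemma \ref{lem:InfDirGS1} and Corollary \ref{cor:NotGSSIfInfDir}, since $(r,\sigma)$ recurs infinitely often on both arms — an infinite sequence of states in the sector of $\omega_{\dg_1,\dg_2}$ whose energy in suitable finite regions strictly decreases without ever attaining a minimiser. Hence this sector contains no ground state.

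\emph{Sufficiency} ($\Leftarrow$). Suppose $D(\dg_1)\cap D(\dg_2)=\emptyset$; we claim $\omega_{\dg_1,\dg_2}$ is itself a ground state. By Lemma \ref{lem:region} there is a finite region $\dL_0$ outside which $\dg_j\subseteq\bigcup_{(r,\sigma)\in D(\dg_j)}\edge^{r,\sigma}$ for $j=1,2$. Fix any finite cuboidal region $\dL'\supseteq\dL_0$ positioned, as in Lemma \ref{lem:MonoDefGS}, so that the relevant finite endpoints of $\dg_1$ and of $\dg_2$ lie on longest diagonals of $\dL'$; since the two paths are non-overlapping, $\omega_{\dg_1,\dg_2}(H_{\dL'})=|\dg_1|_{\dL'}+|\dg_2|_{\dL'}$ by Proposition \ref{pro:EnergyBdy}. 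Each $\dg_j$, being monotonic, already uses the minimum number of edges to join its endpoints inside $\dL'$ (Lemma \ref{lem:MonoDefGS}); and because $D(\dg_1)\cap D(\dg_2)=\emptyset$, the portions of $\partial\dL'$ through which the two strings pass are independent, so no surgery/straightening move between $\dg_1$ and $\dg_2$ inside $\dL'$ can respect these crossings without creating a non-monotonic — hence strictly longer — component. Thus $\omega_{\dg_1,\dg_2}(H_{\dL'})$ is the lowest energy attainable in $\dL'$; since $\dL'$ was arbitrary, $\omega_{\dg_1,\dg_2}$ is a ground state, and a fortiori its sector is a ground state sector.

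\emph{Where the difficulty lies.} The necessity direction is a direct construction modelled on Lemma \ref{lem:InfDirGS1}, and its only delicate point is arranging the surgery surface cleanly near the two common rays. The real work is the final claim in the sufficiency argument: one must turn ``the portions of $\partial\dL'$ pierced by $\dg_1$ and by $\dg_2$ are independent'' into a genuine combinatorial statement, i.e.\ a two-string refinement of the edge count in Lemma \ref{lem:MonoDefGS} showing that any reconnection merging the two strings is strictly longer in every sufficiently large adapted cuboid precisely because $D(\dg_1)\cap D(\dg_2)=\emptyset$. This is the step that encodes the geometric content of the theorem, and it is where I would spend most of the effort.
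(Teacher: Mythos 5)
Your necessity argument is essentially the paper's: relabel via Lemma \ref{lem:redefinfdir} so that the shared direction lies in $D_+(\dg_1)\cap D_+(\dg_2)$, perform a finite surgery joining the two $+$ rays, observe that the resulting component $\dg_1'$ has $D_+(\dg_1')\cap D_-(\dg_1')\neq\emptyset$, and conclude via Lemma \ref{lem:InfDirGS1} and Corollary \ref{cor:NotGSSIfInfDir}. That direction is sound and matches the paper.

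The sufficiency direction contains a genuine gap, and one of its intermediate claims is false as stated. You assert that any surgery/straightening move mixing $\dg_1$ and $\dg_2$ inside $\dL'$ must ``create a non-monotonic --- hence strictly longer --- component.'' That is not true: take $\dg_1$ running along the $x$-axis and $\dg_2$ along the $y$-axis (so $D(\dg_1)\cap D(\dg_2)=\emptyset$); a surgery reconnecting them produces two components running from $(x,-)$ to $(y,+)$ and from $(y,-)$ to $(x,+)$, and both can be taken monotonic. So monotonicity of the individual reconnected components cannot certify minimality of the original pair; what must be compared is the \emph{total} edge count of the surgered configuration against the original in every large region, and that comparison is precisely the step you defer to future effort. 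You also aim at the stronger conclusion that $\omega_{\dg_1,\dg_2}$ is itself a ground state, which is more than the theorem claims. The paper sidesteps all of this by arguing one level up: under surgery the infinity data transforms as $D_+(\dg_1')=D_+(\dg_1)$, $D_-(\dg_1')=D_+(\dg_2)$, $D_+(\dg_2')=D_-(\dg_1)$, $D_-(\dg_2')=D_-(\dg_2)$, so the hypothesis $D(\dg_1)\cap D(\dg_2)=\emptyset$ together with monotonicity of each path guarantees that no component of any surgered configuration ever acquires $D_+\cap D_-\neq\emptyset$; hence no component is ever pathologically non-monotonic, the energy can only be lowered within some finite region, and the sector therefore contains a ground state. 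To complete your write-up you must either prove the deferred two-string edge-count lemma or replace the local-minimality argument with this infinity-direction bookkeeping.
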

\begin{proof}
We first choose a convenient prescription for $D_\pm(\dg_1), D_\pm(\dg_2)$ taking advantage of lemma \ref{lem:redefinfdir} such that if there exists $(r,\sigma) \in D(\dg_1) \cap D(\dg_2)$, then $(r,\sigma) \in D_+(\dg_1), D_+(\dg_2)$. Since $\dg_1, \dg_2$ are monotonic, we have from lemma \ref{lem:InfDirGS1},
\begin{align}
\label{eqn:InfDirNoIntersect}
    &D_+(\dg_1) \cap D_-(\dg_1) = \emptyset & D_+(\dg_2) \cap D_-(\dg_2) = \emptyset
\end{align}
Let us perform surgery on $\dg_1, \dg_2$. Consider a finite region $\dL$ in accordance with lemma \ref{lem:region}. Choose a finite $\dS$ within $\dL$ such that $\partial \dS \cap \dg_i \neq \emptyset$ with $i = 1,2$. If the paths are non-overlapping then we will have $\dS \setminus (\dg_1 \cup \dg_2) \neq \emptyset$. If the paths are overlapping then we can always consider instead a finite deformation of these paths which makes them non-overlapping. Performing surgery will give us paths $\dg'_1, \dg'_2$ such that (upto relabelling of $D_\pm(\dg_i')$ using lemma \ref{lem:redefinfdir}) $D_+(\dg'_1) = D_+(\dg_1), D_- (\dg_1') = D_+(\dg_2)$ and $D_+(\dg'_2) = D_-(\dg_1), D_- (\dg_2') = D_-(\dg_2)$. If as previously assumed $(r,\sigma) \in D_+(\dg_1), D_+(\dg_2)$, then $D_+(\dg_1') \cap D_-(\dg_1') \neq \emptyset$. From lemma \ref{lem:InfDirGS1}, $\dg'_1$ is pathologically non-monotonic. So $\omega_{\dg_1, \dg_2}$ is not in a ground state sector. \\

To prove the converse, we have $D(\dg_1) \cap D(\dg_2) = \emptyset$. Since $\dg_1, \dg_2$ are monotonic, we also have eqn \ref{eqn:InfDirNoIntersect}. So even after performing surgery, we still have $D(\dg'_1) \cap D(\dg'_2) = \emptyset$. So we can only lower the energy of $\omega_{\dg'_1, \dg'_2}$ at most in a finite region $\dL$. It follows that $\omega_{\dg_1,\dg_2}$ is in a  ground state sector.
\end{proof}

Indeed, if we have a ground state $\omega_{\dg_1, \dg_2}$ where $\dg_1, \dg_2$ are two infinite monotonic paths, then $\omega_{\dg_1, \dg_2}$ lies in a ground state sector iff $D(\dg_1) \cap D(\dg_2) = \emptyset$. However, we have multiple cases corresponding to the different infinity directions of $\dg_1, \dg_2$, and we can say a little more about these cases if we consider them individually. \\

To classify the 2 infinity flux string case, we will have to first understand the solutions of $D(\dg_1) \cap D(\dg_2) = \emptyset$. We call the equation $D(\dg_1) \cap D(\dg_2) = \emptyset$ together with $D_+(\dg_i) \cap D_-(\dg_i) = \emptyset$ the ground state condition, GSC in short.

\subsection{Distinct solutions of the ground state condition}
\label{sec:find_all_classes}

 Let's divide the solutions into cases with increasing $|D(\dg_1)|, |D(\dg_2)|$. We also have the conditions $D_+(\dg_i) \cap D_-(\dg_i) = \emptyset$, and further $|D_\pm(\dg_i)| >0$ with $i = 1,2$. Let $\sigma \in \{\pm\}$. Define $\overline{\sigma} = \{\pm\} \setminus \sigma$.\\

We supplement each solution with a figure. We interpret the figures in the following way: there are 6 axes, each corresponding to a direction $(r,\sigma)$. The infinity directions for $\dg_1$ are given in red, while the ones for $\dg_2$ are given in blue. A red shaded plane or region means all infinity directions touching it belong to $\dg_1$ and similarly for the blue shaded region they belong to $\dg_2$. A shaded kite means it has 2 infinity directions touching it, and a shaded cube means it has 3 infinity directions touching it.\\

Case I: $|D(\dg_i)| = 2$: This is the easiest case. We first assume $D(\dg_1) = \{(r_1,\sigma_1), (r_2,\sigma_2)\}$ and $D(\dg_2) = \{(r_3  ,\sigma_3), (r_4,\sigma_4)\}$. Solving for GSC gives us $r_4 = r_1, \sigma_4 = \overline{\sigma}_1$. Notice that after surgery, simplification and relabelling, we can always obtain $D(\dg_1) = \{(r_1,\sigma_1), (r_1,\overline{\sigma}_1)\}$. Figure \ref{fig:D2} shows the solutions for case I.\\

\begin{figure}
    \centering
    \includegraphics[width = 0.25\textwidth]{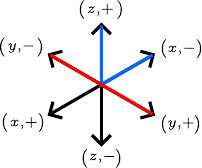}
    \caption{Solution for case I}
    \label{fig:D2}
\end{figure}

Case II: $|D(\dg_1)| = 2, |D(\dg_2)| = 3$: In this case, we have $D(\dg_1) = \{(r_1,\sigma_1), (r_2,\sigma_2)\}$ and $D(\dg_2) = \{(r_3  ,\sigma_3), (r_4,\sigma_4), (r_5, \sigma_5)\}$. Solving out for GSC will give us 3 separate solutions:
\begin{itemize}
    \item A: $r_1 \neq r_2$, $r_4 = r_1, \sigma_4 = \overline{\sigma}_1$, $r_5 = r_2, \sigma_5 = \overline{\sigma}_2$. Figure \ref{fig:D2=3,A} shows 3 solutions for case A. There are 3 additional solutions that are mirrored along the $x-z$ plane.

    \begin{figure}
    \centering
    \includegraphics[width = 0.8\textwidth]{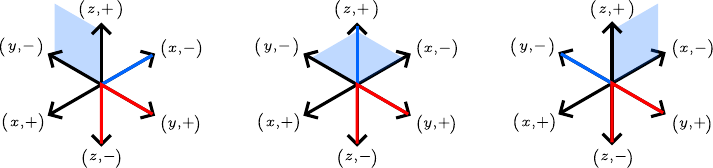}
    \caption{Solutions for case II.A (left), B (center), C (right).}
    \label{fig:D2=3,A}
\end{figure}


\item B: $r_1 \neq r_2$, $r_4 = r_3, \sigma_4 = \overline{\sigma}_3$, $r_5 = r_1, \sigma_5 = \overline{\sigma}_1$. Figure \ref{fig:D2=3,B} shows the solutions of case $B$.


    \item C: $r_1 = r_2, \sigma_2 = \overline{\sigma}_1$, $r_4 = r_3, \sigma_4 = \overline{\sigma}_3$. Figure \ref{fig:D2=3,C_after_surg} (left) shows the solutions for case C. After performing surgery it can be turned into \ref{fig:D2=3,C_after_surg} (right).

So we can always reduce case II to one where $\dg_1$ is has an L shape.
\end{itemize}



\begin{figure}
    \centering

    \begin{subfigure}[t]{0.3\textwidth}
        \centering
    \centering
        \includegraphics[width = 0.7\textwidth]{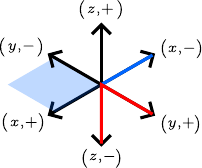}
    \caption{Solution for case II.B}
    \label{fig:D2=3,B}
    \end{subfigure}%
    ~ 
    \begin{subfigure}[t]{0.7\textwidth}
        \centering
    \includegraphics[width = 0.7\textwidth]{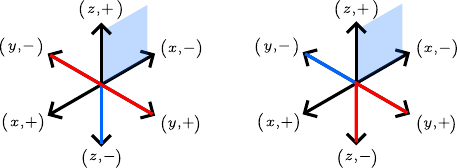}
    \caption{Solution for case II.C before (left) and after surgery (right).}
    \label{fig:D2=3,C_after_surg}

    \end{subfigure}
    \caption{}
\end{figure}

Case III: $|D(\dg_1)| =2$, $|D(\dg_2)| = 4$: In this case, we have $D(\dg_1) = \{(r_1,\sigma_1), (r_2,\sigma_2)\}$ and $D(\dg_2) = \{(r_3  ,\sigma_3), (r_4,\sigma_4), (r_5, \sigma_5), (r_6, \sigma_6)\}$. Solving out for GSC gives us 2 distinct solutions:
\begin{itemize}
    \item A: $r_2 = r_1$, $\sigma_2 = \overline{\sigma}_1$, $r_4 = r_3$, $\sigma_4 = \overline{\sigma}_3$, $r_6 = r_5$, $\sigma_6 = \overline{\sigma}_5$. Figure \ref{fig:D2=4} shows the solutions to this case.
    \item B: $r_3 = r_1$, $\sigma_3 = \overline{\sigma}_1$, $r_4 = r_2$, $\sigma_4 = \overline{\sigma}_2$, $r_6 = r_5$, $\sigma_6 = \overline{\sigma}_5$. There are two distinct solutions for this case corresponding to different sets $D_\pm(\dg_2)$, shown in Figure \ref{fig:D2=4}.
\end{itemize}
\begin{figure}
    \centering
    \includegraphics[width=0.8\textwidth]{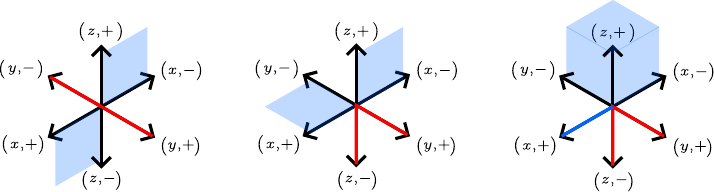}
    \caption{Solutions for case III.A (left) and III.B (center),(right).}
    \label{fig:D2=4}
    
\end{figure}
Case IV: $|D(\dg_1)| =3$, $|D(\dg_2)| = 3$: We of course have $D(\dg_1) = \{(r_1,\sigma_1), (r_2,\sigma_2)\}$ and $D(\dg_2) = \{(r_3  ,\sigma_3), (r_4,\sigma_4), (r_5, \sigma_5), (r_6, \sigma_6)\}$. Solving out for GSC gives us only a single solution:
\begin{itemize}
    \item A: $r_4 = r_1$, $\sigma_4 = \overline{\sigma}_1$, $r_3 = r_2$, $\sigma_3 = \overline{\sigma}_2$, $r_6 = r_5$, $\sigma_6 = \overline{\sigma}_5$. Figure \ref{fig:D1=3,D2=3} shows the possible solutions for this case. But notice that figure \ref{fig:D1=3,D2=3} (left) after surgery can be reduced to case IIIA, while \ref{fig:D1=3,D2=3} (right) can be reduced to case IIIB. So there are no distinct solutions in this case that haven't been covered in previous cases.
\end{itemize}
\begin{figure}
    \centering
    \includegraphics[width = 0.8\textwidth]{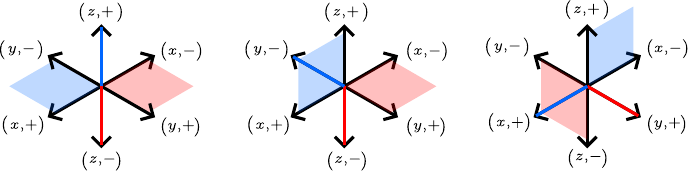}

    \caption{Solutions for case IV.A.}
    \label{fig:D1=3,D2=3}
\end{figure}

Note that we will also have discrete $\pi/2$ rotations and reflections about the $x,y,z$ axis of figures \ref{fig:D2} - \ref{fig:D1=3,D2=3} as solutions.

\subsection{Classification of 2 infinite flux strings}
We see from section \ref{sec:find_all_classes} that all possible cases of infinity directions are covered in cases I,II,III. Notice that in all cases, after surgery, simplifcation and relabelling, we have $|D(\dg_1)|=2$. We will divide the classification according to whether $D(\dg_1) =\{(r,\sigma), (r,\overline{\sigma})\} $ or $D(\dg_1) = \{(r,\sigma), (r',{\sigma}')\}$ with $r \neq r'$.\\

Let us first establish some useful defintions in the classification. Recall the definition \ref{def:halfpath} of a half-path $\dg_v^{(r,\sigma)}$ as having its start/end point as $v$ and $\dg_v^{(r,\sigma)}(t) \in \mathcal{E} ^{r,\sigma}$.\\



Let $p_{(r_1, \sigma_1, \tau_1), (r_2, \sigma_2, \tau_2)}$ denote a monotonic path such that
\begin{align}
    p_{(r_1, \sigma_1, \tau_1), (r_2, \sigma_2, \tau_2)}(t) \in \dg^{(r_1,\sigma_1)}_{\tau_1} \qquad t>t_+\\
    -p_{(r_1, \sigma_1, \tau_1), (r_2, \sigma_2, \tau_2)}(t) \in \dg^{(r_2,\sigma_2)}_{\tau_2} \qquad t<t_-
\end{align}
And let $\mathcal{P}_{(r_1, \sigma_1, \tau_1), (r_2, \sigma_2, \tau_2)}(t)$ denote the set of paths that are path equivalent to $p_{(r_1, \sigma_1, \tau_1), (r_2, \sigma_2, \tau_2)}(t)$.\\

Before the classification of the 2 string sectors we will need a few results:

\begin{proposition}
\label{thm:Pset}
If $\dg$ is a monotonic path with $D(\dg) = \{(r_1,\sigma_1), (r_2, \sigma_2)\}$, then we have $\dg \in \mathcal{P}_{(r_1, \sigma_1, \tau_1), (r_2, \sigma_2, \tau_2)}$
\end{proposition}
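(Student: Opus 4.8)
The plan is to show that a monotonic path $\dg$ with exactly two infinity directions $D(\dg) = \{(r_1,\sigma_1),(r_2,\sigma_2)\}$ must, outside of some sufficiently large finite region, coincide with half-infinite straight rays pointing in those two directions, and that any such path is path equivalent to the canonical representative $p_{(r_1,\sigma_1,\tau_1),(r_2,\sigma_2,\tau_2)}$ for an appropriate choice of the auxiliary labels $\tau_1,\tau_2$ (which record which endpoint, start or end, carries which direction). The key tool is Lemma \ref{lem:region}: since $(r,\sigma) \notin D(\dg)$ implies only finitely many edges of $\dg$ point in direction $(r,\sigma)$, there is a finite region $\dL$ so that every edge $\dg(t)$ with $\dg(t) \in \edge(\dL^c)$ satisfies $\dg(t) \in \edge^{r_1,\sigma_1} \cup \edge^{r_2,\sigma_2}$.

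First I would fix $\dL$ as above and observe that $\dG \setminus \dL$ splits $\dg$ into pieces; since $\dg$ is an infinite path (a bi-infinite sequence of edges) and $\dL$ is finite, there is a threshold $T$ such that for all $t > T$ we have $\dg(t) \in \edge(\dL^c)$, and similarly for all $t < -T$ (after relabelling the parameter so that the ``escape to infinity'' happens symmetrically; more carefully, one argues that $\dg$ leaves $\dL$ and never returns in each direction, using that $\dg$ does not self-intersect and that all far-away edges point in only two directions). Next, on the tail $t > T$, each edge points in $(r_1,\sigma_1)$ or $(r_2,\sigma_2)$. Here I would invoke monotonicity together with the two-direction constraint: a monotonic path whose far tail uses only the two directions $(r_1,\sigma_1),(r_2,\sigma_2)$ — and uses each infinitely often, by definition of $D(\dg)$ — actually can and must be deformed, by finitely many local moves inside an enlarged finite region, into a path that is eventually a straight ray in one of the two directions. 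The point is that on a single monotonic tail, you cannot have \emph{both} $(r_1,\sigma_1)$ and $(r_2,\sigma_2)$ occurring infinitely often on the \emph{same} end without the path failing to be a single path escaping to infinity in a controlled way; more precisely, one of the two directions is exhausted on the $t \to +\infty$ end and the other on the $t \to -\infty$ end. So one direction, say $(r_1,\sigma_1)$, is the positive infinity direction and $(r_2,\sigma_2)$ the negative one, matching the roles of $\tau_1,\tau_2 = \partial_1,\partial_0$ in the definition of $p_{(r_1,\sigma_1,\tau_1),(r_2,\sigma_2,\tau_2)}$.

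Then I would assemble the path-equivalence claim: let $\dg(t)$ for $t > T$ use direction $(r_1,\sigma_1)$ cofinitely, and $\dg(t)$ for $t < -T$ use $(r_2,\sigma_2)$ cofinitely. The canonical representative $p := p_{(r_1,\sigma_1,\tau_1),(r_2,\sigma_2,\tau_2)}$ is, outside $[t_-,t_+]$, exactly a ray in $\dg^{(r_1,\sigma_1)}_{\tau_1}$ (resp.\ $\dg^{(r_2,\sigma_2)}_{\tau_2}$). Choosing the starting vertex of $p$'s positive ray to match the vertex where $\dg$ settles into its $(r_1,\sigma_1)$-ray, and similarly on the negative side, we get $\dg(t) = p(t)$ for all but finitely many $t$, i.e.\ the set $\{t : \dg(t) \notin p\}$ is bounded above and below. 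That is precisely path equivalence, so $\dg \in \mathcal{P}_{(r_1,\sigma_1,\tau_1),(r_2,\sigma_2,\tau_2)}$.

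The main obstacle I expect is the step where monotonicity plus ``only two directions at infinity'' is leveraged to conclude that each of the two tails is \emph{eventually a straight ray} rather than, say, a staircase alternating between $(r_1,\sigma_1)$ and $(r_2,\sigma_2)$ infinitely often. A priori such a staircase tail is monotonic and uses exactly the two allowed directions, so one must argue either (a) that a staircase tail makes both $(r_1,\sigma_1)$ and $(r_2,\sigma_2)$ belong to $D_+(\dg)$ \emph{and} there is no contribution left for $D_-(\dg)$, contradicting $|D_\pm(\dg)| > 0$ (which forces $|D(\dg)| \geq 2$ to be split one-and-one between the ends), or (b) that even if a staircase occurs, it is path equivalent to the straight-ray configuration via the surgery/straightening machinery of Section \ref{sec:1str} applied in a bounded region — but the latter is delicate because the staircase is not confined to a finite region. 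I would resolve this by the counting argument: if a tail $t > T$ contained infinitely many edges in each of $(r_1,\sigma_1)$ and $(r_2,\sigma_2)$, then both directions would lie in $D_+(\dg)$, and since $|D(\dg)| = 2$ we would get $D_-(\dg) \subseteq D_+(\dg)$, forcing (after the redefinition freedom of Lemma \ref{lem:redefinfdir} is accounted for) a violation of $D_+(\dg) \cap D_-(\dg) = \emptyset$ for the monotonic path $\dg$ — contradiction with Lemma \ref{lem:InfDirGS1} applied to the monotonic (hence ground-state-sector) path. Hence each tail is cofinitely a single direction, and the construction above goes through.
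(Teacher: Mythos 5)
Your proposal is correct and follows essentially the same route as the paper: apply Lemma \ref{lem:region} to confine the non-$D(\dg)$ edges to a finite region, then use the fact that for a monotonic path $D_+(\dg)$ and $D_-(\dg)$ are disjoint and nonempty, so with $|D(\dg)|=2$ each must be a singleton, forcing each tail to be cofinitely a straight ray and hence $\dg$ path equivalent to $p_{(r_1,\sigma_1,\tau_1),(r_2,\sigma_2,\tau_2)}$. The ``staircase'' worry you raise and dispatch via the counting argument is exactly what the paper's terse opening line ``$D_+(\dg)=(r_1,\sigma_1)$, $D_-(\dg)=(r_2,\sigma_2)$ up to relabelling'' silently handles, so your write-up is just a more explicit version of the same proof.
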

\begin{proof}
Since $\dg$ monotonic, $D_+(\dg) = (r_1, \sigma_1), D_-(\dg) = (r_2,\sigma_2)$ (upto relabelling of $D_\pm(\dg)$). Using lemma \ref{lem:region} we have a region $\dL$ such that $\dg(t) \in \mathcal{E}^{r_1,\sigma_1}$  and $-\dg(t) \in \mathcal{E}^{r_2,\sigma_2}$ for all $\dg(t) \in \mathcal{E}(\dL^c)$.\\

So there exist $t_\pm$ such that $\dg(t) \in \dg^{(r_1,\sigma_1)}_{\tau_1}$ for all $t>t_+$ and $-\dg(t) \in \dg^{(r_2,\sigma_2)}_{\tau_2}$ for all $t<t_-$ for particular $\tau_1,\tau_2$. This implies $\dg \in \mathcal{P}_{(r_1, \sigma_1, \tau_1), (r_2, \sigma_2, \tau_2)}$.
\end{proof}

We term $\mathcal{Q}^{D(\dg)}_\tau$ as the set of paths $\dg$ that have infinity directions $D(\dg)$ and there exists a half-path $\dg^{(r,\sigma)}_\tau \subset \dg$ for $(r,\sigma) \in D(\dg)$.\\

We also term $\mathcal{R}^{D(\dg)}$ as the set of paths $\dg$ that have infinity directions $D(\dg)$ with no additional restrictions.

\subsubsection{When $D(\dg_1) = \{(r,\sigma), (r,\overline{\sigma})\}$}
\label{subsec:2strcase1}
\begin{itemize}
    \item {$D(\dg_2) = \{(r_1, \sigma_1), (r_2, {\sigma}_2)\}$}:\\
    
Here GSC implies $r \neq r_1, r \neq r_2$. In this case the sectors are labelled by $$(g\in \mathbb{Z}_2, \dg_1 \in \mathcal{P}_{(r, \sigma, \tau), (r, \overline{\sigma}, \tau')}, \dg_2 \in \mathcal{P}_{(r_1, \sigma_1, \tau_1), (r_2, \sigma_2, \tau_2)})$$ $\dg_1,\dg_2$ are labelled using proposition \ref{thm:Pset}. We include $g \in \mathbb{Z}_2$ to take into account if the sector is charged or uncharged (which are the only two possibilities from the same reasoning as theorem \ref{thm:onesectorclass}). We will omit this reasoning for the proceeding cases. There is a special subcase when $r_1 = r_2$. Then GSC imposes $\sigma_1 = \overline{\sigma}_2$ and we have a slightly tighter constraint, $\dg_2 \in \mathcal{P}_{(r_1, \sigma_1, \tau_1), (r_1, \overline{\sigma}_1, \tau_2)}$


\item $D(\dg_2) = \{(r_1, \sigma_1), (r_2, {\sigma}_2), (r_3, \sigma_3), (r_4, {\sigma}_4)\}$\\

Here GSC implies $r \neq r_1 \neq r_2 \neq r_4 \neq r_3$, $r_2 = r_1, \sigma_2 = \overline{\sigma}_1$, $r_3 = r_4, \sigma_4 = \overline{\sigma}_3$. This simplifies to $D(\dg_2) = \{(r_1, \sigma_1), (r_1, \overline{\sigma}_1), (r_2, \sigma_2), (r_2, \overline{\sigma}_2)\}$. For this case we have, $$(g \in \mathbb{Z}_2, \dg_1 \in \mathcal{P}_{(r, \sigma, \tau), (r', \overline{\sigma}, \tau')}, \dg_2 \in \mathcal{R}^{D(\dg_2)})$$
Here $
\dg_2 \in \mathcal{R}^{D(\dg_2)}$ follows by definition, and the rest of the labelling is exactly the same as the previous cases.

\end{itemize}

\subsubsection{When $D(\dg_1) = \{(r,\sigma), (r',{\sigma}')\}$ with $r \neq r'$}
\label{subsec:2strcase2}

\begin{itemize}
    \item \label{item:qref} $D(\dg_2) = \{(r_1, \sigma_1), (r_2, {\sigma}_2), (r_3, \sigma_3)\}$:\\

    Here GSC implies $r \neq r_1 \neq r_2 \neq r_3$, $r_2 = r_1$, $\sigma_2 = \overline{\sigma}_1$. This simplifies as $D(\dg_2) = \{(r_1, \sigma_1), (r_1, \overline{\sigma}_1), (r_3, \sigma_3)\}$. In this case we have the labelling $$(g \in \mathbb{Z}_2, \dg_1 \in \mathcal{P}_{(r, \sigma, \tau), (r', \overline{\sigma}, \tau')}, \dg_2 \in \mathcal{Q}_{\tau_2}^{D(\dg_2)})$$
Of course $\dg_1$ labelling is given by proposition \ref{thm:Pset}. With the simplification of $D(\dg_2)$, it needs to be divided into $D_\pm(\dg_2)$. Since $D_\pm(\dg_2) \neq \emptyset$ we are left only with the possibility that $|D_+(\dg_2)| = 2$ and $|D_-(\dg_2)| = 1$ (upto relabelling of $D_\pm(\dg_2)$). This implies  $\dg^{(s,\rho)}_\tau \subset -\dg_2$ for $(s,\rho) \in D_-(\dg_2)$. So it follows $\dg_2 \in \mathcal{Q}_{\tau_2}^{D(\dg_2)}$.
\item $D(\dg_2) = \{(r_1,\sigma_1), (r_2, \sigma_2), (r_3, \sigma_3)\}$\\
GSC implies $r_1 = r, \sigma_1 = \overline{\sigma} , r_3 = r_2, \sigma_3 = \overline{\sigma}_2$. In this case we have the labelling of sector as $$(g \in \mathbb{Z}_2, \dg_1 \in \mathcal{P}_{(r, \sigma, \tau), (r', \sigma', \tau')}, \dg_2 \in \mathcal{Q}_{\tau''})$$.

\item $D(\dg_2) = \{(r_1,\sigma_1), (r_2, \sigma_2), (r_3, \sigma_3), (r_4, \sigma_4)\}$

GSC gives us $r_1 = r, \sigma_1 = \overline{\sigma}$,$ r_2 = r', \sigma_2 = \overline{\sigma}'$,$ r_4 = r_3, \sigma_4 = \overline{\sigma}_3$. This simplifies $D(\dg_2)$ to $$D(\dg_2) = \{(r,\overline{\sigma}), (r', \overline{\sigma}'), (r_3, \sigma_3), (r_3, \overline{\sigma}_3)\}$$ Now we have two possibilities: $|D_+(\dg_2)| = 3, |D_-(\dg_2)| = 1$ or $|D_+(\dg_2)| = 2, |D_-(\dg_2)| = 2$ (upto relabelling of course).

When $|D_+(\dg_2)| = 3, |D_-(\dg_2)| = 1$ our labelling of sector is $$(g\in \mathbb{Z}_2, \dg_1 \in \mathcal{P}_{(r, \sigma, \tau), (r', \sigma', \tau')}, \dg_2 \in \mathcal{Q}^{D(\dg_2)}_{\tau_2})$$.\\

When $|D_+(\dg_2)| = 2, |D_-(\dg_2)| = 2$ our labelling of sector is $$(g\in \mathbb{Z}_2, \dg_1 \in \mathcal{P}_{(r, \sigma, \tau), (r', \sigma', \tau')}, \dg_2 \in \mathcal{R}^{D(\dg_2)})$$ The labelling $\dg_2 \in \mathcal{R}^{D(\dg_2)}$ follows straightforwardly from the definition.
\end{itemize}

\section{Ground States on 3+ string configurations}
\label{sec:3str}
\subsection{3 string configurations}
The classification of states with 3 infinite flux strings proceeds similarly to that of 2 strings, but turns out to be much simpler.
\gss*

\begin{proof}
The proof proceeds in the same way as theorem \ref{thm:2monoGS}. If after performing surgery we find any $\dg'_n$ that's pathologically non-monotonic, then $\omega_{\dg_1,\cdots ,\dg_N}$ is not in a ground state sector.
\end{proof}

For the case of 3 infinite flux strings, we must have $|D(\dg_i)|=2$ for $i = 1,2,3$. Let $D(\dg_i) = \{(r_{2i-1},\sigma_{2i-1}), (r_{2i},\sigma_{2i})\}$. Then solving for GSC leads to 3 cases:
\begin{itemize}
    \item case A: $r_{2i-1} = r_{2i}$, $\sigma_{2i-1} = \overline{\sigma}_{2i}$
    \item case B: $r_3 = r_1$, $\sigma_3 = \overline{\sigma}_1$, $r_4 = r_2$, $\sigma_4 = \overline{\sigma}_2$, $r_6 = r_5$, $\sigma_6 = \overline{\sigma}_5$
    \item case C: $r_3 = r_1$, $\sigma_3 = \overline{\sigma}_1$, $r_5 = r_2$, $\sigma_5 = \overline{\sigma}_2$, $r_6 = r_4$, $\sigma_6 = \overline{\sigma}_4$
\end{itemize}

\begin{figure}
    \centering

    \includegraphics[width = 0.8\textwidth]{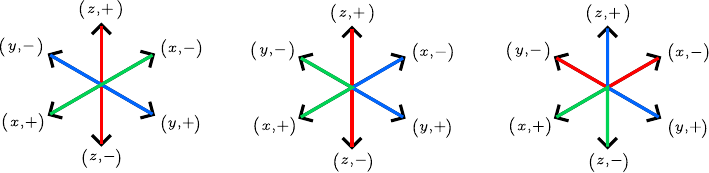}
    \caption{3 strings case A (left), B (middle), C(right)}
    \label{fig:3str}
\end{figure}

All 3 cases are shown in \ref{fig:3str}. Cases B,C can, after surgery and relabelling, be reduced to case A. 

\threesectorclass*
\begin{proof}
    We already know from solving for GSC that all paths $\dg_i$ can be simplified to the case of $D(\dg_i) = \{(r_i, \sigma_i), (r_i, \overline{\sigma}_i)\}$. Using proposition \ref{thm:Pset} we know $\dg_i \in \mathcal{P}_{(r_i, \sigma_i, \tau_i), (r_i, \overline{\sigma}_i, \tau'_i)}$. Once again, $g$ indicates whether the sector is charged or uncharged, using the same reasoning as theorem \ref{thm:mono-GS}. Here we have simply proven that $\hilb_{\{\dg_i\}}$ is a ground state sector. If $\omega_{\{\dg_i\}} \in \hilb_{\{\dg_i\}}$ is not a ground state, then using lemma \ref{lem:region} we can choose $\dL$ outside which $\dg_i(t) \in \cup_{(r,\sigma) \in D(\dg_i)} \mathcal{E}^{r,\sigma}$. Using surgery and straightening inside $\dL$ will lead to a new configuration of paths $\{\dg'_i\}$ and give us a ground state $\omega_{\{\dg'\}} \in \hilb_{\{\dg_i\}}$.
\end{proof}
\subsection{4+ string configurations}
\label{sec:4str}
Classification of $4+$ strings is the easiest.
\foursectorclass*
\begin{proof}
    From theorem \ref{thm:GSS}, we must have $\bigcap_{i=1}^N D(\dg_i) = \emptyset$. For a monotonic path $\dg_i$, $|D(\dg_i)|\ge2$. If $\bigcap_{i=1}^N D(\dg_i) = \emptyset$, then $|\bigcap_{i=1}^N D(\dg_i)| \ge 2N$. However for $\Gamma = \mathbb{Z}^3$, there only 6 unique Infinity directions given by $(r,\sigma)$ with $r \in \{x,y,z\}$ and $\sigma \in \{\pm\}$. So $\bigcap_{i=1}^N D(\dg_i) \neq \emptyset$ for $N\ge 4$. This concludes our proof. 
\end{proof}
\section{Discussion and outlook}
We have found configurations containing infinite flux strings to be genuine infinite energy physical states of the 3d Toric Code Model on a $\mathbb{Z}^3$ lattice. These configurations cannot be obtained by the action on the translation invariant ground state $\omega$ of any operator belonging to $\cstar{}$, the algebra of quasi-local operators in the Toric Code model. However, they are stable ground states of the model.\\

In the bulk of this paper, we have established a necessary and sufficient criterion for a configuration with multiple infinite flux strings to belong to a ground state sector. The criterion is a twofold statement: (a) the infinite flux strings must individually be path equivalent to a monotonic infinite flux string, (b) they must not have any infinity directions in common. If either of these conditions is not met, we have devised a "straightening procedure" in a finite region to obtain an explicit new state with a lower energy. If after applying the straightening procedure a finite amount of times one obtains a ground state, then this configuration lies in a ground state sector. \\

We then classified the different ground state sectors for different infinite flux string configurations, and find in particular, that any configuration with more than 3 infinite flux strings cannot be a ground state.\\

It is easy to generalise the construction to arbitrary finite Abelian groups, and also for different types of infinite oriented lattices. It would be interesting to understand the infinite flux strings from the perspective of fusion 2-categories. While the fusion 2-category structure has been explored in the case of the 3d Toric Code Model on a 3-Torus \cite{Kong2020-an}, it remains an open question to define string-braiding and fusion for infinite flux strings. Since the infinite flux string is not "transportable" (in the sense that a translated infinite flux string configuration is not path-equivalent to the original infinite flux string), braiding or fusion of 2 infinite flux strings is a seemingly meaningless question. To resolve this difficulty will be the scope of our work in the near future.

\appendix

\section{Purity of the ground states of the 3d Toric Code}
\label{app:purity}
In this section we will prove the purity of all ground states of the form $\omega_{v_1, \cdots, v_n; \dg_1,\cdots ,\dg_m}$, working closely along the lines of \cite{Bols2023-qg}.
\subsection{$\omega$ is pure}
We first begin by using the string-net construction of this ground state. Here in order to avoid confusion between state vectors and states, we choose to denote states as $\psi$, vector states as $\ket{\psi}$, and $\langle \psi, \phi \rangle$ for inner product.\\

Define $\Pi_n^V$ as the set containing all vertices in $\mathcal{V}(\Gamma)$ inside the $n \times n \times n$ region centered at the origin. Let $\Pi_n^E$ be the region consisting of all edges with at least one boundary vertex in $\Pi_n^V$, $\Pi_n^F$ be the set of all $f$ with at least one vertex in $\Pi_n^V$, $\partial \Pi^F_n$ as the boundary edges of $\Pi^F_n$.\\

Let $\hilb_{{n}} := \otimes_{e \in \Pi_n^E \cup \partial \Pi^F_n} \hilb_e$. Let $B^n := \prod_{f \in \Pi^F_n} (1+B_f)/2$; $A^n := \prod_{v \in \Pi_n^V} (1+A_v)/2$ be projectors acting on $\Pi_n^E$. Define $\mathcal{W}_n := B^n \hilb_{{n}}$ and $\mathcal{V}_n := A^n B^n \hilb_{{n}}$. So a vector $\ket{\psi} \in \mathcal{V}_n$ satisfies $A_v \ket{\psi} = B_f \ket{\psi} = \ket{\psi}$ for all $v \in \Pi_n^V ,f \in \Pi_n^F$, while $\ket{\psi} \in \mathcal{W}_n$ satisfies $B_f \ket{\psi} = \ket{\psi}$ for all $f \in \Pi_n^F$.\\

Start with a product vector state $\ket{\Omega^0_n} \in \hilb_{{n}}$ with $\sigma^z_e \ket{\Omega^0_n} = \ket{\Omega^0_n}$ for all $e \in \Pi^E_n$. We then have $B_f {\ket{\Omega^0_n}} ={\ket{\Omega^0_n}}$ for all $f \in \Pi^F_n$, so $\ket{\Omega^0_n} \in \mathcal{W}_n$.\\

Let us define a higher dimensional string net state (also called surface net state) $\ket{\alpha}$ on $\Pi_n^V$. First, we consider the set $\overline{\mathcal{S}}$ as the set of dual (not necessarily connected) surfaces lying entirely in $\Pi_n^E$. Let $\ket{\alpha} \in \overline{\mathcal{S}}$. Then a surface net state $\ket{\alpha} \in \hilb_{{n}}$ is given as follows:
\begin{align*}
    \sigma^z_e \ket{\alpha} &= -\ket{\alpha} \qquad e \perp \alpha\\
    \sigma^z_e \ket{\alpha} &= \ket{\alpha} \qquad e \not \perp \alpha\\
    B_f \ket{\alpha} &= \ket{\alpha} \qquad f \in \Pi_n^F 
\end{align*}
Here $e \perp \alpha$ is understood to have the same meaning as in the definition of the flux string operator. The last condition is known as the "no flux condition". It forces $\alpha \in \overline{\mathcal{C}}$, the set of closed dual surfaces lying entirely in $\Pi_n^E$. $\ket{\alpha}$ can be prepared from the product vector state $\Omega^0_n$ in the following way:$\ket{\alpha} = \prod_{e \in \alpha} \sigma^x_e \ket{\Omega^0_n}$. We define the collection of surface-nets as $\pack := \{\ket{\alpha}\}$.\\


Let $\gauge$ be the set of gauge transformations on $\hilb_{{n}}$ generated by $A_v$ for all $v \in \Pi_n^V$. We notice,
\begin{align*}
    \ket{\alpha} &= \prod_{e \in \alpha} \sigma^x_e \ket{\Omega^0_n}\\ 
    &= \prod_{v \in V_\alpha} A_v \ket{\Omega^0_n}\\
    &=
    G_\alpha \ket{\Omega^0_n} \qquad \qquad G_\alpha := \prod_{v \in V_\alpha} A_v
\end{align*}
Where $V_\alpha$ is the set of vertices lying inside $\alpha$. Clearly, $G_\alpha \in \gauge$. Since $[A_v, B_f] = 0$, $[G_\alpha,B_f] = 0$ for all $f \in \Pi_n^F$ and $G_\alpha \in \gauge$. We thus have $\ket{\alpha} \in \mathcal{W}_n$.\\


\begin{lemma}
    For surface nets $\ket{\alpha} = G_\alpha \ket{\Omega^0_n}, \ket{\alpha'} = G_{\alpha'} \ket{\Omega^0_n}$ there exists a unique $G \in \gauge$ such that $\ket{\alpha} = G \ket{\alpha'}$, and $\langle{\alpha}, {\alpha'}\rangle = \delta_{G_{\alpha}, G_{\alpha'}}$
\end{lemma}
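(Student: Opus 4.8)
\emph{Strategy.} Both claims will follow from a single structural fact, namely that the operators $A_v$ with $v\in\Pi_n^V$ are independent, equivalently that the finite abelian group $\gauge$ acts freely on the product vector $\ket{\Omega^0_n}$. First I would record the elementary bookkeeping on $\gauge$: since the $A_v$ pairwise commute, are self-adjoint, and square to $\mathds{1}$, every element of $\gauge$ equals $G_W:=\prod_{v\in W}A_v$ for some subset $W\subseteq\Pi_n^V$, and each $G_W$ is a self-adjoint unitary with $G_W^{-1}=G_W$; in particular $\gauge$ has exponent $2$. Existence of $G$ is then immediate: put $G:=G_\alpha G_{\alpha'}\in\gauge$; using $G_{\alpha'}^2=\mathds{1}$ together with $\ket{\alpha}=G_\alpha\ket{\Omega^0_n}$ and $\ket{\alpha'}=G_{\alpha'}\ket{\Omega^0_n}$ we get $G\ket{\alpha'}=G_\alpha G_{\alpha'}G_{\alpha'}\ket{\Omega^0_n}=G_\alpha\ket{\Omega^0_n}=\ket{\alpha}$.

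\emph{The key step} (and the one I expect to be the main obstacle) is the claim: $G_W\ket{\Omega^0_n}=\ket{\Omega^0_n}$ implies $W=\emptyset$. Expanding gives $G_W=\prod_e(\sigma^x_e)^{c_e}$ with $c_e=\#\{v\in W:\,v\in e\}\in\{0,1,2\}$, so $G_W=\prod_{e:\,c_e\ \mathrm{odd}}\sigma^x_e$; because $\ket{\Omega^0_n}$ is the joint $(+1)$-eigenvector of all $\sigma^z_e$, a nonempty product of distinct $\sigma^x_e$ sends it to an orthogonal product vector, so $G_W\ket{\Omega^0_n}=\ket{\Omega^0_n}$ holds exactly when every edge of $\Pi_n^E$ is incident to an even number of vertices of $W$. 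A boundary-peeling argument then forces $W=\emptyset$: if $W\neq\emptyset$, take the lexicographically smallest $v_0\in W$; the edge $e_0$ joining $v_0$ to $v_0-\hat e_1$ (or, if needed, the analogous edge in another coordinate direction) lies in $\Pi_n^E$ because $v_0\in\Pi_n^V$, and $v_0$ is its only endpoint in $W$, so $c_{e_0}=1$ — a contradiction. This is the delicate point: one must be sure that the stars $A_v$ at boundary vertices genuinely flip edges that protrude from $\Pi_n^V$, which is exactly what the definitions of $\Pi_n^E$ and of the added boundary layer $\partial\Pi_n^F$ guarantee.

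\emph{Uniqueness and the overlap.} For uniqueness, suppose $G,G'\in\gauge$ with $G\ket{\alpha'}=G'\ket{\alpha'}$. Then $G'G\ket{\alpha'}=\ket{\alpha'}$ using $G'^2=\mathds{1}$; since $\gauge$ is abelian, $G'G$ commutes with $G_{\alpha'}$, so $G_{\alpha'}(G'G)\ket{\Omega^0_n}=G_{\alpha'}\ket{\Omega^0_n}$, and cancelling the unitary $G_{\alpha'}$ gives $G'G\ket{\Omega^0_n}=\ket{\Omega^0_n}$; by the key step $G'G=\mathds{1}$, i.e. $G=G'$. For the inner product I would use the self-adjointness of $G_\alpha$: $\langle\alpha,\alpha'\rangle=\langle G_\alpha\Omega^0_n,\,G_{\alpha'}\Omega^0_n\rangle=\langle\Omega^0_n,\,G_\alpha G_{\alpha'}\,\Omega^0_n\rangle$. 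If $G_\alpha=G_{\alpha'}$ this is $\|\Omega^0_n\|^2=1$; if $G_\alpha\neq G_{\alpha'}$ then $G_\alpha G_{\alpha'}$ is a nonempty product of distinct $\sigma^x_e$, so the same orthogonality observation used above yields $0$. Hence $\langle\alpha,\alpha'\rangle=\delta_{G_\alpha,G_{\alpha'}}$, as claimed.
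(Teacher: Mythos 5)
Your proof is correct and follows essentially the same route as the paper: existence via $G=G_\alpha G_{\alpha'}$, uniqueness by reducing to the statement that only the identity of $\gauge$ fixes $\ket{\Omega^0_n}$, and the overlap formula from orthogonality of distinct product vectors. The only difference is that you explicitly justify the free-action claim ($G_W\ket{\Omega^0_n}=\ket{\Omega^0_n}\Rightarrow W=\emptyset$) with a boundary-peeling argument, a step the paper asserts without proof; your justification is sound.
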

\begin{proof}
    First, notice that since $\gauge$ is generated by $A_v$ for $v \in \Pi_n$, $G_1, G_2 \in \gauge \implies G_1 G_2 =G_2 G_1 \in \gauge$. So $\gauge$ forms an abelian group, with $G^{-1}:= G^\dagger$. We can show existence of $G$ by noting that $\ket{\alpha} = G_\alpha G_{\alpha'}^\dagger \ket{\alpha'}$.\\

    To show uniqueness, we first see that if $\ket{\alpha} = G_1 \ket{\alpha'} = G_2 \ket{\alpha'}$ then $\ket{\alpha'} = G_1^\dagger G_2 \ket{\alpha'} = G \ket{\alpha'}$. So $G_{\alpha'} \ket{\Omega^0_n} = G G_{\alpha'} \ket{\Omega^0_n}$ implying $G = G_1^\dagger G_2 = 1$. \\

    We show $\langle{\alpha} ,{\alpha'}\rangle = \delta_{G_{\alpha}, G_{\alpha'}}$ by first observing that $\langle{\Omega^0_n}, {\Omega^0_n}\rangle = 1$ and $\langle{\Omega^0_n}, G {\Omega^0_n}\rangle = 0$ for $G \in \gauge$ such that $G \neq 1$. Then, $\langle{\alpha}, {\alpha'}\rangle = \langle{\Omega^0_n}, G_\alpha^\dagger G_{\alpha'} {\Omega^0_n}\rangle = \delta_{G_\alpha, G_{\alpha'}}$ thus completing the proof.
\end{proof}

\begin{remark}
    $|\gauge| = 2^{n^3}$ as it is generated by all $A_v \in \Pi_n$. 
\end{remark}

 Consider a surface net $\ket{\alpha} \in \hilb_{{n}}$ such that $\prod_{e \in \partial \Pi_n^F} \sigma^z_e \ket{\alpha} = \ket{\alpha}$. There can be many ways in which this condition can be satisfied. We call $b(\alpha) \subset \partial \Pi^F_n$ a boundary condition of $\alpha$ such that $\sigma^z_e \ket{\alpha} = - \ket{\alpha}$ for all edges $e \in b(\alpha)$, such that it satisfies the condition $\prod_{e \in \partial \Pi^F_n} \sigma^z_e \ket{\alpha} = \ket{\alpha}$. We now find it useful to divide $\pack$ into $\packbc := \{\ket{\alpha}\restriction_{\cstar{\Pi_n^E}} \in \pack | b(\alpha) = b \subset \Pi_n^E \} $.\\

Define $\mathcal{W}_n^b := P_b \mathcal{W}_{{n}}$ where $P_b$ projects to edges in the specific boundary configuration $b$ along $\regionf$.

\begin{lemma}
    $\Wbc = span \{\ket{\alpha}| \ket{\alpha} \in \packbc\}$. Moreover, $\mathcal{W}_{{n}} = \bigoplus_b \mathcal{W}_n^{b}$
\end{lemma}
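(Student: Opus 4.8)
The plan is to diagonalize everything in the $\sigma^z$ product basis of $\hilb_n$: both the plaquette projector $B^n$ and the operators $\sigma_e^z$ with $e\in\partial\Pi_n^F$ are simultaneously diagonal in this basis, and the surface nets $\ket{\alpha}$ are precisely the basis vectors, so both assertions reduce to a count of which basis vectors satisfy which sign constraints.

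First I would settle the orthogonal direct sum. The operators $\{\sigma_e^z\}_{e\in\partial\Pi_n^F}$ pairwise commute and each commutes with $B^n$ (all of these are products of $\sigma^z$'s), so $\mathcal{W}_n = B^n\hilb_n$ is invariant under every $\sigma_e^z$ with $e\in\partial\Pi_n^F$ and splits into the joint eigenspaces of this family. Such an eigenspace is indexed by the set $b\subseteq\partial\Pi_n^F$ of edges carrying eigenvalue $-1$, with spectral projector $P_b = \prod_{e\in b}\tfrac{1-\sigma_e^z}{2}\,\prod_{e\in\partial\Pi_n^F\setminus b}\tfrac{1+\sigma_e^z}{2}$; this commutes with $B^n$, so $\mathcal{W}_n^b = P_b\mathcal{W}_n$ is a subspace of $\mathcal{W}_n$, and $\sum_b P_b = \mathds{1}$ together with $P_bP_{b'}=\delta_{b,b'}P_b$ give $\mathcal{W}_n = \bigoplus_b \mathcal{W}_n^b$. (Only the $b$ that actually occur as boundary conditions of surface nets yield nonzero summands, but that refinement is not needed for the statement.)

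For the span statement, $\mathrm{span}\{\ket{\alpha}:\ket{\alpha}\in\packbc\}\subseteq\mathcal{W}_n^b$ is immediate: a surface net satisfies $B_f\ket{\alpha}=\ket{\alpha}$ for all $f\in\Pi_n^F$ by definition, hence $\ket{\alpha}\in\mathcal{W}_n$, and having boundary condition $b$ says exactly $P_b\ket{\alpha}=\ket{\alpha}$, so $\ket{\alpha}\in\mathcal{W}_n^b$. For the reverse inclusion, note $\mathcal{W}_n^b = P_bB^n\hilb_n$ is the range of a projector diagonal in the $\sigma^z$ product basis $\{\ket{s}\}$, hence spanned by those $\ket{s}$ with $B_f\ket{s}=\ket{s}$ for every $f\in\Pi_n^F$ and $\sigma_e^z\ket{s}=-\ket{s}\iff e\in b$ for every $e\in\partial\Pi_n^F$. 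Under the identification $\ket{s}\leftrightarrow\ket{\alpha}$, where $\alpha$ is the dual surface dual to $\{e:s_e=-1\}$, such an $\ket{s}$ is exactly a surface net: the first family of conditions is the "no flux"/closed-surface condition that makes $\alpha$ legitimate, and the second records that its boundary condition is $b$. So $\ket{s}\in\packbc$, giving $\mathcal{W}_n^b\subseteq\mathrm{span}\{\ket{\alpha}:\ket{\alpha}\in\packbc\}$, and hence equality.

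I expect the routine combinatorial bookkeeping of the last step to be the only delicate point: one must carefully treat the interface of the two edge sets $\Pi_n^E$ and $\partial\Pi_n^F$ --- defining the boundary condition $b(\alpha)$ of a surface net precisely, checking that imposing $B_f\ket{\cdot}=\ket{\cdot}$ only for the \emph{interior} plaquettes $f\in\Pi_n^F$ leaves exactly the surface nets with an arbitrary admissible boundary on $\partial\Pi_n^F$, and verifying that each admissible $b$ is realized by some $\alpha$ so that no $\mathcal{W}_n^b$ is spuriously empty. This boundary accounting, not any analytic difficulty, is the main obstacle.
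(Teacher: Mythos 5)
Your proposal is correct and follows essentially the same route as the paper: both arguments come down to the facts that $B^n$ and the boundary projectors $P_b$ are simultaneously diagonal in the $\sigma^z$ product basis, that the product basis vectors surviving $B^n$ are exactly the surface nets, and that $\sum_b P_b = \mathds{1}$ with $P_b P_{b'} = \delta_{b,b'}P_b$ sorts them by boundary condition into mutually orthogonal subspaces. The paper's proof is terser (it asserts directly that any product state outside $\pack$ is killed by some $(1+B_f)/2$ and that nets with distinct boundary conditions are orthogonal), while you spell out the diagonalization and correctly flag the one genuinely nontrivial step both proofs lean on, namely the combinatorial identification of flux-free product states with surface nets.
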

\begin{proof}
First, we notice that $\mathcal{W}_{{n}} = span\{\ket{\alpha} | \ket{\alpha} \in \pack\}$, since if there exists a product vector state $ \ket{\psi} \notin \pack$ then there exists $f \in \Pi_n^F$ such that $B_f \ket{\psi} =0$ and so $\ket{\psi} \notin \mathcal{W}_n$. \\

Next, $\mathcal{W}^b_{{n}} \subset \mathcal{W}_{{n}}$. We also have $\sum_b P_b=1$, and $P_{b'} \packbc = \delta_{b,b'} \packbc$. So $\ket{\alpha} \in \packbc$ clearly lies in $\mathcal{W}_{n}^b$. If $b(\alpha_1) = b_1, b(\alpha_2) = b_2$ with $b_1 \neq b_2$, then we have $\bra{\alpha_1} \ket{\alpha_2} = 0$. So $\Wbc = span \{\ket{\alpha}| \ket{\alpha} \in \packbc\}$. This implies $\mathcal{W}_{{n}} = \sum_b P_b \Wbc = \bigoplus_b \mathcal{W}_n^{b}$.
\end{proof}
\begin{remark}
    The action of $\gauge[n]$ leaves $\Wbc$ invariant, since $\partial \Pi^F_n \cap \Pi^E_n = \emptyset$.
\end{remark} 

Let $\ket{\eta^b_n}$ be a vector given by 
\begin{align}
    \ket{\eta^b_n} := \frac{1}{|\packbc|^{1/2}} \sum_{\alpha \in \packbc} \ket{\alpha}
\end{align}

\begin{proposition}
    $\mathcal{V}_n = A^n \mathcal{W}_n$ is one dimensional and spanned by $\ket{\eta^b_n}$
\end{proposition}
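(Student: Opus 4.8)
The plan is to recognize $A^n$ as the group‑averaging projector for the abelian gauge group $\gauge$ and then to identify $\Wbc$, for a fixed boundary configuration $b$, with the (left) regular representation of $\gauge$, whose invariant subspace is one‑dimensional and spanned by the uniform superposition $\ket{\eta^b_n}$. Throughout I work at a fixed $b$; the statement for $\mathcal{V}_n=A^n\mathcal{W}_n$ then follows from $\mathcal{W}_n=\bigoplus_b\Wbc$ (reading $\mathcal{V}_n$ as $\mathcal{V}_n^b:=A^n\Wbc$, as the superscript on $\ket{\eta^b_n}$ indicates).

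First I would expand $A^n=\prod_{v\in\Pi_n^V}(1+A_v)/2$. Since the $A_v$ commute and square to $\mathds 1$, multiplying the product out gives $A^n=\tfrac{1}{|\gauge|}\sum_{G\in\gauge}G$, which is the orthogonal projector onto the subspace of vectors fixed by every $A_v$, equivalently fixed by all of $\gauge$. In particular $A^n$ maps $\Wbc$ into itself, because $\Wbc\subseteq\mathcal{W}_n$ is $\gauge$‑invariant: the edges of $\partial\Pi^F_n$ are disjoint from $\Pi^E_n$, so $\gauge$ leaves the boundary configuration untouched (this is the remark following the decomposition $\mathcal{W}_n=\bigoplus_b\Wbc$).

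Next I would analyze how $\gauge$ permutes the orthonormal basis $\{\ket{\alpha}:\ket{\alpha}\in\packbc\}$ of $\Wbc$ supplied by the preceding lemmas. For $G=\prod_{v\in S}A_v\in\gauge$ and a surface net $\ket{\alpha}$, the vector $G\ket{\alpha}$ is again a surface net, namely the one obtained from $\alpha$ by symmetric difference with the dual coboundary of the vertex set $S$; since that coboundary has empty boundary, $G\ket{\alpha}$ keeps the boundary condition $b$, so $\gauge$ acts on $\packbc$. This action is \emph{free}: if $S$ is a nonempty proper subset of $\Pi_n^V$ there is an edge of $\Pi_n^E$ joining $S$ to its complement, and if $S=\Pi_n^V$ the coboundary still meets $\Pi_n^E$, so $G\ket{\alpha}\neq\ket{\alpha}$ whenever $G\neq\mathds 1$; together with $|\gauge|=2^{n^3}$ counting distinct group elements this makes every stabilizer trivial. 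The action is also \emph{transitive} on $\packbc$: given two surface nets with boundary $b$, their symmetric difference is a boundary‑free closed dual surface inside $\Pi^E_n$, hence the dual coboundary of a (unique) vertex set $S$, so the two nets are related by $\prod_{v\in S}A_v$. Hence $\Wbc\cong\mathbb{C}[\gauge]$ as a $\gauge$‑module.

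Finally, the $\gauge$‑invariant subspace of the regular representation of a finite group is one‑dimensional, spanned by $\sum_{G\in\gauge}G\cdot\ket{\alpha_0}=\sum_{\alpha\in\packbc}\ket{\alpha}=|\packbc|^{1/2}\ket{\eta^b_n}$ for any reference $\ket{\alpha_0}\in\packbc$. Therefore $A^n\Wbc=\mathbb{C}\ket{\eta^b_n}$ is one‑dimensional, which is the claim. The main obstacle is the combinatorial input in the transitivity step — that every boundary‑free closed dual surface contained in $\Pi^E_n$ bounds, i.e. is the dual coboundary of a set of lattice vertices — which is a finite‑volume Poincaré‑duality / $H_2=0$ statement and must be checked carefully near $\partial\Pi^F_n$ (e.g. by an inward‑sweeping argument on the dual cubes). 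Everything else is the standard representation‑theoretic bookkeeping of group averaging.
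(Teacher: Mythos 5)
Your proof is correct and is essentially the paper's argument: expand $A^n=\frac{1}{|\gauge|}\sum_{G\in\gauge}G$, apply it to a basis surface net $\ket{\alpha}\in\packbc$, and use that $\gauge$ acts freely and transitively on $\packbc$ (the paper packages this in its earlier lemma asserting a unique $G\in\gauge$ with $\ket{\alpha}=G\ket{\alpha'}$) to conclude $A^n\ket{\alpha}\propto\ket{\eta^b_n}$. You merely rephrase this as the invariant subspace of the regular representation and are more explicit than the paper about the combinatorial input behind transitivity (that closed dual surfaces in the box are coboundaries of vertex sets), which the paper absorbs into its definition $\ket{\alpha}=G_\alpha\ket{\Omega^0_n}$.
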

\begin{proof}
    We first notice that $$A^n = \prod_{v \in \Pi^V_n} (1+A_v)/2 = 
\frac{1}{|\gauge|} \sum_{\{g_v\} \in \gauge} A_v^{g_v} = \frac{1}{|\gauge|} \sum_{U \in \gauge} U$$\\

Then for any $\ket{\alpha} \in \packbc$ we know that $\ket{\alpha} \in \mathcal{W}^n$. Since $\{\ket{\alpha}\}$ is a good basis for $\mathcal{W}^n$ we may just work with $\ket{\alpha}$. Applying $A^n$ on $\ket{\alpha}$ we get:
\begin{align}
    A^n \ket{\alpha} = \frac{1}{|\gauge|} \sum_{U \in \gauge} U \ket{\alpha} =  \frac{1}{|\gauge|} \sum_{{\alpha'} \in \packbc} \ket{\alpha'} = \frac{|\packbc|^{1/2}}{|\gauge|} \ket{\eta^b_n}
\end{align}
So $A^n \ket{\alpha} \propto \ket{\eta^b_n}$ and hence $\mathcal{V}^n = A^n \mathcal{W}^n$ is one-dimensional and spanned by $\ket{\eta^b_n}$. This concludes the proof.
\end{proof}

\begin{remark}
    $|\packbc|$ is independent of $b$, since there exists a bijective unitary map $U_b: \packbc \rightarrow \pack^0$ supported on the boundary $\partial \regionf$ given by $$U_b = \prod_{e\in \partial \regionf} {(\sigma^x_e)}^{(1 - \sigma^z_e)/2}$$ Here $0$ is the trivial boundary condition.
    \label{rem:bcind}
\end{remark}

\begin{lemma}
    If $O \in \cstar{\Pi_{n-1}}$ then we have that $$\eta^b_n(O) := \langle \eta^b_n, O \eta^b_n \rangle$$ is independent of $b$.
\end{lemma}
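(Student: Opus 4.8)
The plan is to show that $\eta^b_n(O)$ does not depend on the boundary condition $b$ by constructing, for any two boundary conditions $b$ and $b'$, a unitary on the boundary edges $\partial \Pi^F_n$ that intertwines the two vectors $\ket{\eta^b_n}$ and $\ket{\eta^{b'}_n}$ up to a phase, and then noting that any such unitary acts trivially on observables supported on $\Pi_{n-1}$. The key point is that $O \in \cstar{\Pi_{n-1}}$ is supported away from $\partial \Pi^F_n$ (since $\Pi^E_{n-1}$ and $\partial \Pi^F_n$ are disjoint), so conjugating $O$ by a boundary-supported unitary leaves $O$ fixed.

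First I would recall Remark \ref{rem:bcind}: the unitary $U_b = \prod_{e \in \partial \regionf} (\sigma^x_e)^{(1-\sigma^z_e)/2}$ gives a bijection $\packbc \to \pack^0$ supported on $\partial \regionf$, and more generally $U_{b'} U_b^\dagger$ (or a suitable product of $\sigma^x_e$ over the symmetric difference $b \triangle b'$) maps $\packbc$ bijectively onto $\pack^{b'}$. Since this map is a bijection of basis vectors and is unitary, applying it to $\ket{\eta^b_n} = |\packbc|^{-1/2}\sum_{\alpha \in \packbc}\ket{\alpha}$ gives exactly $\ket{\eta^{b'}_n}$, using also that $|\packbc|$ is independent of $b$. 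Call this unitary $V := U_{b'}U_b^\dagger$, so $V\ket{\eta^b_n} = \ket{\eta^{b'}_n}$ and $V$ is a product of Pauli-$x$ operators on edges in $\partial \Pi^F_n$ (composed with diagonal operators, which is irrelevant).

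Then I would compute directly: for $O \in \cstar{\Pi_{n-1}}$,
\begin{align*}
\eta^{b'}_n(O) &= \langle \eta^{b'}_n, O\, \eta^{b'}_n\rangle = \langle V\eta^b_n, O\, V\eta^b_n\rangle = \langle \eta^b_n, V^\dagger O V\, \eta^b_n\rangle = \langle \eta^b_n, O\, \eta^b_n\rangle = \eta^b_n(O),
\end{align*}
where the fourth equality uses $V^\dagger O V = O$ because $V$ is supported on $\partial \Pi^F_n$ while $O$ is supported on $\Pi^E_{n-1}$, and these edge sets are disjoint so the operators commute. Since $b, b'$ were arbitrary, $\eta^b_n(O)$ is independent of $b$.

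I expect the main (modest) obstacle to be making precise that the boundary-flipping operator one uses genuinely carries $\packbc$ onto $\pack^{b'}$ as a set — i.e. that flipping the edges in $b \triangle b'$ sends a surface net with boundary condition $b$ to one with boundary condition $b'$ while preserving the no-flux condition $B_f\ket{\alpha}=\ket{\alpha}$ for $f \in \Pi^F_n$. This follows because the boundary edges lie on $\partial \Pi^F_n$, outside the support of every $B_f$ with $f \in \Pi^F_n$ (as noted in the remark preceding the definition of $\ket{\eta^b_n}$), so flipping them commutes with all the relevant $B_f$ and merely relabels the boundary data; combined with $|\packbc|$ being $b$-independent (Remark \ref{rem:bcind}), the normalized uniform superpositions are exchanged. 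Everything else is the short computation above.
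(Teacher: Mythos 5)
Your proof is correct and follows essentially the same route as the paper: both arguments hinge on a unitary supported on $\partial \Pi^F_n$ that maps $\ket{\eta^b_n}$ to $\ket{\eta^{b'}_n}$ (the paper uses $U_b$ to compare each $b$ to the trivial boundary condition, while you compose two such maps to compare arbitrary $b,b'$ directly), together with the observation that such a boundary-supported unitary commutes with any $O \in \cstar{\Pi_{n-1}}$. The minor difference in bookkeeping does not change the substance of the argument.
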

\begin{proof}
    Since there always exists a unique unitary $U_b$ supported on $\partial \regionf$ (see remark above) such that $\ket{\alpha} = U_b \ket{\beta}$ such that $\alpha \in \pack^0, \beta \in \packbc$, we then have: $$U_b \ket{\eta^0_n} = \ket{\eta^b_n}$$

    Since $U_b$ is supported on $\partial \regionf$ we have $[U, O] = 0$. Thus we have $$\eta^b_n(O) = \langle \eta^0_n, U_b^\dagger O U_b \eta^0_n \rangle = \eta^0_n(O)$$ 

    This means an operator $O \in \cstar{\Pi_{n-1}}$ does not see the boundary condition $b$. This concludes the proof. 
\end{proof}

\begin{corollary}
    $\eta^b_m \restriction_{\cstar{\Pi_n}} = \eta^0_m \restriction_{\cstar{\Pi_n}}$ for $m\ge n+1$
\end{corollary}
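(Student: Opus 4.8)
The plan is to obtain this corollary as an immediate consequence of the preceding lemma, after shifting the region index by one. First I would observe that the hypothesis $m \ge n+1$ is equivalent to $n \le m-1$, so that the cubical region $\Pi_n$ sits inside $\Pi_{m-1}$, and hence the local algebra $\cstar{\Pi_n}$ embeds as a subalgebra of $\cstar{\Pi_{m-1}}$ (both viewed inside $\cstar{}$ through the canonical inclusions $\iota$).

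Next I would invoke the lemma with its region index $n$ replaced by $m$: for every $O \in \cstar{\Pi_{m-1}}$ the number $\eta^b_m(O) = \langle \eta^b_m, O\, \eta^b_m \rangle$ is independent of $b$, and in particular equals its value at the trivial boundary condition, so $\eta^b_m(O) = \eta^0_m(O)$. Concretely this holds because the boundary-changing unitary $U_b$, supported on $\partial \Pi^F_m$, commutes with every operator localized in the strictly interior region $\Pi_{m-1}$, while $U_b \ket{\eta^0_m} = \ket{\eta^b_m}$. Restricting this equality to the subalgebra $\cstar{\Pi_n} \subseteq \cstar{\Pi_{m-1}}$ then yields $\eta^b_m(O) = \eta^0_m(O)$ for all $O \in \cstar{\Pi_n}$; since these are exactly the values taken by the states $\eta^b_m$ and $\eta^0_m$ on the elements of $\cstar{\Pi_n}$, this is precisely the asserted identity $\eta^b_m \restriction_{\cstar{\Pi_n}} = \eta^0_m \restriction_{\cstar{\Pi_n}}$.

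I do not expect any genuine obstacle here; the one point that requires care is the one-site buffer already built into the lemma — it is stated for $\cstar{\Pi_{n-1}}$ rather than $\cstar{\Pi_n}$ — which is exactly why the corollary is formulated with the strict inequality $m \ge n+1$, so that $\Pi_n$ lies strictly inside $\Pi_m$ and leaves room for the boundary unitary $U_b$ to act away from the support of $O$. With that bookkeeping the claim is immediate.
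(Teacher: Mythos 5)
Your argument is correct and is exactly the intended one: the paper states this corollary without proof as an immediate consequence of the preceding lemma, and your re-indexing ($n \mapsto m$, then restricting from $\cstar{\Pi_{m-1}}$ to the subalgebra $\cstar{\Pi_n}$ using $m \ge n+1$) is precisely how it follows. Your remark about the one-site buffer being the reason for the strict inequality is also the right observation.
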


We now shorten the notation to say $\psi|_m := \psi \restriction_{\cstar{{\Pi^E_m}}}$ for any state $\psi$. 

\begin{lemma}
    Let $\omega$ be the frustration free translation invariant ground state of the 3dTC satisfying $\omega(A_v) = \omega(B_f) = 1$ for all $v \in \mathcal{V}(\Gamma),f \in \mathcal{F}(\Gamma)$. Then we have $\omega|_n = \eta^0_m|_n$ for any $n$ 
\end{lemma}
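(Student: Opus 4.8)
The plan is to show that the frustration-free translation-invariant ground state $\omega$, when restricted to a finite region $\Pi_n^E$, agrees with the vector state $\eta^0_m$ for any $m \ge n+1$. First I would observe that $\omega$ is characterized by $\omega(A_v) = \omega(B_f) = 1$ for all $v, f$, and by Theorem \ref{thm:puregs} it is the unique such state; hence it suffices to exhibit \emph{any} state whose restriction to $\cstar{\Pi_n^E}$ equals $\eta^0_m|_n$ and which satisfies the defining eigenvalue equations on that region. The natural candidate is to take $m \to \infty$, but since we only need agreement on the fixed finite region $\Pi_n^E$, the preceding corollary already tells us $\eta^b_m|_n = \eta^0_m|_n$ is independent of the boundary condition $b$ for $m \ge n+1$, so the local state $\eta^0_m|_n$ is genuinely canonical.

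The key steps, in order, are as follows. (1) Recall that $\ket{\eta^0_m} \in \mathcal{V}_m$ satisfies $A_v \ket{\eta^0_m} = \ket{\eta^0_m}$ and $B_f \ket{\eta^0_m} = \ket{\eta^0_m}$ for all $v \in \Pi_m^V$, $f \in \Pi_m^F$; in particular, for $m \ge n+1$, every $A_v$ and $B_f$ with support inside $\Pi_n^E$ acts as the identity on $\ket{\eta^0_m}$, so $\eta^0_m|_n(A_v) = \eta^0_m|_n(B_f) = 1$ for all such $v, f$. (2) Conclude that $\eta^0_m|_n$ is a state on $\cstar{\Pi_n^E}$ that is an eigenstate with eigenvalue $1$ of all local stabilizers supported in $\Pi_n^E$. (3) Show the same holds for $\omega|_n$ trivially from the definition of $\omega$. (4) Argue that the eigenvalue-$1$ conditions for all $A_v, B_f$ supported in $\Pi_n^E$ determine a state on $\cstar{\Pi_n^E}$ uniquely (or at least pin down its values on the algebra generated by those stabilizers, and then extend by the fact that the $A_v, B_f$ together with a maximal commuting set generate enough of $M_2(\mathbb{C})^{\otimes \Pi_n^E}$ to fix the state — this is the standard stabilizer-state uniqueness argument). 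Since both $\omega|_n$ and $\eta^0_m|_n$ satisfy these, they coincide.

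The main obstacle I expect is step (4): being careful about whether the stabilizer conditions \emph{inside} $\Pi_n^E$ alone determine the restricted state, since edges on the boundary $\partial \Pi_n^E$ are not surrounded by a full set of stabilizers. The clean way around this is to use that $\omega$ is the unique frustration-free ground state (Theorem \ref{thm:puregs}), build a global state out of the net $\{\eta^0_m\}_{m}$ — checking via the corollary that the restrictions are consistent as $m$ grows so they define a state on $\cstar{loc}$, hence on $\cstar{}$ by continuity — verify this global state satisfies all $A_v = B_f = 1$, and then invoke uniqueness to identify it with $\omega$; restricting back to $\Pi_n^E$ gives the claim. Thus the heart of the argument is the consistency of the net $\{\eta^0_m|_n\}$ (already supplied by the corollary) plus the purity/uniqueness theorem, and the rest is bookkeeping.
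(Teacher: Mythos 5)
There is a genuine gap, and you essentially put your finger on it yourself. Your step (4) in its first form is false: the conditions $\rho(A_v)=\rho(B_f)=1$ for all stabilizers supported in $\Pi_n^E$ do \emph{not} determine a state on $\cstar{\Pi_n^E}$, because the vectors $\ket{\eta^b_n}$ for the various boundary conditions $b$ all satisfy the bulk stabilizer constraints yet define different states on $\cstar{\Pi_n^E}$ (they only agree after restricting to the strictly smaller region, which is the content of the preceding lemma and corollary). Your proposed repair — build the global state $\lim_m \eta^0_m$ and identify it with $\omega$ by invoking the uniqueness in Theorem \ref{thm:puregs} — is circular in the paper's logical structure: the uniqueness part of Theorem \ref{thm:puregs} is proved \emph{from} this lemma (two frustration-free ground states agree on all local observables precisely because each restricts to $\eta^0_m|_n$). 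So at the point where this lemma is needed, no uniqueness statement is yet available to you.

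The missing ingredient is the mechanism the paper actually uses: pass through the pure (equivalently, spectral) decomposition of $\omega$ and the one-dimensionality result for $\mathcal{V}_m$. Concretely, since $|\omega_i(A_v)|\le 1$ and $\omega=\sum_i\lambda_i\omega_i$ with $\omega(A_v)=1$, every component satisfies $\omega_i(A_v)=\omega_i(B_f)=1$, so its reduced density matrix on $\Pi_m^E$ is supported on $A^mB^m\hilb_m=\bigoplus_b \mathbb{C}\ket{\eta^b_m}$; hence the restriction of $\omega$ to $\cstar{\Pi_m^E}$ is a convex combination $\sum_b p_b\,\eta^b_m$. Only now does the boundary-independence corollary ($\eta^b_m|_n=\eta^0_m|_n$ for $m\ge n+1$) collapse this combination to the single state $\eta^0_m|_n$, with no appeal to uniqueness of $\omega$. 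Your outline correctly identifies the two relevant tools (restriction to a smaller region, and $b$-independence) but never supplies the step that forces the reduced state into the span of the $\ket{\eta^b_m}$ in the first place, and the uniqueness shortcut you lean on instead cannot be used here.
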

\begin{proof}
    We have the convex decomposition of $\omega$ into countably many pure states $\omega_i$: $\omega = \sum_i \lambda_i \omega_i$. Since $|\omega_i(A_v)| \le ||A_v|| = 1$ and $0 \le \lambda_i \le 1$ we have $\omega_i(A_v) = 1$ and similarly $\omega(B_f) = 1$ for all $v \in \Pi^V_n, f \in \Pi^F_n$. So for a GNS vector $\Omega_i$  of $\omega_i$ we have $A_v \Omega_i = B_f \Omega_i = \Omega_i$. This means $\Omega_i \in \mathcal{V}_n$. Since $\mathcal{V}_n$ is 1-dimensional, we have $\omega_i = \eta^0_m|_n$ for all $n$.
\end{proof}

We can now extend $\eta^0_n|_m$ to the whole observable algebra $\cstar{}$ and call it $\omega_n$.

\begin{lemma}
    The sequence of states $\omega_n$ converges in the weak$-^*$ topology to a state $\omega$. 
\end{lemma}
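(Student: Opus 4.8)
The plan is to combine weak-$^*$ compactness of the state space of $\cstar{}$ with the fact that, on any fixed local algebra, the sequence $\omega_n$ is eventually constant. Since the set of states on the unital $C^*$-algebra $\cstar{}$ is convex and weak-$^*$ closed (and, by Banach--Alaoglu, weak-$^*$ compact), it suffices to show that $\lim_{n\to\infty}\omega_n(O)$ exists for every $O\in\cstar{}$: that limit then automatically defines a bounded, positive, unital linear functional, i.e.\ a state, and it is the weak-$^*$ limit of the whole sequence.

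First I would settle the case $O\in\cstar{loc}$. Such an $O$ lies in $\cstar{\Pi^E_k}$ for some $k$. By construction $\omega_n$ agrees on $\cstar{\Pi^E_k}$ with the finite-volume surface-net state $\eta^0_n$ (equivalently $\eta^0_n|_k$) as soon as $n$ is large enough that $\cstar{\Pi^E_k}$ sits inside the region defining $\eta^0_n$. The preceding lemma, together with the corollary on independence of the boundary condition, gives $\eta^0_m|_k=\omega|_k$ for all $m\ge k+1$, where $\omega$ is the frustration-free translation-invariant ground state. Hence $\omega_n(O)=\omega(O)$ for all sufficiently large $n$; in particular $\omega_n(O)\to\omega(O)$, and on $\cstar{loc}$ the sequence is \emph{eventually constant}.

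Next I would pass to a general $O\in\cstar{}=\overline{\cstar{loc}}$ by an $\varepsilon/3$ argument. Given $\varepsilon>0$, choose $O'\in\cstar{loc}$ with $\|O-O'\|<\varepsilon/3$. Each $\omega_n$ is a state, so $\|\omega_n\|=1$, and $(\omega_n(O'))_n$ converges by the previous step, hence is Cauchy; therefore $|\omega_n(O)-\omega_{n'}(O)|\le 2\|O-O'\|+|\omega_n(O')-\omega_{n'}(O')|<\varepsilon$ for $n,n'$ large, so $(\omega_n(O))_n$ is Cauchy in $\mathbb{C}$ and converges. Norm-continuity forces the limit to agree with $\omega(O)$ on all of $\cstar{}$, and passing to the limit in $\omega_n(O^\dagger O)\ge 0$ and $\omega_n(\mathds{1})=1$ confirms that the limit functional is exactly the state $\omega$; thus $\omega_n\to\omega$ in the weak-$^*$ topology.

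All the real content is imported from the earlier lemmas — that the finite-volume states $\eta^0_m$, restricted to any fixed finite region, equal the restriction of the frustration-free ground state $\omega$ once $m$ is large — which is where frustration-freeness and the one-dimensionality of $\mathcal{V}_n$ enter. Granting that, the present statement is a soft compactness-and-density argument and I do not anticipate a genuine obstacle; the only point requiring attention is the bookkeeping of which finite region controls a given local observable, so that $n$ is always taken large enough relative to it.
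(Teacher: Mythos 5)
Your proposal is correct and follows essentially the same route as the paper: the sequence is eventually constant on each local algebra $\cstar{\Pi^E_k}$ by the preceding lemma and corollary, and density of $\cstar{loc}$ in $\cstar{}$ extends the convergence to all observables. You merely make explicit the $\varepsilon/3$ step and the verification that the pointwise limit is a state, details the paper leaves implicit.
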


\begin{proof}
    To show convergence, Let $O \in \cstar{\Pi^E_n}$, then we have for all $\omega_m(O) = \omega_{n+1}(O)$ for all $m \ge n+1$. Since $n$ was arbitrarily chosen, the state $\omega_n$ converges for all local observables. Since the set of all local observables is dense in $\cstar{}$ we have that $\omega_n$ converges in the weak$-^*$ topology to a state $\omega$.
\end{proof}

\puregs*
\begin{proof}

    To show uniqueness, we consider another frustration free ground state $\omega'$. Since it agrees with $\omega$ on all local observables, it is the same state as $\omega$.\\

    To show frustration freeness and purity, we see that $\omega$ is the unique ground state with 0 energy. Hence it is frustration free. Since $\omega$ is a unique frustration free ground state, $\omega$ must be pure. \\

    To show translation invariance, consider $T_x$ being a translation operator. We have:

    \begin{align*}
    \omega(O^\dagger \delta_x (O)) :&= \lim_{\Lambda\rightarrow \mathcal{E}(\Gamma)} \omega (O^\dagger[T_x^\dagger H_\Lambda T_x, O])\\
    &=  \lim_{\Lambda\rightarrow \mathcal{E}(\Gamma)}\omega(O^\dagger [H_\Lambda, O])\\
    &= \omega (O^\dagger \delta (O)) \ge 0
    \end{align*}
    
    So the ground state of the translated Hamiltonian is the same as the ground state $\omega$ of the original Hamiltonian, which is a unique frustration free ground state. Thus $\omega$ is translation invariant. This concludes our proof.
\end{proof}

\purity*
\begin{proof}
    Let's begin with the trivial case when there are no charges or strings in the ground state configuration ($m=n=0$). In this case, theorem \ref{thm:puregs} tells us that $\omega$ is pure. We have $\alpha_{v_1, \cdots, v_n; \dg_1,\cdots ,\dg_m}:= \alpha_{v_1} \circ \cdots \circ \alpha_{v_n} \circ \alpha_{\dg_1} \circ \cdots \circ \alpha_{\dg_m}$ and all $\alpha_v, \alpha_\dg$ are automorphisms. The composition of automorphisms is an automorphism. Since automorphisms map pure states to pure states, we have $\omega_{v_1, \cdots, v_n; \dg_1,\cdots ,\dg_m} := \omega \circ \alpha_{v_1, \cdots, v_n; \dg_1,\cdots ,\dg_m}$ is a pure state.
\end{proof}
\section{Lattice facts}
\label{app:lattice}

Here we list some miscellaneous but important lattice facts.

\begin{lemma}
\label{lem:aut_conv}
    Let $\alpha_{v;n}(O) := F_{\gamma_{v;n}}OF_{\gamma_{v;n}}$. $\alpha_{v;n}(O)$ converges strongly to $\alpha_v(O)$.
\end{lemma}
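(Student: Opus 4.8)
The plan is to reduce the statement to the dense subalgebra $\cstar{loc}$, where the sequence $\alpha_{v;n}(O)$ is in fact eventually \emph{constant}, and then bootstrap to all of $\cstar{}$ using that each $\alpha_{v;n}$ is an isometric $*$-automorphism.

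First I would record that $F_{\gamma_{v;n}} = \prod_{e\in\gamma_{v;n}}\sigma^z_e$ is a product of commuting self-adjoint unitaries, hence a self-adjoint unitary squaring to $\mathds{1}$; consequently $\alpha_{v;n}(\cdot) = F_{\gamma_{v;n}}(\cdot)F_{\gamma_{v;n}}$ is conjugation by a unitary, so it is a $*$-automorphism of $\cstar{}$ and $\|\alpha_{v;n}(O)\| = \|O\|$ for every $O$.

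Next, for $O \in \cstar{loc}$ with finite support $\Lambda$, I would use the explicit geometry of $\gamma_{v;n}$: it is the chain of consecutive vertical edges $e_0,\dots,e_{n-1}$ descending from $v$, and since $\Lambda$ is finite only finitely many of the $e_k$ lie in $\Lambda$. Choosing $K$ larger than all such indices, for $n\ge K$ one factors $F_{\gamma_{v;n}} = F_{\gamma_{v;K}}\,P_{K,n}$ with $P_{K,n}:=\prod_{k=K}^{n-1}\sigma^z_{e_k}$ supported away from $\Lambda$; since $P_{K,n}$ commutes with $O$ and $P_{K,n}^2=\mathds{1}$, a one-line computation gives $\alpha_{v;n}(O) = F_{\gamma_{v;K}} P_{K,n} O P_{K,n} F_{\gamma_{v;K}} = \alpha_{v;K}(O)$. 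So the sequence is eventually constant on $\cstar{loc}$ and trivially converges. To pass to a general $O\in\cstar{}$, I would approximate by $O'\in\cstar{loc}$ within $\epsilon$; isometry of the $\alpha_{v;n}$ together with eventual constancy of $(\alpha_{v;n}(O'))_n$ gives $\|\alpha_{v;n}(O)-\alpha_{v;m}(O)\| < 2\epsilon$ for all large $m,n$, so the sequence is norm-Cauchy and converges to a limit we call $\alpha_v(O)$. Norm convergence in particular yields strong convergence in any representation (e.g. $\pi_\omega$), which is the assertion.

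I do not expect a genuine obstacle here: the only substantive point is the "escape to infinity" observation that makes the sequence eventually constant on local operators — i.e. that the descending tail of $\gamma_{v;n}$ eventually avoids any fixed finite region — together with the bookkeeping that $F_{\gamma_{v;n}}$ squares to the identity so the conjugation collapses on the commuting tail. If one also wants the limit map $\alpha_v$ to be a $*$-automorphism, that is immediate: it is a norm limit of isometric $*$-homomorphisms, hence an isometric $*$-homomorphism, and the same factorisation argument shows $\alpha_v\circ\alpha_v=\mathrm{id}$, giving surjectivity.
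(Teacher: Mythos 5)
Your proposal is correct and follows essentially the same route as the paper: factor $F_{\gamma_{v;n'}}$ into the operator on the initial segment times the tail, observe that the tail commutes with a local $O$ and squares to the identity so the sequence is eventually constant on $\cstar{loc}$, then extend by density. The only difference is that you make explicit the isometry/$\epsilon$-argument needed to pass from the dense subalgebra to all of $\cstar{}$, a step the paper leaves implicit.
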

\begin{proof}
    Let $O \in \cstar{m}$ where $\cstar{m}$ is the restriction of $\cstar{}$ to an $m \times m \times m$ region in $\Gamma$. Now let $n'>n>m$. Then we have,
    \begin{align*}
        \alpha_{v;n'}(O) &= F_{\gamma_{v;n'}}OF_{\gamma_{v;n'}}\\
        &= F_{\gamma_{v;n}} F_{\gamma_{n;n'}}OF_{\gamma_{n;n'}}F_{\gamma_{v;n'}}\\
        &=F_{\gamma_{v;n}} OF_{\gamma_{v;n}}\\
        &= \alpha_{v,n}(O)
    \end{align*}
    So $\alpha_{v}(O)$ converges for all $n \ge m$ for any $O \in \cstar{m}$. Since $m$ was otherwise arbitrary, $\alpha_{v}(O)$ converges for any local operator $O \in \cstar{loc}$. $\cstar{loc}$ is dense in $\cstar{}$ so $\alpha_v(O)$ converges for all $O \in \cstar{}$.
\end{proof}

\begin{lemma}
    $\alpha_v$ is independent of the orientation of the path $\gamma_{v;n}$.
\end{lemma}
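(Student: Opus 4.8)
The plan is to observe that the operator entering the definition of $\alpha_{v;n}$, namely $F_{\gamma_{v;n}} = \prod_{e \in \gamma_{v;n}} \sigma^z_e$, depends only on the underlying \emph{set} of edges traversed by $\gamma_{v;n}$, and in no way on the order in which those edges are listed nor on which endpoint is designated the start of the path. The factors $\sigma^z_e$ for distinct $e$ mutually commute, and each $\sigma^z_e \in \mathcal{B}(\hilb_e)$ is attached to the edge $e$ itself, carrying no dependence on the chosen orientation of that edge; hence reversing the direction in which $\gamma_{v;n}$ is traversed, i.e.\ replacing it by the reversed path $\gamma_{v;n}^{-1}$ running from $v - n\hat{z}$ back to $v$, leaves $F_{\gamma_{v;n}}$ literally unchanged as an element of $\cstar{}$.

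First I would make this precise: writing $\gamma$ and $\gamma^{-1}$ for the two oriented paths with common edge set, one has $F_\gamma = F_{\gamma^{-1}}$, and therefore $\alpha_{v;n}(O) = F_\gamma\, O\, F_\gamma = F_{\gamma^{-1}}\, O\, F_{\gamma^{-1}}$ for every $O \in \cstar{}$. Thus the two orientations yield the identical sequence of inner automorphisms $\alpha_{v;n}$. Then I would invoke Lemma \ref{lem:aut_conv}, which gives $\alpha_{v;n} \to \alpha_v$ strongly: since the two sequences agree term by term, their limits coincide, so $\alpha_v$ is orientation-independent.

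There is essentially no obstacle here; the statement is a bookkeeping remark confirming that the gauge choice made in defining $\alpha_v$ is not over-determined. The only point deserving a sentence of care is the distinction between the orientation of an \emph{edge} (irrelevant to $\sigma^z_e$, and moreover fixed once and for all by the lattice convention that all edges point in the positive $x,y,z$ directions) and the orientation of the \emph{path} (the choice of start versus end vertex), which is the only notion the lemma could be referring to and which, as noted, does not enter $F_{\gamma_{v;n}}$ at all.
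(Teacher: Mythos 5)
Your proposal is correct and follows essentially the same route as the paper: the key observation in both is that $F_{-\gamma_{v;n}} = F_{\gamma_{v;n}}$ (the charge operator depends only on the underlying edge set), so the finite-stage automorphisms agree term by term and hence so do their limits. Your version simply spells out the justification for this identity a bit more explicitly than the paper does.
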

\begin{proof}
    We have the following property of the charge operators $F_{\gamma_{v;n}}^\dagger = F_{-\gamma_{v;n}}= F_{\gamma_{v;n}}$. Then we have for any $O \in \cstar{}$,

\begin{align*}
    \alpha_v ({O})&= \lim_{n\rightarrow \infty} F_{\gamma_{v;n}} {O} F_{\gamma_{v;n}} \qquad {O} \in \cstar{}\\
    &= \lim_{n\rightarrow \infty} F_{-\gamma_{v;n}} {O} F_{-\gamma_{v;n}}\\
    &= \alpha_{v}(O)
\end{align*}
\end{proof}

\begin{lemma}
\label{lem:aut_conv_2}
    Let $\alpha_{\dS_{\dg_n}}(O) := F_{\dS_{\dg_n}} O F_{\dS_{\dg_n}}$. $\alpha_{\dS_{\dg_n}}(O)$ strongly converges to $\alpha_\dg(O)$.
\end{lemma}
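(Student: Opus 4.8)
The plan is to mirror the proof of Lemma \ref{lem:aut_conv} line by line, replacing the straight string operators $F_{\gamma_{v;n}}$ by the finite surface operators $F_{\dS_{\dg_n}}$ and using that the $\dS_{\dg_n}$ form an increasing sequence of \emph{finite} sets of faces. First I would fix a local observable $O \in \cstar{m}$, supported on the edges of an $m\times m\times m$ box. Since a finite region of the lattice contains only finitely many faces, the set $\bigcup_n \dS_{\dg_n}$ meets the (finitely many) faces dual to the edges of that box in a finite set; because $\dS_{\dg_n}\subset\dS_{\dg_{n+1}}$ and these exhaust the union, there is $N=N(m)$ such that any such face lying in some $\dS_{\dg_{n'}}$ already lies in $\dS_{\dg_N}$. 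Hence for $n'>n\ge N$ the newly added faces $\dS_{\dg_{n'}}\setminus\dS_{\dg_n}$ are dual to edges disjoint from the support of $O$, so the corresponding operator $F_{\dS_{\dg_{n'}}\setminus\dS_{\dg_n}} := \prod_{e\perp(\dS_{\dg_{n'}}\setminus\dS_{\dg_n})}\sigma^x_e$ commutes with $O$; moreover, since the face sets are disjoint (nestedness), $F_{\dS_{\dg_{n'}}} = F_{\dS_{\dg_n}}\,F_{\dS_{\dg_{n'}}\setminus\dS_{\dg_n}}$, and each factor squares to $\mathds{1}$. Therefore
\begin{align*}
    \alpha_{\dS_{\dg_{n'}}}(O) &= F_{\dS_{\dg_n}}\,F_{\dS_{\dg_{n'}}\setminus\dS_{\dg_n}}\, O\, F_{\dS_{\dg_{n'}}\setminus\dS_{\dg_n}}\,F_{\dS_{\dg_n}}\\
    &= F_{\dS_{\dg_n}}\, O\, F_{\dS_{\dg_n}} = \alpha_{\dS_{\dg_n}}(O),
\end{align*}
so $(\alpha_{\dS_{\dg_n}}(O))_n$ is eventually constant, hence norm-convergent; I would \emph{define} $\alpha_\dg(O)$ to be this limit.

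To pass to arbitrary $O\in\cstar{}$, I would use that every $\alpha_{\dS_{\dg_n}}$ is a $*$-automorphism with $\|\alpha_{\dS_{\dg_n}}\| = 1$ (the $F_{\dS}$ are unitary). Given $O\in\cstar{}$ and $\varepsilon>0$, choose $O_\varepsilon\in\cstar{loc}$ with $\|O-O_\varepsilon\|<\varepsilon/3$; then for $n,n'$ large enough that $\alpha_{\dS_{\dg_n}}(O_\varepsilon)=\alpha_{\dS_{\dg_{n'}}}(O_\varepsilon)$ we get $\|\alpha_{\dS_{\dg_n}}(O)-\alpha_{\dS_{\dg_{n'}}}(O)\|<\varepsilon$, so the sequence is Cauchy and converges in norm (a fortiori strongly). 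Since $\cstar{loc}$ is dense in $\cstar{}$ this extends $\alpha_\dg$ to all of $\cstar{}$, and the algebraic identities $\alpha_{\dS_{\dg_n}}(OO')=\alpha_{\dS_{\dg_n}}(O)\alpha_{\dS_{\dg_n}}(O')$, $\alpha_{\dS_{\dg_n}}(O^\dagger)=\alpha_{\dS_{\dg_n}}(O)^\dagger$ survive the limit, so $\alpha_\dg$ is the $*$-automorphism claimed.

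The only non-routine ingredient is the claim that the increments $\dS_{\dg_{n'}}\setminus\dS_{\dg_n}$ eventually avoid any fixed finite region; this is exactly the dual analogue of the condition $n'>n>m$ used in Lemma \ref{lem:aut_conv}, and it is where the hypotheses on the gauge choice are needed: the $\dS_{\dg_n}$ are nested, each is finite, and $\dg_n\to\dg$ is confined to a 3d wedge so the surfaces can be taken to grow outward. Granting that, the rest is bookkeeping with self-inverse unitaries supported on disjoint edge sets, verbatim as in Lemma \ref{lem:aut_conv}.
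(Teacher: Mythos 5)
Your proposal is correct and follows essentially the same route as the paper's own proof: factor $F_{\dS_{\dg_{n'}}} = F_{\dS_{\dg_n}}F_{\dS_{\dg_{n'}}\setminus\dS_{\dg_n}}$, observe that for $n'>n>m$ the increment is supported off the $m\times m\times m$ box so it commutes with $O$ and cancels by $F^2=\mathds{1}$, conclude the sequence is eventually constant on $\cstar{loc}$, and extend by density. Your explicit $\varepsilon/3$ Cauchy argument using $\|\alpha_{\dS_{\dg_n}}\|=1$ is a slightly more careful rendering of the paper's final density step, but it is the same argument.
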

\begin{proof}
    Let $O \in \cstar{m}$ where $\cstar{m}$ is the restriction of $\cstar{}$ to an $m \times m \times m$ region in $\Gamma$. Now let $n'>n>m$. Then we have,
    \begin{align*}
        \alpha_{\dS_{\dg_{n'}}}(O) &= F_{\dS_{\dg_{n'}}}OF_{\dS_{\dg_{n'}}}\\
        &= F_{\dS_{\dg_n}} F_{\dS_{(\dg_{n'} - \dg_n)}}OF_{\dS_{\dg_{n'}}}F_{\dS_{(\dg_{n'} - \dg_{n})}}\\
        &=F_{\dS_{\dg_n}} OF_{\dS_{\dg_n}}\\
        &= \alpha_{\dS_{\dg_n}}(O)
    \end{align*}
    So $\alpha_{\dS_{\dg}}(O)$ converges for all $n \ge m$ for any $O \in \cstar{m}$. Since $m$ was otherwise arbitrary, $\alpha_{\dS_{\dg}}(O)$ converges for any local operator $O \in \cstar{loc}$. $\cstar{loc}$ is dense in $\cstar{}$ so $\alpha_{\dS_{\dg}}(O)$ converges for all $O \in \cstar{}$.
\end{proof}

\begin{lemma}
    $\alpha_\dg = \alpha_{-\dg}$ where $-\dg$ is the path with a reverse orientation.
    \label{lem:orient_independent}
\end{lemma}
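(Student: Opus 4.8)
The plan is to follow the same short route just used for $\alpha_v$ (the lemma on orientation-independence of the charge automorphism), transported to the flux-string setting. The one structural fact that makes everything work is that the surface operator $F_{\dS}$ carries no orientation data: I would first record that $F_{\dS}=\prod_{e\perp\dS}\sigma^x_e$ is a product of commuting self-adjoint unitaries supported on distinct edges, hence is itself a self-adjoint unitary squaring to $\mathds{1}$, and — the point of interest here — that it depends only on the \emph{set} of dual faces making up $\dS$ (equivalently the set of edges $e$ with $e\perp\dS$). In particular, writing $-\dS$ for the dual $2$-chain $\dS$ with every face reversed, one has $F_{-\dS}=F_{\dS}$.

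Next I would make precise what $-\dg$ and its defining data are. Given the infinite path $\dg:\mathbb{Z}\to\mathcal{E}(\dG)$ with the nested finite sections $\dg_n$ and bounding surfaces $\dS_{\dg_n}$ (so $\dg_n\subset\partial\dS_{\dg_n}$, $\dS_{\dg_n}\subset\dS_{\dg_{n+1}}$, $\dg=\lim_n\dg_n$) used to define $\alpha_\dg$, the reversed path is $(-\dg)(t):=-\dg(-t)$; a quick check of the incidence relations shows $-\dg$ is again an infinite path in the sense of Definition \ref{def:infpath}, with approximating sections $-\dg_n$. Since $\partial(-\dS)=-\partial\dS$ for a $2$-chain, the family $\{-\dS_{\dg_n}\}_n$ is an admissible nested family of bounding surfaces for $-\dg$, i.e. a legitimate choice of gauge in the definition of $\alpha_{-\dg}$.

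Then I would simply compute, using this gauge choice and Lemma \ref{lem:aut_conv_2} for existence of the limits,
\[
\alpha_{-\dg}(O)=\lim_{n\to\infty}F_{-\dS_{\dg_n}}\,O\,F_{-\dS_{\dg_n}}=\lim_{n\to\infty}F_{\dS_{\dg_n}}\,O\,F_{\dS_{\dg_n}}=\alpha_\dg(O)
\]
for every $O\in\cstar{}$, where the middle equality is exactly $F_{-\dS}=F_{\dS}$ applied to each $\dS_{\dg_n}$. This yields $\alpha_{-\dg}=\alpha_\dg$.

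There is no real obstacle here; the only thing requiring care is the bookkeeping in the second step — pinning down "reverse orientation" so that edge orientations and the $\mathbb{Z}$-parametrization are both reflected, and verifying that the reversed finite sections are bounded by the negatives of the original surfaces — so that the display above compares $\alpha_{-\dg}$ to a genuine instance of its own defining limit rather than to a differently-gauged approximation. If one prefers not to rely on any gauge flexibility at all, the same computation shows that one fixed family of operators computes both $\alpha_\dg$ and $\alpha_{-\dg}$, so even the fixed-gauge definitions coincide.
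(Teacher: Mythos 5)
Your proposal is correct and follows essentially the same route as the paper's proof: both rest on the observation that the surface operator $F_{\dS}$ depends only on the underlying set of dual faces (so $F_{\dS_{-\dg_n}}=F_{\dS_{\dg_n}}$), and then identify the defining limits of $\alpha_\dg$ and $\alpha_{-\dg}$ term by term. Your version merely adds some careful bookkeeping about reparametrizing $-\dg$ and the reversed bounding surfaces, which the paper leaves implicit.
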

\begin{proof}
    We have $F_{\dS_\dg}^\dagger = F_{\dS_{-\dg}} = F_{\dS_\dg}$ where $\dS_{-\dg}$ is the surface with reverse boundary orientation. So we have for any $O \in \cstar{}$,

\begin{align*}
    \alpha_\dg ({O})&= \lim_{n\rightarrow \infty} F_{\dS_{\dg_n}} {O} F_{\dS_{\dg_n}} \qquad {O} \in \cstar{}\\
    &= \lim_{n\rightarrow \infty} F_{\dS_{-\dg_n}} {O} F_{\dS_{-\dg_n}}\\
    &= \alpha_{-\dg}(O)
\end{align*}
\end{proof}

\begin{theorem}
    The set $(\partial \dS \bigcup_{n=1}^N \dg_n)\setminus \bigcup_{n=1}^N(\partial \dS \cap \dg_n)$ is a set of paths.
    \label{thm:set_of_paths}
\end{theorem}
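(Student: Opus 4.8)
The plan is to translate the set-theoretic statement into one about mod-$2$ $1$-chains on the dual cell complex $\dG$. Identify every set of edges with the corresponding chain, so that a union of pairwise edge-disjoint sets is their sum. Because the $\dg_n$ are non-overlapping, $\sum_{n=1}^N \dg_n$ is literally $\bigcup_{n=1}^N \dg_n$, and the only edges appearing in two of the summands of $\partial\dS+\sum_{n=1}^N\dg_n$ are those lying both in $\partial\dS$ and in a (unique) $\dg_n$; these cancel modulo $2$. Hence, as a set of edges,
\[
C \;:=\; \partial\dS+\sum_{n=1}^N \dg_n \;=\; \Bigl(\partial\dS\cup\bigcup_{n=1}^N \dg_n\Bigr)\setminus\bigcup_{n=1}^N\bigl(\partial\dS\cap\dg_n\bigr),
\]
which is exactly the set in the statement, so it suffices to show this $1$-chain is a disjoint union of paths.

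First I would compute $\partial C$. Since $\dS$ is a $2$-chain and $\partial^2=0$ (equivalently, by the definition of a surface, $\partial\dS$ is a closed path), we have $\partial(\partial\dS)=0$, so $\partial C=\sum_{n=1}^N\partial\dg_n$, where $\partial\dg_n=0$ when $\dg_n$ is closed or infinite and $\partial\dg_n=\partial_0\dg_n+\partial_1\dg_n$ when $\dg_n$ is a finite open path. Thus $\partial C$ is a finite set of vertices, each with multiplicity one after cancellation: precisely the endpoints of the finite open $\dg_n$ that are not matched in pairs. Reformulated at vertices, every vertex of $\dG$ is incident to an \emph{even} number of edges of $C$, except those in $\partial C$, which are incident to an \emph{odd} number. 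A locally finite $1$-chain with this property is always a disjoint (edge-disjoint) union of finite open paths, each joining two vertices of $\partial C$, finitely many finite closed paths, and infinite paths — which is what ``a set of paths'' means here (and if one insists on non-self-intersecting pieces, any residual vertex of degree $\ge 4$ is resolved by the paper's convention that a self-intersecting path decomposes into an open path and finitely many closed ones).

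The point that needs care — and which I expect to be the main obstacle if one wants the pieces of $C$ to be genuinely non-self-intersecting without invoking that decomposition — is the degree count at the finitely many vertices incident to an edge of $\partial\dS$ (elsewhere $C$ agrees with $\bigcup_n\dg_n$, whose vertices already have degree $\le 2$). There the contribution of $\partial\dS$ to the degree is $0$ or $2$, since $\partial\dS$ is non-self-intersecting; and because $\dS$ is a finite, simply connected, convex surface chosen for the surgery, its boundary meets each $\dg_n$ along sub-arcs rather than transversally. Consequently at a vertex interior to a shared arc two pairs of coincident edges cancel and the degree in $C$ is $0$, while at a vertex where $\dg_n$ leaves $\partial\dS$ exactly one shared edge cancels and the degree in $C$ is again $2$. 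This is exactly where the hypotheses on $\dS$ and the non-overlap of the $\dg_n$ (after a harmless finite deformation, as in the proof of Theorem \ref{thm:2monoGS}) are used.
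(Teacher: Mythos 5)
Your argument is correct, but it takes a genuinely different route from the paper. The paper's proof is an explicit reassembly: after using $\alpha_{\dg}=\alpha_{-\dg}$ to orient each $\dg_n$ against $\partial\dS$ on the overlap $\dr_n$, it cuts $\dg_n$ into $\de_n,\dr_n,\ds_n$, partitions $\partial\dS$ into the arcs $\{\dt_n,\dr_n\}$, and exhibits the symmetric difference as $\sum_n(\de_n+\dt_n+\ds_{n+1\bmod N})$, checking that each summand carries a consistent orientation. This is done under the simplifying assumption that $\partial\dS$ meets each $\dg_n$ in exactly one arc, but it buys something your argument does not: it produces exactly $N$ new paths and records explicitly which tail $\ds_m$ gets glued to which head $\de_n$ --- precisely the bookkeeping of $D_\pm(\dg_i')$ that Theorems \ref{thm:2monoGS} and \ref{thm:GSS} rely on. Your mod-$2$ chain argument (identify the set with $\partial\dS+\sum_n\dg_n$, note $\partial(\partial\dS)=0$, and decompose an even-degree locally finite $1$-chain that agrees with $\bigcup_n\dg_n$ outside a finite region) is cleaner and more general, since arbitrary numbers of overlaps per string come for free from the parity count; but it is non-constructive --- it neither counts the resulting paths nor identifies their infinity directions --- and it quietly drops orientations, which is harmless here only because any connected edge set of degree $\le 2$ at every vertex can be consistently reoriented and $\alpha_\dg$ is orientation-independent. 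Your deferral of residual degree-$4$ vertices (where a $\dg_n$ meets $\partial\dS$ or another $\dg_m$ only in a vertex) to the self-intersection decomposition convention is acceptable and mirrors the paper's own ``these generalisations are straightforward.'' If you want your version to substitute for the paper's, you should add the one extra observation the downstream arguments need: that the decomposition can be chosen so that each resulting infinite path inherits one tail from $\partial\dS$-adjacent pieces of possibly different $\dg_n$, which is where the permutation of infinity directions comes from.
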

\begin{proof}
    For simplicity we will only consider the case when $\partial \dS$ overlaps with each $\dg_n$ once and only once, and that the paths $\dg_n$ do not overlap with each other. These generalisations are straightforward. From lemma \ref{lem:orient_independent} we know that the orientation of any path $\dg$ is irrelevant for the physical objects $\alpha_{\dg}$. We may then freely change the orientation of the paths $\dg_n$. We first change the orientations of $\dg_n$ such that $\partial \dS$ has the reverse orientation to $\dg_n$ at its overlap with $\dg_n$ for all $n$.\\

    \begin{figure}
        \centering
        \includegraphics[  width = 0.5\textwidth]{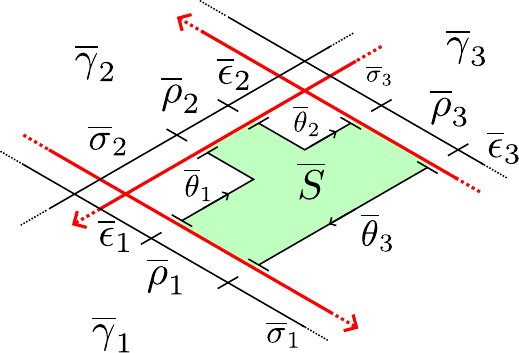}
        \caption{An example of division of $\dg_n, \dS$ into paths.}
        \label{fig:surgery}
    \end{figure}

    Call the set $\partial \dS \cap \dg_n$ as $\dr_n$. We now divide the paths $\dg_n$ into 3 paths each, starting from their beginning: $\de_n, \dr_n, \ds_n$. Here $\de_n$ consists of edges in $\dg_n$ before it overlaps with $\partial \dS$, $\dr$ is the set of all edges of $\dg_n$ that overlaps with $ \partial \dS$, and $\ds_n$ consists of all the edges after the overlap. Using the sets $\dr_n$ we can partition $\partial \dS$ into sets $\{\dt_i, \dr_i\}_{i=1}^{N}$ where $\dt_n$ is the set of edges of $\partial \dS$ between $\dr_n, \dr_{n+1}$. We illustrate this in figure \ref{fig:surgery} for the case of 3 paths. \\

    Let us now discard the orientation of the paths, as we are considering the paths purely as sets. Then we have
    \begin{align*}
        \partial \dS \bigcup_{n=1}^N \dg_n  &= \sum_n \de_n + \dr_n + \ds_n + \dt_n\\
\bigcup_{n=1}^N(\partial \dS \cap \dg_n) &= \sum_n \dr_n\\
\implies (\partial \dS \bigcup_{n=1}^N \dg_n)\setminus \bigcup_{n=1}^N(\partial \dS \cap \dg_n) &= \sum_n \de_n + \ds_n + \dt_n\\
&=\sum_n \de_n + \dt_n + \ds_{n+1\text{ mod }N}\\
&= \sum_n \dg_n'
    \end{align*}
    Where $\dg_n' = \de_n + \dt_n + \ds_{n+1\text{ mod }N}$.\\

    To show that $\dg_n'$ are paths, we restore the orientations and notice that $\dt_n$ is a path from $\partial \de_n$ to $\partial \ds_{n+1\text{ mod }N}$, where $\partial \de_n, \partial \ds_n$ are the finite end and start vertices of paths $\de_n, \ds_n$ respectively. So $\dg_n'$ has a consistent orientation. This concludes our proof.
\end{proof}

\section{Acknowledgements}
We would like to thank Bruno Nachtergaele for the initial idea for the paper and insightful discussions. We would also like to thank Alex Bols for a detailed sketch of the proof of purity of the frustration free ground state of the 3d TC. We were supported by the NSF grant DMS-2108390.
\printbibliography

\end{document}